\documentclass[twocolumn,english,aps,reprint, superscriptaddress,showpacs,showkeys]{revtex4-2}
\usepackage{amsmath,amssymb,bbm,mathrsfs,bm,braket,color,graphicx,comment,todonotes, tikz, booktabs, quantikz}
\usepackage[colorlinks,citecolor=blue,urlcolor=blue,linkcolor = blue]{hyperref}
\usepackage{soul}
\usepackage[mathscr]{euscript}
\usepackage{hyperref}
\usepackage{tikz}
\usepackage{xcolor}
\usepackage{multirow}
\usetikzlibrary{arrows.meta,positioning}

\usetikzlibrary{quantikz2}
\def\thickhline{\noalign{\hrule height.8pt}}
\newcommandx*{\labelbox}[6]{\mbox{\footnotesize$\begin{array}{c}
    \flowlabel{#1}{#2}{#3}\\\thickhline
    \flowlabel{#4}{#5}{#6}
\end{array}$}}
\newcommandx*{\mathbox}[2]{\mbox{\footnotesize$\begin{array}{c}
    #1\\\thickhline
    #2
\end{array}$}}

\usepackage{amsthm}

\newtheorem{theorem}{Theorem}
\newtheorem{lemma}[theorem]{Lemma}
\newtheorem{corollary}[theorem]{Corollary}
\newtheorem{definition}[theorem]{Definition}

\begin{document}

\title{Fermion lattices can be simulated by same-size qubit lattices\\ with $\mathcal{O}(1)$ interaction overhead}

\author{Gregor Aigner}
\thanks{These authors contributed equally to this work. Corresponding author: gregor.aigner@uibk.ac.at}
\affiliation{Institute for Theoretical Physics, University of Innsbruck, A-6020 Innsbruck, Austria}
\affiliation{Parity Quantum Computing GmbH, A-6020 Innsbruck, Austria}

\author{Berend Klaver}
\thanks{These authors contributed equally to this work. Corresponding author: gregor.aigner@uibk.ac.at}
\affiliation{Institute for Theoretical Physics, University of Innsbruck, A-6020 Innsbruck, Austria}
\affiliation{Parity Quantum Computing GmbH, A-6020 Innsbruck, Austria}

\author{Martin Lanthaler}
\affiliation{Institute for Theoretical Physics, University of Innsbruck, A-6020 Innsbruck, Austria}
\affiliation{Parity Quantum Computing GmbH, A-6020 Innsbruck, Austria}

\author{Wolfgang Lechner}
\affiliation{Institute for Theoretical Physics, University of Innsbruck, A-6020 Innsbruck, Austria}
\affiliation{Parity Quantum Computing GmbH, A-6020 Innsbruck, Austria}

\begin{abstract}
   Local interactions among electrons underlie many complex properties of correlated materials. 
   While the Jordan-Wigner transformation can preserve this locality along one spatial dimension, interactions along the remaining dimensions typically incur substantial overhead.
   We show how to simulate all geometrically local interactions on an $N$-site two-dimensional fermion lattice with no asymptotic overhead in the number of interactions and no space overhead. 
   The primary overhead of our method is circuit depth, which on a qubit lattice matches that of fermionic swap networks, scaling as $\mathcal{O}(\sqrt{N})$, but reduces to $\mathcal{O}(\log N)$ on reconfigurable qubit arrays and to $\mathcal{O}(1)$ in lattice-surgery-based surface-code architectures. 
   This is enabled by dynamically reorienting the Jordan-Wigner transformation to switch the lattice dimension along which locality is preserved.
   Furthermore, we study fermion routing, as required for the simulation of non-local interactions. When using qubit lattices, we reach resource scaling that asymptotically matches that of qubit routing, whilst on fully connected qubit devices, a depth scaling arbitrarily close to $\mathcal{O}(\log N)$ is reached. This allows the fermionic fast Fourier transform to be implemented on qubit lattices with asymptotically optimal resource scaling under these locality constraints.
   Notably, all of our constructions naturally extend to $d$-dimensional lattices. 
   Beyond scaling improvements, we show explicit examples of our method, including Fermi-Hubbard-model simulations of the square-, Lieb- and kagome lattice and the fermionic fast Fourier transform.
\end{abstract}

\pacs{}
\maketitle

\begin{figure*}
    \centering
    \includegraphics[width = \linewidth]{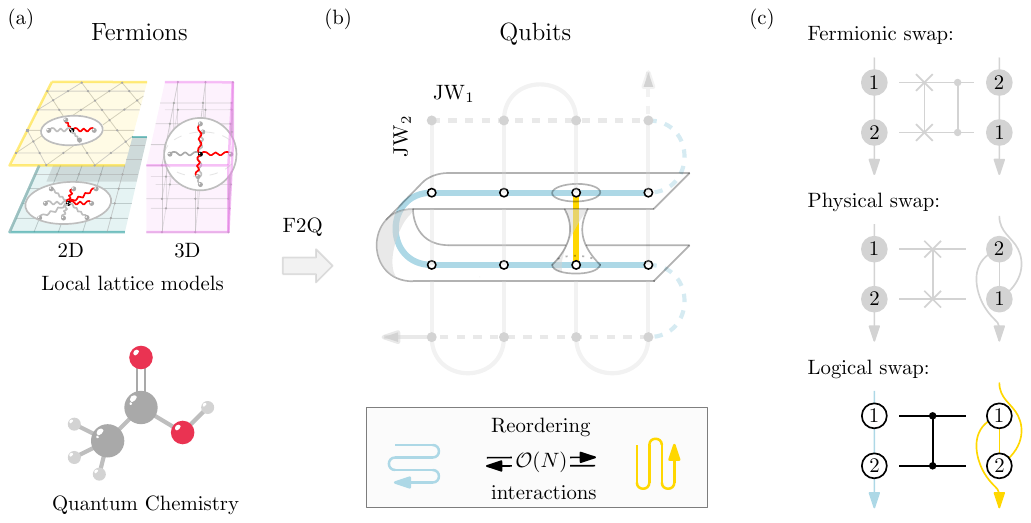}
    \caption{
    Overview of the main concepts. 
    (a) Simulating fermionic systems, such as lattice models or quantum-chemical systems, on qubit-based quantum hardware requires F2Q mappings.
    (b) The prototypical JW mapping imposes an ordering, represented here by arrows. 
    To overcome the corresponding one-dimensional connectivity constraint and better exploit typical two-dimensional hardware connectivity, we employ dynamical reordering, i.e., switching between different JW orderings , e.g., $\mathrm{JW}_1$ and $\mathrm{JW}_2$.
    (c) Reorder primitives: 
    A fermionic swap gate 
    swaps both their JW order and their physical positions of two fermionic modes. 
    In contrast, a physical qubit swap changes only the physical positions of the modes, while preserving their JW ordering. 
    A CZ gate, on the other hand, changes only the JW ordering, 
    leaving the physical positions unchanged. 
    Since this ``logical'' swap resembles the methods developed in this work, it is highlighted in color.}
    \label{fig:1}
\end{figure*}

In strongly correlated electronic systems, complex macroscopic phenomena such as pseudogaps, strange metals, and high-temperature superconductivity often emerge from simple local interactions on two-dimensional lattices~\cite{arovasHubbardModel2022a}.
Characterizing the interplay of these interactions and understanding how they give rise to such complex macroscopic phenomena is a long-standing objective in the field~\cite{qinHubbardModelComputational2022}, yet their simulation remains notoriously difficult for classical methods.
Quantum-computer-based simulations offer a natural route towards building such understanding~\cite{feynmanSimulatingPhysicsComputers1982}. 
However, as contemporary quantum hardware is primarily qubit-based, they typically require fermion-to-qubit (F2Q) mappings~\cite{bravyiFermionicQuantumComputation2002, verstraeteMappingLocalHamiltonians2005a, derbyCompactFermionQubit2021}.
The prototypical example is the Jordan-Wigner (JW) encoding, which represents fermionic operators as qubit operators, at the expense of imposing a rigid one-dimensional connectivity constraint. 
Consequently, while the JW encoding is well-suited for one-dimensional fermionic systems, simulating two- or higher-dimensional lattices results in non-local interactions that incur significant overhead.

To mitigate this overhead, some methods dynamically adjust the encoding to simplify target interactions. 
Commonly used methods employ fermionic swap networks (FSN), as they require only local interactions by reordering fermionic modes through local swap operations~\cite{kivlichanQuantumSimulationElectronic2018}.
Yet, this locality restricts the speed of encoding adjustments, causing FSNs to induce an $\mathcal{O}(\sqrt{N})$ overhead in depth and gate count for simulations of two-dimensional lattices.
Recent results have shown that arbitrary modifications of the JW encoding can be implemented in polylogarithmic depth.
However, these methods rely either on reconfigurable qubit arrays with $\mathcal{O}(N)$ ancillas~\cite{maskaraFastSimulationFermions2025}, or on fully connected qubit systems~\cite{constantinidesLowdepthFermionRouting2025}.
This stands in stark contrast to the simple local interaction structure of typical fermion lattices.

In this work, we introduce dynamic fermionic encoding primitives that enable efficient implementations on qubit lattices, and we demonstrate their usefulness in several applications. 
We show how to mitigate the one-dimensional connectivity restriction, inherent to the JW encoding, by switching to a complementary second encoding, as illustrated in Fig.~\ref{fig:1}. 
Effectively, this enables fermionic modes that would be distant under a single JW encoding to interact using a second physical dimension of a qubit lattice.
We illustrate in Fig.~\ref{fig:2} and prove in the appendix, using the parity flow formalism \cite{klaverParityFlowFormalism2025}, that transitions between such encodings can be performed efficiently.

As an important use-case, we utilize efficient switching between two complementary JW encodings, to simulate all fermionic interactions of a two-dimensional lattice on a qubit lattice with only $\mathcal{O}(1)$ interaction overhead.
The corresponding depth scaling follows the light-cone limit of the qubit lattice, $\mathcal{O}(\sqrt{N})$, and thus matches the depth of FSNs. 
Furthermore, we demonstrate that this simulation can be performed via lattice surgery based surface codes in $O(1)$ depth.
On reconfigurable arrays of qubits, we reach $O(\log N)$ depth scaling, and therefore, match the respective light-cone limits of all these settings.
Although we focus on two-dimensional lattices in the main text, we extend to arbitrary dimensions in Appendix~\ref{app:dD}. 
There, we show that both the constant qubit interaction overhead and the light-cone scaling in depth $\mathcal{O}(N^{1/d})$ generalize to $d$-dimensional fermion lattices simulated on $d$-dimensional qubit lattices.

Beyond simulating fermion lattices, our work lowers the resource cost for routing of fermions. 
This enables the fermionic fast Fourier transform (\textsf{FFFT}), an important primitive for state preparation~\cite{verstraeteQuantumCircuitsStrongly2009, maskaraFastSimulationFermions2025} and material simulation algorithms~\cite{babbushLowDepthQuantumSimulation2018}, to be implemented on qubit lattices with the same asymptotic resource cost as on fermion lattices.
Furthermore, we extend our fermion routing primitives for fully-connected qubit systems and lower the associated asymptotic scaling of resources compared to the state of the art.

In the following, we briefly recap the JW transformation and dynamical encodings schemes in Sec.~\ref{sec:background}. 
We then introduce our methods in Sec.~\ref{sec:2D_boustrophedon} and showcase a variety of applications in Sec.~\ref{sec:applications}.
These applications include explicit Trotter circuits, implemented on qubit lattices, for simulating Fermi-Hubbard models on square lattices with nearest-neighbor (NN) and next-nearest-neighbor (NNN) interactions, as well as on Lieb and kagome lattices (Sec.~\ref{sec:fermihubbard}).
Section~\ref{sec:ffft} discusses how our methods enable more efficient implementations of the \textsf{FFFT} on qubit architectures with local connectivity.
Finally, in Sec.~\ref{sec:molecularsim}, we examine how our fermion-routing techniques can be utilized in electronic structure simulations of molecular systems.

\section{Dynamical F2Q maps}
\label{sec:background}

\begin{figure*}
    \centering
    \includegraphics[width = \linewidth]{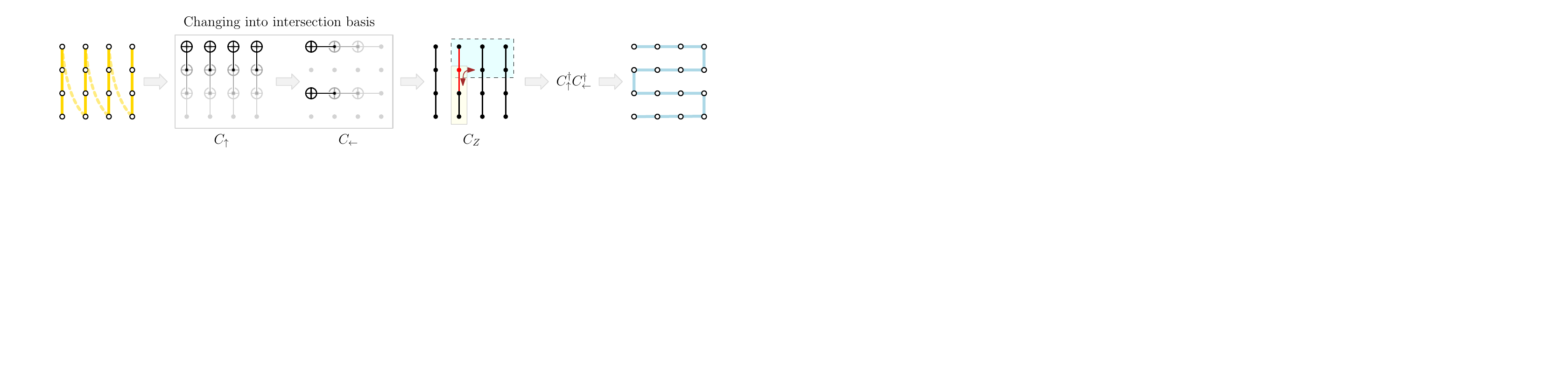}
    \caption{
    Example of JW reordering in two dimensions from a $Z$-pattern (left) to a $S$-pattern (right).
    The first step consists of a basis change accomplished by CNOT ladders along each column, denoted by $C_{\uparrow}$, followed by CNOT ladders along every second row, denoted by $C_{\leftarrow}$. 
    The shading indicates the gate sequence, progressing from grey to black. 
    At a high level, this basis change accumulates information first along columns and then along every second row. 
    Although the row-wise accumulation partially mixes with the column-wise accumulation, the row information remains encoded in the information-difference between qubits separated by two rows.
    At the intersections where the propagated information crosses, controlled-Z operations allow column-associated information to be exchanged with row-associated information. This enables the local switching of row- and column-originating information at each lattice site, facilitating transitions between column-first and row-first JW paths.
    After undoing the initial basis transformation, and an application of single-qubit Z gates to all qubits in every second row starting from the second row (not depicted),
    the canonical ordering follows a row-wise $S$-pattern. 
    }
    \label{fig:2}
\end{figure*}

The JW encoding is the canonical example of a F2Q mapping. 
It is highly flexible, as a specific encoding is defined by assigning an unique rank $j \in \{0, \ldots N-1\}$ to each qubit, allowing for $N!$ possible canonical orderings.
Traversing the qubits according to their rank defines a .
In the JW encoding, the translation between fermionic operations and qubit operations is given by

\begin{equation*}
\begin{aligned}
c_j^\dag & \equiv  \frac{1}{2} (X_j-iY_j) Z_0 \cdots Z_{j-1}\\  c_j & \equiv  \frac{1}{2} (X_j+iY_j) Z_0 \cdots Z_{j-1}.    
\end{aligned}
\end{equation*}
For instance, under JW, the hopping terms

\begin{equation*}
c_i^\dag c_j + c_j^\dag c_i = \frac{1}{2}(X_iX_j+Y_iY_j)Z_{i+1}\cdots Z_{j-1}
\end{equation*}
map to $k$-local qubit operators with
\begin{equation*}
    k=|i-j|+1,
\end{equation*}
spanning all qubits between ranks $i$ and $j$ along the canonical ordering. Thus, a natural picture of a particular JW ordering is as a one-dimensional path that visits each qubit exactly once, with any mapped hopping between modes i and j involving all qubits lying between them along the path.

In a two-dimensional lattice, for instance, two fermionic modes may be geometrically close yet highly separated along the JW path. 
This discrepancy results in long strings of $Z$ operators that significantly increase simulation complexity.
To mitigate this, dynamic encoding primitives allow for adjustments of the canonical ordering so that target interactions become local. 
In the simplest case, these adjustments are constructed from fermionic swaps (FSWAPs), which exchange the canonical ranks and positions (or state) of two qubits
\begin{equation*}
c_i \leftrightarrow c_{i+1}, \qquad
c_i^\dagger \leftrightarrow c_{i+1}^\dagger.
\end{equation*}
By composing these gates into fermionic swap networks (FSNs), one can systematically reconfigure the JW encoding so that each desired fermionic interaction is brought onto neighboring qubits along the JW path, replacing long JW strings by a sequence of fermionic swaps.

For a fully connected problem with two-body interactions, FSNs can be constructed with optimal parallelizability, leading to asymptotically optimal scaling by ensuring every pair of fermionic modes becomes adjacent exactly once \cite{kivlichanQuantumSimulationElectronic2018}.
However, for geometrically local fermionic systems, FSNs still incur substantial overhead. For example, simulating two-dimensional fermion lattices using standard JW-based swap-network constructions requires $\mathcal{O}(N^{\frac{3}{2}})$ local interactions and $\mathcal{O}(\sqrt{N})$ depth, even on two-dimensional qubit lattices.
This represents an $\mathcal{O}(\sqrt{N})$ overhead relative to fermionic quantum processors, which, lacking the need for an F2Q mapping, can achieve the same simulation with $\mathcal{O}(N)$ local interactions at $\mathcal{O}(1)$ depth. 
Thus, the overhead scaling of FSNs reflects the general limitation imposed by the JW mapping, which restricts the system to an effectively one-dimensional connectivity and thereby prevents full exploitation of the underlying qubit-lattice connectivity.

\begin{figure*}
    \centering
    \includegraphics[width = .95\linewidth]{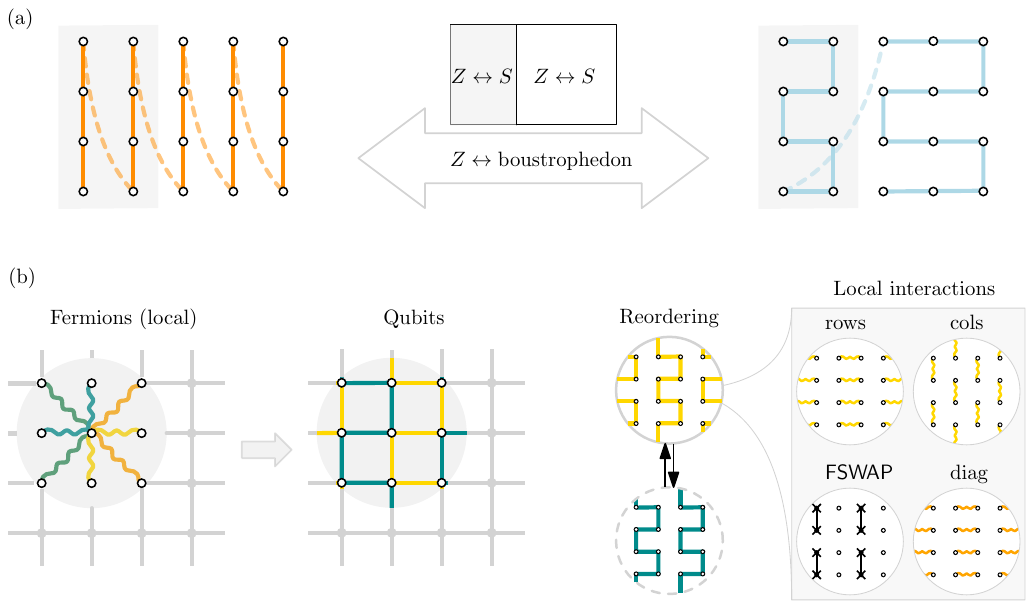}
    \caption{
    Switching between two-dimensional boustrophedon encodings.
    (a)
    A global $Z$-pattern is transferred into a boustrophedon pattern by partitioning the columns of the grid. 
    Applying the $Z$-to-$S$ transformation circuit depicted in Fig.~\ref{fig:2} to these parts individually yields a series of interconnected $S$-patterns, a configuration we refer to as a boustrophedon pattern due to its alternating serpentine structure. 
    An example is shown where a partition is introduced between columns two and three, counting from the left.
    (b)
    As detailed in Appendix~\ref{app:locality_conservation}, two such complementary patterns are sufficient to cover all nearest-neighbor fermionic interactions on the lattice.
    In general this allows to map geometrically local fermionic interactions to local qubit gates.
    }
    \label{fig:3}
\end{figure*}

\section{Complementary encodings}
\label{sec:2D_boustrophedon}
To overcome the one-dimensional connectivity constraint imposed by the JW mapping, we introduce a complementary second JW encoding that is designed to bridge the direction orthogonal to the first JW path [see Fig.~\ref{fig:1}(b)]. 
For instance, the second encoding can transform a column-first ordering into a row-first ordering, providing an alternative dimensional hierarchy.
Alternating between these two encodings allows one to exploit the connectivity along the second spatial dimension, thereby avoiding the large overhead that would arise working within a single encoding.

One particularly relevant example is depicted in Fig.~\ref{fig:2}, where a $Z$-type encoding is complemented by an $S$-type encoding. 
Because the $Z$ encoding ensures that qubits are column-wise adjacent with respect to the JW ordering, it allows for the implementation of nearest-neighbor (NN) column interactions using simple two-body operators. 
Similarly, the $S$-encoding facilitates the implementation of row-wise interactions. 
This $Z$-to-$S$ transformation provides the most straightforward example of reconfiguring the dimensional hierarchy of the JW encoding, the details of which are provided in Appendix~\ref{app:2d_circuits}.
Taken together, these dual encodings realize an effective two-dimensional connectivity, thereby preserving the locality of two-dimensional fermionic interactions and ensuring that all lattice-adjacent fermionic modes can stay adjacent in the qubit encoding.

\subsection*{Two-dimensional lattices}

While there exist multiple sets of complementary encodings that can compensate for one another’s connectivity restrictions, they do not all incur the same resource costs for a given task. 
Since switching between encodings is generally expensive, the goal is to identify a minimal set of complementary mappings that collectively cover all target interactions and are easy to transform into one another. 
To this end we introduce the family of \textit{boustrophedon} patterns~\footnote{The term \textit{boustrophedon} originates from Greek and literally means ``ox-turning'', referring to the alternating left-right pattern of an ox plowing a field. In computational geometry, boustrophedon decomposition is a standard method for partitioning and traversing planar space in a back-and-forth sweep pattern.} 
of the JW path, which allow for systematic coverage of lattice interactions with minimal overhead.
More precisely, a boustrophedon encoding is defined by a colum-wise partition of the qubit lattice.
Within each subset, the JW path follows a snake-like traversal, such that the concatenation of these local paths covers the entire qubit lattice. 
This construction allows for flexible reordering strategies; several examples of such decompositions are illustrated in Fig.~\ref{fig:3}.

The transformation between the $Z$-type encoding to an arbitrary boustrophedon pattern is depicted in Fig.~\ref{fig:3}(a) and is executed by applying the $Z$-to-$S$ transformation to each subset of qubits. 
This approach also enables transformations between any two boustrophedon patterns, as one can first revert to the $Z$-pattern and subsequently transition to the second boustrophedon.
Consequently, after accounting for straightforward CNOT cancellations, any boustrophedon-to-boustrophedon transformation can be realized with a two-qubit gate count bounded by
\begin{equation*}
    C < 6N+\mathcal{O}(\sqrt{N})
\end{equation*}
and depth bounded by
\begin{equation*}
    T < 6\sqrt{N}+\mathcal{O}(1).
\end{equation*}
Here, the depth exhibits the same asymptotic scaling as the system’s light cone, which bounds the propagation speed of the CNOT ladders shown in Fig.~\ref{fig:2} along each lattice axis. 
In lattice-surgery-based surface-code architectures or hardware with higher connectivity, these CNOT ladders can be accelerated dramatically, as detailed for the lattice-surgery setting in Appendix~\ref{app:latticeSurgery}. 
For the fully connected qubit setting, an efficient decomposition of CNOT ladders with $\mathcal{O}(\log N)$ depth is provided in Ref.~\cite{gluckReversibleComputation17th2025}. 
Notably, these logarithmic-depth ladders only necessitate non-local connectivity between qubits within the same row or column. 
This requirement aligns well with realistic neutral-atom hardware, where non-local row- and column-wise connectivity is natively supported via acousto-optical deflectors~\cite{xuConstantoverheadFaulttolerantQuantum2024}.

One important finding of this work is that the combination of two complementary boustrophedon encodings preserves the locality of the fermionic interactions.
At a high level, this means that if the range of fermionic interactions is bounded, these interactions can be simulated using two boustrophedon encodings on qubits with an interaction range of comparable size. 
A formal discussion of this locality preservation is provided in Theorem~\ref{theo:LocalImplementation}.
This local property has an important consequence: increasing the interaction range of the simulated fermion lattice contributes only a constant additive overhead to the total depth. 
This is in contrast to FSNs, for which increasing the interaction range increases the prefactor of the $\mathcal{O}(\sqrt{N})$ depth scaling. 
Further details on this are given in Corollary~\ref{cor:AddingInteractions}.

We generalize our methods to arbitrary spatial dimensions in Appendix~\ref{app:dD}. 
This extension enables, for example, the simulation of three-dimensional fermion lattices on fully connected qubit-based hardware with circuit depth $\mathcal{O}(\log N)$. 

\subsection*{Routing of fermions}

\begin{figure}
    \centering
    \includegraphics[width = \columnwidth]
    {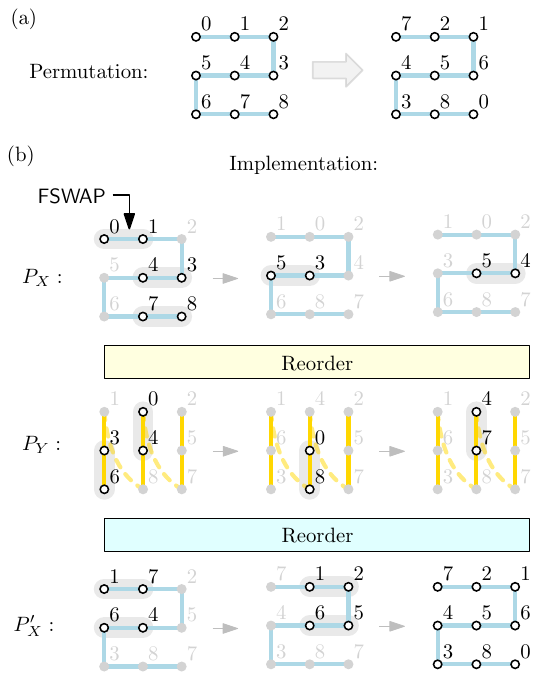}
    \caption{
    Fermion routing.  (a) An example routing problem. 
    (b) Starting from an $S$-pattern ordering, the first row-wise stage $P_X$ is executed using fermionic swaps. Next we switch to a $Z$-pattern, enabling the column-wise stage $P_Y$ to be realized with fermionic swaps. Finally, the original $S$-pattern is restored, which is appropriate for the concluding row-wise stage $P_X'$. Each permutation stage, as well as the reordering, can be implemented in $O(\sqrt{N})$ depth, reaching an overall optimal asymptotic scaling.}
    \label{fig:ArbitraryPermutation}
\end{figure}

In order to facilitate geometrically non-local interactions, we combine our dynamical encoding primitives with a standard algorithm for off-line permutation routing~\cite{annexsteinUnifiedApproachOffline1990}, which is known to achieve optimal asymptotic depth scaling for general data movement on a $d$-dimensional  lattice.
In this approach, one realizes an arbitrary two-dimensional permutation $P$ by decomposing it into a sequence of row-wise and column-wise permutations. 
The decomposition can be written as
\begin{equation*}
    P = P_X \circ P_Y \circ P_X',
\end{equation*}
where $P_X$ and $P_X'$ denote permutations within rows, and $P_Y$ denotes a permutation along columns. 

Our key observation is that dynamic switching between row-first and column-first patterns makes this classical approach directly compatible with row- and column-wise FSNs. 
As visualized in Fig.~\ref{fig:ArbitraryPermutation}, each pattern aligns the effective one-dimensional connectivity with a different set of local directions on the lattice. 
By alternating between these patterns at the appropriate points in the sequence $P_{X}\rightarrow P_{Y}\rightarrow P_{X'}$, we can implement each sub-routing using FSNs.
Crucially, the encoding switches can be performed with sufficiently low cost that they do not alter the asymptotic resource cost of the underlying lattice-permutation algorithm. 

In Appendix~\ref{app:fermion_routing}, we provide additional details and extend the construction to lattices of arbitrary dimension. 
In particular, we use that routing on a $d$-dimensional lattice can be performed analogously to the two-dimensional case without asymptotic scaling introduced by the F2Q mapping, yielding an overall depth of $\mathcal{O}(N^{1/d})$. 
Furthermore, for each choice of $a \in \{1,2,3,\ldots\}$, we construct a corresponding routing algorithm with depth $\mathcal{O}\big(\log^{(a+1)/a} N\big)$. 
This improves upon the asymptotic scaling of Ref.~\cite{constantinidesLowdepthFermionRouting2025} and shows that, even without ancilla qubits, logarithmic asymptotic depth scaling can be approached arbitrarily closely.
However, in the more practically relevant setting, treating a fully connected qubit system as a $d$-dimensional lattice, with $d$ chosen flexibly, makes it possible to trade lower constant factors against asymptotic scaling advantages for specific problem sizes. 

\section{Example applications}
\label{sec:applications}

In this section, we consider several widely studied fermionic problems and corresponding algorithms. 
We show that incorporating our dynamic encoding primitives reduces resource requirements, even at scales relevant to current devices.

\begin{figure*}
    \centering
    \includegraphics[width = \linewidth]{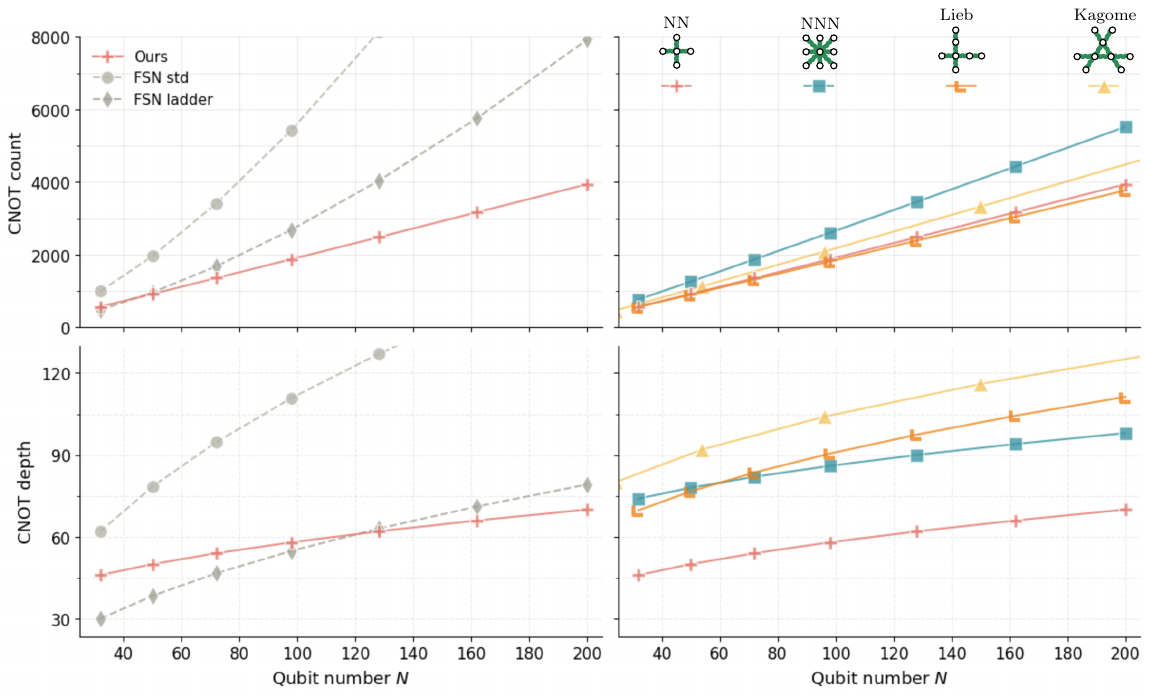}
    \caption{
    Resource counts of Fermi-Hubbard simulations.
    The plots show CNOT counts (upper panels) and depths (lower panels) for a second-order Trotter step simulating various Hubbard models on a two-dimensional square lattice of qubits.
    Models: We study nearest-neighbor (NN) and next-nearest-neighbor (NNN) square lattices, as well as Lieb and kagome geometries.
    For NN and NNN models, we utilize an $L\times (2L)$ qubit lattice (where the factor of $2$ accounts for spin and $L^2$ is the number of unit cells) for sizes $L=4$ to $10$.
    For Lieb and kagome lattices, we employ a $3L\times 2L$ qubit lattice (where the factor of $3$ reflects the three-site unit cell) for sizes $L=2$ to $5$.
    Performance Comparison (Left): We compare our implementation on the NN square lattice against the standard fermionic swap (FSWAP) network \cite{kivlichanQuantumSimulationElectronic2018} and an optimized FSWAP version utilizing a ladder-like connectivity graph, as employed in recent experimental works \cite{hemeryMeasuringLoschmidtAmplitude2024, alamProgrammableDigitalQuantum2025a}.
    Complex Geometries (Right): We present the resource counts required to implement the more complex interaction graphs of the NNN, Lieb, and kagome models using our dynamical encoding framework.
    }
    \label{fig:HubbardComp}
\end{figure*}

\subsection{2D Fermi-Hubbard models}
\label{sec:fermihubbard}
The Fermi-Hubbard model stands as a foundational framework in quantum many-body physics. Despite its conceptual simplicity, the intricate competition between kinetic energy and strong electronic correlations gives rise to a remarkably rich phase diagram. This includes Mott insulators, pseudogap and strange-metal regimes,  antiferromagnetism and exotic quantum spin liquids and, most prominently, unconventional superconductivity.
Since the discovery of high-temperature cuprate superconductors, a central question has been whether the two-dimensional Hubbard model captures the essential physics required to describe these materials.
Growing evidence suggests that the ground state of the Hubbard model in its simplest form, featuring a square lattice with only NN hopping, is non-superconducting in the parameter regime most relevant to cuprates~\cite{qinHubbardModelComputational2022, simonscollaborationonthemany-electronproblemAbsenceSuperconductivityPure2020}. 
However, including NNN hopping has been shown to exert a strong influence on the stability of the superconducting phase, making this model a more promising candidate to study the basic mechanisms of curate high-temperature superconductivity~\cite{jiangSuperconductivityDopedHubbard2019, simonscollaborationonthemany-electronproblemAbsenceSuperconductivityPure2020, tasakiHubbardModelIntroduction1998}. 
A microscopically more accurate description of the cuprates uses a three-band model to capture the essential physics of copper and oxygen electron orbitals. This so-called Emery model lives on the more complicated Lieb lattice, a square lattice with three sites per unit-cell, making classical simulations even more challenging~\cite{zhaoEnhancedSuperconductingCorrelations2025, cuiGroundstatePhaseDiagram2020, maiPairingCorrelationsCuprates2021}. 
Beyond cuprate physics, the Hubbard model exhibits a rich variety of properties that depend critically on the underlying lattice geometry; for instance, the frustrated nature of the kagome lattice gives rise to a wide range of correlated phenomena~\cite{huRichNatureVan2022, liaoGaplessSpinLiquidGround2017}.

Despite the extensive research of Hubbard model triggered by these rich phenomena, many fundamental questions remain unresolved, primarily due to the inherent computational complexity for classical methods. In the following, we demonstrate how the methods developed in this paper allow for efficient implementations of the Hubbard model on many different lattice geometries onto qubit lattices.

The general Hubbard Hamiltonian under consideration is given by
\begin{gather*}
H=-t\sum_{\langle i,j\rangle,\sigma}\left(c^\dagger_{i\sigma}c_{j\sigma}+\mathrm{h.c.}\right)
-t'\sum_{\langle\!\langle i,j\rangle\!\rangle,\sigma}\left(c^\dagger_{i\sigma}c_{j\sigma}+\mathrm{h.c.}\right)\\
+U\sum_i n_{i\uparrow}n_{i\downarrow},
\end{gather*}
where the parameters $t$ and $t'$ denote the hopping amplitudes and $U$ represents the on-site Coloumb repulsion.
The summations $\langle i,j\rangle$ and $\langle\!\langle i,j\rangle\!\rangle$ refers to the NN and NNN pairs of the underlying lattice, respectively. 
We primarily consider examples with exclusively NN interactions, but additionally study square lattices with NNN hopping $t'\neq 0$.

By utilizing two complementary boustrophedon encodings, all local hopping terms can be implemented such that each encoding captures one half of the interactions through local qubit gates. 
This approach necessitates only a single switch between the two encodings to simulate the full Hamiltonian. 
Our approach is illustrated explicitly for the NNN square lattice in Fig.~\ref{fig:3}(b).

Resource counts are showcased in Fig.~\ref{fig:HubbardComp}, where we first compare the implementation of a single second-order Trotter step for the NN square lattice using our method against two state-of-the-art FSN-based approaches.
We find that our scaling advantage already yields lower gate counts for $5\times5$ lattices, which correspond to a total of $50$ qubits considering both spin species. 
Furthermore, while the boustrophedon strategy and FSNs exhibit the same asymptotic depth scaling, $\mathcal{O}(\sqrt{N})$, the significant smaller prefactor in our approach allows it to outperform FSNs for lattices larger than $7\times7$.

Beyond the standard square lattice, our method maintains these competitive resource costs even as the underlying fermionic connectivity increases in complexity, as illustrated in Fig.~\ref{fig:HubbardComp}.
This resilience is a direct consequence of the additive $\mathcal{O}(1)$ overhead in both gate count and depth required to accommodate longer-range interactions.
This property allows our approach to scale well to more sophisticated models, such as the square lattice with NNN interactions, the Lieb lattice and the kagome lattice.
Further details to these examples are provided in Appendix~\ref{app:lattice_examples}.

{
\renewcommand{\arraystretch}{1.3}
\begin{table}[t]
\centering
\begin{tabular}{l c c}
\toprule
Lattice & Gate Count & Depth \\
\midrule
Square NN & $21N$ & $4\sqrt{N/2}$\\
Square NNN & $30N$ & $4\sqrt{N/2}$\\
Lieb & $19\frac{2}{3}N$ & $12\sqrt{N/6}$\\
Kagome & $24N$ & $12\sqrt{N/6}$\\
\bottomrule
\end{tabular}
\caption{
Leading-order scaling of circuit gate counts and depths for a single second-order Trotter step across various fermionic lattice models. We consider spinful $L\times L$ realizations of each geometry implemented on a two-dimensional qubit lattice.
}
\label{tab:gate-counts}
\end{table}
}

Crucially, the total depth for all presented examples is dominated by the depth of the encoding switch. Consequently, our approach benefits directly from architectures that facilitate efficient boustrophedon switches. 
A particularly relevant case is the surface code~\cite{kitaev2003fault, fowler2012surface}, arguably the most prominent error-correcting code, which has recently demonstrated logical error probabilities beyond the break-even point~\cite{google2025quantum}.
A promising method for implementing universal fault-tolerant operations in these systems is lattice surgery~\cite{horsman2012surface, fowler2018low, litinskiGameSurfaceCodes2019}. 
In this setting, the depth for simulating fermion lattices reduces to $\mathcal{O}(1)$. 
This reduction in depth is possible because the CNOT ladders (the primary bottleneck of our encoding switch) can be reordered with trivial overhead using feed-forward operations within the lattice-surgery framework. 
We provide a comprehensive discussion of this approach in Appendix~\ref{app:latticeSurgery}.

In many contemporary cost models for fault-tolerant quantum circuits, the overhead of Clifford operations required for dynamic encoding changes, whether via FSNs or our proposed method, is often neglected. 
It remains unclear, however, whether this assumption will remain justified as the field progresses. 
The relative cost of non-Clifford operations continues to shift as magic-state synthesis becomes more efficient~\cite{gidneyMagicStateCultivation2024} and new techniques substantially reduce the T-count required for Hubbard model simulations~\cite{kivlichanImprovedFaultTolerantQuantum2020a}.
This shift is particularly relevant given that a single Trotter step using FSNs requires $\mathcal{O}(N)$  T gates but $\mathcal{O}(N^{3/2})$ CNOTs. 
In this regime, reducing the CNOT complexity to $\mathcal{O}(N)$  could have a non-negligible impact on the total resource requirements.

\subsection{Fermionic fast Fourier transformation}
\label{sec:ffft}

\begin{figure}
    \centering
    \includegraphics[width = \columnwidth]{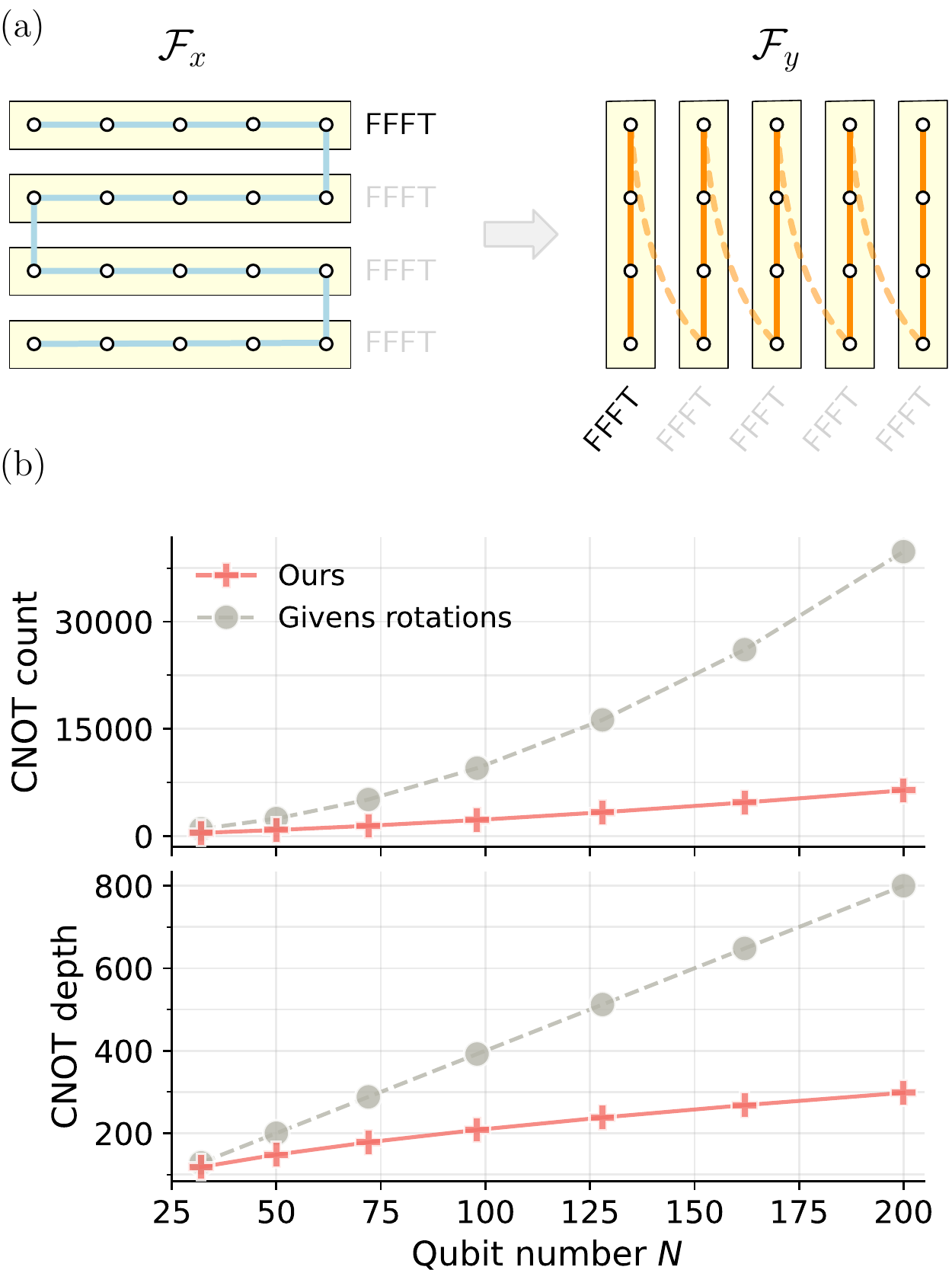}
    \caption{
    Two-dimensional \textsf{FFFT} construction via one-dimensional transforms.
    (a) Conceptual framework. 
    The 2D \textsf{FFFT} is decomposed into successive one-dimensional transforms, $\mathcal{F}=\mathcal{F}_x\mathcal{F}_y$.
    These act along the $x$ and $y$ dimensions and are efficiently parallelized using $S$ and $Z$ patterns, respectively. For spinful systems, the spin-up and spin-down species are interleaved along the columns; consequently, the 1D \textsf{FFFT} is adapted according to the construction detailed in Appendix~\ref{app:lattice_examples}.
    (b) Numerical performance. 
    Benchmarking the spinful 2D \textsf{FFFT} within our framework against the standard approach based on Givens rotations from Ref.~\cite{kivlichanQuantumSimulationElectronic2018}. 
    }
    \label{fig:FFFT}
\end{figure}

As a further application, we investigate algorithms built upon the \textsf{FFFT}, which finds widespread use across digital quantum simulation. 
For instance, translationally invariant free-fermion states can be prepared in constant depth in momentum space and subsequently transformed to real space via the \textsf{FFFT}~\cite{verstraeteQuantumCircuitsStrongly2009}.
This is a crucial step in preparing states with long-range structure, including chiral superconductors, critical states, and Fermi surfaces~\cite{maskaraFastSimulationFermions2025}. 
Furthermore, the \textsf{FFFT} serves as a core subroutine in state-of-the-art Hamiltonian simulation algorithms for periodic materials~\cite{babbushLowDepthQuantumSimulation2018}.

Although the $d$-dimensional \textsf{FFFT} admits an implementation with depth $\mathcal{O}(\log^2 N)$ on fully connected architectures~\cite{constantinidesLowdepthFermionRouting2025}, devices with limited-connectivity necessarily incur additional depth overhead. 
For reference, a $d$-dimensional fermionic quantum computer that is limited to local interactions can implement a $d$-dimensional \textsf{FFFT} in $\mathcal{O}(N^{\frac{1}{d}})$ depth. 
This scaling is inherited from the one-dimensional \textsf{FFFT}s that compose the higher-dimensional transformation, as illustrated in Fig.~\ref{fig:FFFT}. 
Consequently, the central challenge lies in implementing the $d$-dimensional \textsf{FFFT} on a qubit lattice without introducing overhead beyond the fundamental constraints of local connectivity.

By leveraging our dynamical encoding primitives, we can effectively implement the intermediate reorientations required to apply successive one-dimensional \textsf{FFFT}s while maintaining optimal resource scaling. 
This allows the entire $d$-dimensional \textsf{FFFT} to be executed with $\mathcal{O}(N^\frac{1}{d})$ depth on a corresponding qubit lattice.
The conceptual framework for this algorithm for two dimensions is illustrated in Fig.~\ref{fig:FFFT}(a), while panel (b) provides a comparison to the state-of-the-art method for restricted connectivity devices.
Detailed derivations for the higher-dimensional construction are provided in Appendix~\ref{app:d_dimensional_boustrophedon}.

{
\renewcommand{\arraystretch}{1.3}
\begin{table}[t]
\centering
\begin{tabular}{l @{\hspace{2em}} c c}
\toprule
Method & Gate-Count & Depth \\
\midrule
Ours & $\mathcal{O}(N^{3/2})$ & $\mathcal{O}(\sqrt{N})$\\
Constantinides et al. \cite{constantinidesLowdepthFermionRouting2025} & $\mathcal{O}(N^{3/2}\log^2 N)$ & $\mathcal{O}(\sqrt{N}\log^2 N)$\\
Givens rotations \cite{kivlichanQuantumSimulationElectronic2018} & $\mathcal{O}(N^2)$ & $\mathcal{O}(N)$\\
\bottomrule
\end{tabular}
\caption{
Asymptotic resource scaling for the two-dimensional \textsf{FFFT} on an ancilla-free two-dimensional qubit lattice. We report the leading-order gate counts and circuit depths required for the implementation of the transform across different algorithmic methods.
}
\label{tab:FFFT}
\end{table}
}

\subsection{Molecular simulation}
\label{sec:molecularsim}

As a final application, we focus on molecular electronic structure problems.
These represent a primary use case for quantum computing, as ground-state electronic properties underpin much of modern chemistry and materials science. 
In a second-quantized basis, the electronic Hamiltonian takes the form
\begin{equation}
H = \sum_{i,j} h_{ij}\, c_i^\dagger c_j
+ \frac12 \sum_{i,j,k,l} h_{ijkl}\, c_i^\dagger c_j^\dagger c_k c_l,
\end{equation} 
where the two-electron tensor $h_{ijkl}$ contains $\mathcal{O}(N^4)$ nonzero terms.
A central task in this setting is the estimation of the ground-state energy.
However, near-term approaches based on traditional phase estimation are hindered by the high gate counts and depths required to handle the large number of Hamiltonian terms. 
In contrast, sample-based methods leverage classical post-processing to reduce the demands on quantum hardware, making them a promising candidate for practical chemical simulations on near-term devices.

A prominent sample-based algorithm is the variational quantum eigensolver (VQE). 
In particular, ADAPT-VQE~\cite{grimsleyAdaptiveVariationalAlgorithm2019} constructs its ansatz iteratively by selecting single-term evolutions from an operator pool through a classical optimization procedure. 
In this context, the unitary coupled-cluster ansatz truncated to single and double excitations (UCCSD) provides a particularly natural operator pool:
\[
\mathcal{P}=\bigcup_{i<j} \left\{ c_i^\dagger c_j - c_j^\dagger c_i \right\}
\;\cup\;
\bigcup_{i<j<k<l} 
\left\{ c_i^\dagger c_j^\dagger c_k c_l - c_l^\dagger c_k^\dagger c_j c_i \right\}.
\]
Operationally, instead of compiling the full $\mathcal{O}(N^4)$-term Hamiltonian into a single deep circuit, the algorithm repeatedly executes circuits that contain only a comparatively sparse subset of non-local interactions selected by the optimizer.

Another recently proposed method is the SqDRIFT protocol~\cite{piccinelliQuantumChemistryProvable2025}. 
In this approach, a long product-formula circuit is replaced by a sequence of stochastically chosen single-term evolutions whose expectation value reproduces the target time evolution. 
Consequently, each circuit implements only a sparse set of non-local interactions selected at random, while the intended dynamics are recovered through classical post-processing.

When the term selection is sparse, the primary compilation burden shifts from implementing the entire Hamiltonian simultaneously to efficiently realizing various spatially non-local fermionic interactions.
Effectively, this transforms the dominant bottleneck into a fermion routing problem.
Our framework provides a solution to this bottleneck by allowing routing to be executed with depth governed by the intrinsic connectivity of a $d$-dimensional qubit lattice. 
By eliminating the overhead typically induced by traditional F2Q mappings, our approach ensures that the physical hardware constraints, rather than the encoding scheme, limits the performance. 

\section{Discussion and outlook}

Fermion-to-qubit encodings constitute an intensely studied area and have seen substantial progress along several directions.
A particularly important line of work addresses the intrinsic one-dimensional connectivity restriction of the Jordan-Wigner transformation by embedding the fermion lattice into a constrained subspace of a larger qubit lattice \cite{bravyiFermionicQuantumComputation2002, verstraeteMappingLocalHamiltonians2005a, derbyCompactFermionQubit2021}.
These locality-preserving encodings retain the geometry of the original problem in the sense that local operators on the fermion lattice are mapped to local operators on the qubit lattice.
This local-to-local property is their main strength, enabling direct simulation of fermion lattices with asymptotically optimal gate-count and depth scaling.

However, this is not without costs.
These encodings require $\mathcal{O}(N)$ additional qubits, which can impose a prohibitive overhead, especially in fault-tolerant architectures where qubit count is often more costly than Clifford-gate depth due to exponential error suppression~\cite{constantinidesLowdepthFermionRouting2025}.
Furthermore, they require the preparation of an entangled initial state, which, on a qubit lattice, must be achieved either through a deep unitary circuit or via feed-forward measurements.

A complementary recent development is the use of dynamical encodings, which enable arbitrary fermionic simulations with only polylogarithmic depth overhead.
On reconfigurable qubit arrays, this can be achieved with an $\mathcal{O}(\log N)$ depth overhead \cite{maskaraFastSimulationFermions2025}.
However, this construction relies on $\mathcal{O}(N)$ ancillas and feed-forward measurements, which, for the simulation of fermion lattices, leads to trade-offs similar to those encountered in locality-preserving encodings.
A second dynamical-encoding approach applies to fully connected qubits and allows arbitrary fermionic simulations to be performed without ancillas, at the cost of an $\mathcal{O}(\log^2 N)$ depth overhead \cite{constantinidesLowdepthFermionRouting2025}.
When transferred to geometrically local qubit architectures, this fully connected construction implies an upper bound of $\mathcal{O}(N^{\frac{1}{d}}\log^2 N)$ depth overhead on a $d$-dimensional qubit lattice.
For simulating two-dimensional fermion lattices, however, this depth remains larger than that achieved by FSNs, indicating that the advantages of the fully connected setting in Ref.~\cite{constantinidesLowdepthFermionRouting2025} do not directly translate into optimal performance on local architectures.

We have presented methods that combine the key advantages of locality-preserving and dynamical encodings, mapping local operators on the fermion lattice to local operations on the qubit lattice while incurring primarily depth overhead.
Together with their small prefactors, this local structure suggests that our methods are not merely of theoretical interest, but may be directly relevant for near- and long-term quantum-computing architectures.
This prospect is especially compelling in light of two hardware platforms currently at the forefront of quantum computing: superconducting qubits and neutral atoms.

Superconducting qubits are embedded in metallic layers, which prevents their physical movement and typically yield a lattice-like connectivity that is naturally well suited to our two-dimensional fermionic encoding primitives. 
At the same time, this platform benefits from extremely fast, high-fidelity operations and has already demonstrated Hubbard model simulations where exact solutions are classically intractable \cite{alamProgrammableDigitalQuantum2025a}. 

By contrast, neutral-atom platforms trap atoms in optical tweezers and thereby allow for physical movement of the qubits. In principle, this mobility enables all-to-all interactions. However, realistic devices employ specific protocols for qubit movement, allowing some types of non-local interactions to be implemented more efficiently than others. One common class of native qubit movements, for example in architectures based on acousto-optical deflectors \cite{xuConstantoverheadFaulttolerantQuantum2024}, consist of row and column permutations. These enable direct use of the intrinsic product structure of our methods, making it especially promising to trade a small number of native qubit permutations for lower depth.

\section*{Acknowledgements}
The authors thank Christophe Goeller, Barry Mant, Roberto Munoz, Michael Schuler and Riccardo Valencia for insightful discussions. ChatGPT 5.5 was used to improve the clarity of our manuscript.
This research was funded in whole, or in part, by the Austrian Science Fund (FWF) SFB BeyondC Project No. F7108-N38 (DOI: 10.55776/F71). This project was supported by a FFG Funding (Project No. FO99918691) as part of the international Eureka cooperation, and by the Austrian Research Promotion Agency (FFG Project No. FO999937388, FFG Basisprogramm). For the
purpose of open access, the author has applied a CC BY public copyright license to any Author Accepted manuscript version arising from this submission.


%

\appendix
\section{Dynamic fermion-to-qubit mappings}
\label{app:clifford_gates}

In this section we give a more extensive description of fermionic computations and dynamic encodings, since this is the basis for all following results.
We also give a basic description of the parity flow formalism \cite{klaverParityFlowFormalism2025}, which we will use to find efficient dynamic encoding switches.

\subsection*{Preliminaries}
We use some of the notation and results introduced in Ref.~\cite{maskaraFastSimulationFermions2025}, which we review in this subsection for completeness. We also introduce several adaptations to better align this formalism with our framework.
We begin by defining fermionic computations and then discuss how they can be implemented on a qubit-based quantum computer.

A fermionic computation consists of $N$ modes together with a set of basic interactions. Examples include hopping interactions $e^{-i\theta(c_i^\dagger c_j + c_j^\dagger c_i)}$
and density-density interactions $e^{-i\theta n_i n_j}$.
These interactions are expressed in terms of fermionic ladder operators $c_i$ and $c_j^\dagger$, which satisfy the canonical anticommutation relations
\[
\{c_i,c_j\}=\{c_i^\dagger,c_j^\dagger\}=0,
\qquad
\{c_i^\dagger,c_j\}=\delta_{i,j}
\]
and number operators $n_i=c_i^\dagger c_i$. The gate count of a fermionic computation is the number of basic interactions required to implement it. Moreover, we require that each mode participates in at most one basic interaction at any given time, which naturally gives rise to a notion of computational depth.

On the qubit side, we primarily consider computations on $N$ qubits indexed by a $d$-dimensional product graph with vertices $G=\{0,\dots,L-1\}^d$, $N=L^d$. This indexing represents a hardware layout that is well suited to our circuit constructions. For simplicity, we focus on the setting in which all lattice dimensions have equal size, while noting that our results extend straightforwardly to unequal lattice dimensions. Our basic gate set consists of arbitrary single-qubit Pauli rotations $R_P(\theta)=e^{-i\theta P}$, for $P\in\{X,Y,Z\}$, and CNOT gates. However, CNOT gates are only allowed to act between connected qubits, where connections correspond to edges in the graph $G$. Our primary setting considers nearest-neighbor edges
\begin{equation*}
    E=\{\{x_1,x_2\}\subset G : \|x_2-x_1\|_1 = 1\},
\end{equation*}
which is a realistic model of qubit connectivity in many architectures.
To perform a fermionic computation on a qubit-based quantum computer, we must relate Pauli rotations to fermionic ladder operators. We do so using the JW encoding, which is specified by a canonical ordering.

\begin{definition}[canonical ordering]
A JW transformation is specified by a bijection
\[
m : \{0,\ldots,N-1\} \to \{0,\ldots,N-1\},
\]
which maps each qubit index to its canonical rank. We refer to this bijection as the canonical ordering. Since any ordering of the $N$ qubit indices is valid, there are $N!$ possible choices. Equivalently, when the qubits are indexed by a $d$-dimensional product graph, we write
\[
m : \{0,\ldots,L-1\}^d \to \{0,\ldots,N-1\}.
\]
\end{definition}

The corresponding F2Q map can be given by majorana operators
\begin{equation*}
\chi_{2j}
  = X_j \prod_{j' \in L(j)} Z_{j'},
\qquad
\chi_{2j+1}
  = Y_j \prod_{j' \in L(j)} Z_{j'},
\end{equation*}
where
\[
 L(j)=\left\{j' \mid m(j')<m(j), j'=1, \ldots, N\right\}.
\]
This gives rise to the fermionic ladder operators 
\begin{equation*}
c_j = \frac{1}{2}\bigl(\chi_{2j} + i\,\chi_{2j+1}\bigr),
\qquad
c_j^\dagger = \frac{1}{2}\bigl(\chi_{2j} - i\,\chi_{2j+1}\bigr).
\end{equation*}
While choosing between different JW orderings offers considerable freedom in encoding elementary fermionic interactions, this freedom remains fundamentally limited by the one-dimensional nature of the encoding. In particular, hopping interactions between modes separated by distance $k$ in the canonical ordering give rise to $k$-body qubit interactions. This is the main motivation behind fermionic permutations, which allow to dynamically change the canonical order via Clifford gates.
The simplest example of such a fermionic permutation is illustrated in Fig.~\ref{fig:1}c, representing a CZ gate between qubit $j$ and qubit $k$, for which $m(j)=m(k)+1$. The CZ gate swaps these qubits in the canonical ordering $m'(j)=m(j)-1, m'(k)=m(k)+1$. 
This can be seen from the Heisenberg-picture action of the Clifford unitary on either of the two Majorana operators associated with qubit $j$
\begin{equation*}
\begin{aligned}
    \mathrm{CZ}_{j,k}(\chi_{2j})\mathrm{CZ}_{j,k} & = \mathrm{CZ}_{j,k}\left(X_j \prod_{j' \in L(j)} Z_{j'}\right)\mathrm{CZ}_{j,k}\\
    & = X_jZ_k\prod_{j' \in L(j)} Z_{j'},
\end{aligned}
\end{equation*}
which corresponds to a JW encoding specified by the permutated ordering $m'$
\begin{equation*}
    \mathrm{CZ}_{j,k}(\chi_{2j})\mathrm{CZ}_{j,k} = \chi'_{2j} = X_j\prod_{j' \in  L'(j)} Z_{j'}.
\end{equation*}
Changing the canonical rank of qubit $j$ from $m(j)=a$ to $m'(j)=b$ can be implemented by applying CZ gates between qubit $j$ and all qubits whose rank lie between $a$ and $b$ in the original ordering $m$. More explicitly,
\begin{equation}\label{eq:PermLine}
\begin{aligned}
    \prod_{j' \in J} \mathrm{CZ}_{j,j'},
    \qquad
    J =
    \begin{cases}
        \{ m^{-1}(r) \mid r=a+1,\ldots,b \}, & a < b, \\
        \{ m^{-1}(r) \mid r=b,\ldots,a-1 \}, & b < a .
    \end{cases}
\end{aligned}
\end{equation}
Besides changing the position of qubit $j$ in the ordering, the order of the other qubits $j' \in J$ changes as $m'(j')=m(j')-1$ if $a<b$ and $m'(j')=m(j')+1$ if $a>b$.

\begin{lemma}\label{lem:FermPerm}
To transform the canonical ordering $m$ into $m'$, one can apply a $CZ$ gate to every qubit pair $(i,j)$ whose relative order is reversed. This corresponds to every pair satisfying
\[
m(i)<m(j) \quad \text{and} \quad m'(i)>m'(j).
\]
\end{lemma}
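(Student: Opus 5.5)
Since all CZ gates commute and are diagonal, the product $U=\prod_{(i,j)\in I}\mathrm{CZ}_{i,j}$ over the set of inverted pairs
\[
I=\{\{i,j\} : m(i)<m(j),\ m'(i)>m'(j)\}
\]
is well defined independently of gate order. I will verify directly in the Heisenberg picture that $U$ maps every Majorana operator of the encoding $m$ to the corresponding Majorana operator of the encoding $m'$. The argument therefore never needs the sequential picture of Eq.~\eqref{eq:PermLine}; it reduces to a single conjugation computation.

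The first step is the basic identity. Conjugating $X_j$ (or $Y_j$) by $\mathrm{CZ}_{j,k}$ gives $X_jZ_k$ (respectively $Y_jZ_k$), and $\mathrm{CZ}$ commutes with all $Z$ operators. Hence, for the Majorana operator $\chi_{2j}=X_j\prod_{j'\in L(j)}Z_{j'}$, only the CZ gates in $U$ that touch qubit $j$ act nontrivially, and each such gate multiplies the operator by $Z_k$ for its partner $k$. This yields
\[
U\chi_{2j}U^\dagger = X_j\prod_{j'\in L(j)}Z_{j'}\prod_{k: \{j,k\}\in I}Z_k,
\]
and the same holds with $Y_j$ in place of $X_j$. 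Because $Z_k^2=1$, the resulting string is $Z$ on the symmetric difference $L(j)\,\triangle\,\{k:\{j,k\}\in I\}$.

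The second step, which is the only real content, is to show this symmetric difference equals $L'(j)=\{k: m'(k)<m'(j)\}$. I split the qubits $k\neq j$ into four classes by comparing $m$ and $m'$:
\begin{itemize}
\item $m(k)<m(j)$ and $m'(k)<m'(j)$: here $k\in L(j)$ and the pair is not inverted, so $k$ survives and $k\in L'(j)$.
\item $m(k)<m(j)$ and $m'(k)>m'(j)$: here $k\in L(j)$ and the pair is inverted, so the two $Z_k$ factors cancel, consistent with $k\notin L'(j)$.
\item $m(k)>m(j)$ and $m'(k)<m'(j)$: here $k\notin L(j)$ but the pair is inverted, so $Z_k$ is added, consistent with $k\in L'(j)$.
\item $m(k)>m(j)$ and $m'(k)>m'(j)$: $k$ is absent from both sets.
\end{itemize}
The main care point is that "inverted" must be read symmetrically, i.e.\ as the unordered pair $\{j,k\}$ with opposite relative order under $m$ and $m'$. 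This covers both the second and third cases, even though the lemma's phrasing fixes $m(i)<m(j)$ for labelled qubits $(i,j)$.

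Finally, the ladder operators $c_j,c_j^\dagger$ are linear combinations of $\chi_{2j}$ and $\chi_{2j+1}$, so $U$ maps the $m$-encoded ladder operators to the $m'$-encoded ones. The circuit $U$ therefore implements the change of canonical ordering from $m$ to $m'$. As a consistency check, the single-rank-move gate set of Eq.~\eqref{eq:PermLine} is exactly the inverted-pair set for that particular $m'$.
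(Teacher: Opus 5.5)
Your proof is correct, but it takes a different route from the paper.

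\emph{The paper's route.} It argues constructively, in selection-sort fashion. It moves $j_0=m'^{-1}(0)$ to rank $0$, then $j_1=m'^{-1}(1)$ to rank $1$, and so on. Each move is an instance of the single-mode relocation of Eq.~\eqref{eq:PermLine}, applied to the current intermediate ordering. Because the relative order of the not-yet-placed qubits is preserved, step $i$ applies CZ gates between $j_i$ and exactly the set $\{k: m(k)<m(j_i)\}\cap\{l: m'(l)>m'(j_i)\}$. The union over all steps is then the set of inverted pairs, each hit exactly once.

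\emph{Your route.} You use that the CZ gates commute and verify the whole product in one Heisenberg-picture computation. Everything reduces to the identity $L(j)\,\triangle\,\{k:\{j,k\}\in I\}=L'(j)$, which your four-case comparison establishes.

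\emph{What each buys.} Your argument is more self-contained. It needs only the one-gate identity $\mathrm{CZ}_{j,k}X_j\mathrm{CZ}_{j,k}=X_jZ_k$, rather than Eq.~\eqref{eq:PermLine} (which itself rests on iterating adjacent-rank CZ swaps) plus bookkeeping of intermediate orderings. It makes independence from gate order manifest. It also checks the Majorana operators of every mode explicitly, instead of inferring the rank shifts of bystander qubits from Eq.~\eqref{eq:PermLine}. The paper's argument, in turn, exhibits an explicit sequence of valid intermediate JW encodings. It also ties the lemma to the operational picture of sliding one mode along the JW path.

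Your remark that the inversion condition must be read as a condition on unordered pairs is correct. That reading is what covers your third case, and it matches the paper's count of one CZ per inverted pair.
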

\begin{proof}
Consider any two orderings $m$ and $m'$ and apply the following sequence:
\begin{enumerate}
    \item 
    Move the qubit $j_{0}=m'^{-1}(0)$ from its current position in the order $m(j_{0})$, to position $0$ as desired by $m'$.
    Apply the CZ gates as defined in Eq.~\eqref{eq:PermLine} 
    \[
    \prod_{j' \in J} \mathrm{CZ}_{j_0,j'},
    \qquad
    J = \{k: m(k)<m(j_{0})\}.
    \]
    This results in a intermediate ordering where qubit $j_0$ has reached its target position $0$, but otherwise the relative order between qubits $j\neq j_0$ has not changed. It follows that no further CZ gates involving qubit $j_{0}$ are required in the subsequent steps.
    \item 
    Move the qubit $j_{1}=m'^{-1}(1)$ from its current position in the order $m(j_{1})$, to position $1$ as desired by $m'$.
    Apply the CZ gates as defined in Eq.~\eqref{eq:PermLine}, but taking into account that there is no need to interact with $j_0$ 
    \[
    \prod_{j' \in J} \mathrm{CZ}_{j_1,j'},
    \quad
    J = \{k: m(k)<m(j_{1})\}\setminus \{j_0\}.
    \]
    In line with step $1$, we reach an intermediate ordering with $j_0$ and $j_1$ at their target positions. No further CZ gates will need to act on $j_0$ or $j_1$.
    \item Continue with this strategy for the remaining $N-2$ qubits. For each value of $j_i$ we apply the gates 
    \begin{gather*}
    \prod_{j' \in J} \mathrm{CZ}_{j_i,j'}\\
    \begin{aligned}
    J &= \{k: m(k)<m(j_{i})\}\setminus \{l : m'(l)<m'(j_{i})\}\\
    &=\{k: m(k)<m(j_{i})\}\cap\{l : m'(l)>m'(j_{i})\}.
    \end{aligned}
    \end{gather*}
\end{enumerate}
This sequence maps the ordering $m$ to the ordering $m'$, with the applied CZ gates satisfying the condition stated in Lemma~\ref{lem:FermPerm}.

\end{proof}
Note that Lemma~\ref{lem:FermPerm} is equivalent to Lemma~3 developed in Ref.~\cite{maskaraFastSimulationFermions2025}.

\subsection*{Fermionic permutations in the parity flow formalism}
By Lemma~\ref{lem:FermPerm}, even a simple reordering operation, such as increasing the rank of qubit $i$ from $m(i)$ to $m'(i)$, requires at least $m'(i)-m(i)$ non-local $\mathrm{CZ}$ gates. In general, quantum algorithms frequently require multi-body Pauli rotations, which are typically not available as native operations on realistic hardware. A common strategy is to use Clifford transformations, which map Pauli operators to Pauli operators under conjugation, to transform a hardware-native Pauli operator into a desired multi-body Pauli operator:
\[
P \mapsto C^\dagger P C.
\]

A naive implementation of this strategy, in which Clifford transformations are constructed separately for each target Pauli operator, can lead to a significant increase in both gate count and circuit depth. The parity flow formalism~\cite{klaverParityFlowFormalism2025} provides a more efficient alternative. It gives a systematic way to construct Clifford circuits that map many hardware-accessible Pauli operators simultaneously to operators close to the desired set of many-body Pauli operators.

\begin{definition}[Parity flow formalism]
Let $C$ be a Clifford circuit. To each single-qubit Pauli generator $g\in\{X_i,Z_i\}$, we assign a label $\ell_g^{(C)}$ whose associated Pauli operator is defined by
\[
\bar{P}\left(\ell_g^{(C)}\right)
=
C^\dagger g C.
\]
Thus, $\bar{P}(\ell_g^{(C)})$ is the logical Pauli operator obtained by applying the generator $g$ after the Clifford circuit $C$, and then undoing $C$.

Since a Clifford conjugation maps Pauli operators to Pauli operators and acts multiplicatively, the labels of the generators determine the conjugation of any Pauli operator. In particular, if
\[
P=\prod_{i=1}^N g_i
\]
is written as a product of Pauli generators, then
\[
C^\dagger P C
=
C^\dagger \left(\prod_{i=1}^N g_i\right) C
=
\prod_{i=1}^N C^\dagger g_i C
=
\prod_{i=1}^N \bar{P}\!\left(\ell_{g_i}^{(C)}\right).
\]

A key advantage of this representation is that the labels can be updated recursively as subsequent Clifford gates are applied. Suppose that a Clifford gate $D$ is appended to the circuit, so that the new circuit is $DC$. If
\[
D^\dagger g D = \prod_{i=1}^N g_i,
\]
then
\[
\bar{P}\!\left(\ell_g^{(DC)}\right)
=
C^\dagger D^\dagger g D C
=
C^\dagger \left(\prod_{i=1}^N g_i\right) C
=
\prod_{i=1}^N \bar{P}\!\left(\ell_{g_i}^{(C)}\right).
\]
Hence, once a complete set of generator labels is known for the circuit $C$, the labels after appending further Clifford gates can be obtained efficiently from the previous labels.
\end{definition}

In the following, we will exclusively track CNOT gates via the parity flow formalism. For a CNOT gate with control qubit $i$ and target qubit $j$, conjugation acts on the single-qubit Pauli generators as
\[
\begin{aligned}
    \mathrm{CNOT}_{i,j}^\dagger (Z_i) \mathrm{CNOT}_{i,j} &= Z_i\\
    \mathrm{CNOT}_{i,j}^\dagger (X_i) \mathrm{CNOT}_{i,j} &= X_iX_j\\
    \mathrm{CNOT}_{i,j}^\dagger (Z_j) \mathrm{CNOT}_{i,j} &= Z_iZ_j\\
    \mathrm{CNOT}_{i,j}^\dagger (X_j) \mathrm{CNOT}_{i,j} &= X_j.
\end{aligned}
\]
Thus, CNOT conjugation preserves the separation between $X$-type and $Z$-type operators, where labels associated with $Z$-basis generators contain only indices of $Z$ operators, while labels associated with $X$-basis generators contain only indices of $X$ operators. Therefore, we simplify the notation by writing
\[
\ell(i) \equiv \ell^{(C)}_{Z_{(i)}} .
\]
With this convention, the elements of $\ell(i)$ are the indices of the $Z$-type generators appearing in the corresponding logical operator $\bar{P}(\ell(i))$, which means that $j\in \ell(i)$ denotes that the generator $Z_j$ appears in $\bar{P}(\ell(i))$.

\begin{lemma}\label{lem:FlowCZ}
Let $\ell(i)$ and $\ell(j)$ be the labels obtained by tracking CNOT gates within the parity flow formalism. Then applying the conjugated $\mathrm{CZ}_{i,j}$ gate is equivalent to applying $\mathrm{CZ}_{i',j'}$ gates between all qubit pairs $\{i',j'\}$ with $i' \in \ell(i)$ and $j' \in \ell(j)$
\[
\mathrm{CNOTs}^{\dagger} \left ( \mathrm{CZ}_{i,j} \right ) \mathrm{CNOTs} = \prod_{i' \in \ell(i)} \prod_{j' \in \ell(j)} \mathrm{CZ}_{i',j'}.
\]
Furthermore, note
\[
\mathrm{CZ}_{i,i}=Z_i.
\]
\end{lemma}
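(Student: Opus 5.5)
The plan is to write $\mathrm{CZ}_{i,j}$ in Pauli form and push the conjugation inside, using the $Z$-label bookkeeping of the parity flow formalism. Recall
\[
\mathrm{CZ}_{i,j}=\tfrac12\left(I+Z_i+Z_j-Z_iZ_j\right),
\]
so conjugation by the CNOT circuit $C$ acts term by term. By the definition of the labels, $C^\dagger Z_i C=\bar P(\ell(i))=\prod_{i'\in\ell(i)}Z_{i'}$. This uses the fact, noted above, that CNOT conjugation maps $Z$-type operators to $Z$-type operators without phases. Similarly $C^\dagger Z_iZ_jC=\bar P(\ell(i))\bar P(\ell(j))$.

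Write $A=\prod_{i'\in\ell(i)}Z_{i'}$ and $B=\prod_{j'\in\ell(j)}Z_{j'}$. The conjugated gate is then $\tfrac12(I+A+B-AB)$. This is the diagonal operator $(-1)^{a\cdot b}$, where $a=\bigoplus_{i'\in\ell(i)}x_{i'}$ and $b=\bigoplus_{j'\in\ell(j)}x_{j'}$ are the parities of the computational-basis bits.

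The key step is to expand the product of parities over $\mathbb{F}_2$:
\[
a\cdot b=\bigoplus_{i'\in\ell(i)}\bigoplus_{j'\in\ell(j)}x_{i'}x_{j'}.
\]
Then $(-1)^{ab}=\prod_{i',j'}(-1)^{x_{i'}x_{j'}}$, and each factor is exactly $\mathrm{CZ}_{i',j'}$ in the computational basis. The right-hand side is a product of commuting diagonal gates, so the order of the factors does not matter.

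The main subtlety is the case where $\ell(i)$ and $\ell(j)$ overlap. A term with $i'=j'$ gives $x_{i'}^2=x_{i'}$, so the factor $(-1)^{x_{i'}}$ is $Z_{i'}$. This is precisely the stated convention $\mathrm{CZ}_{i,i}=Z_i$. Pairs that occur twice, once as $(i',j')$ and once as $(j',i')$ when both indices lie in $\ell(i)\cap\ell(j)$, give $\mathrm{CZ}^2=I$. This is consistent with the ordered double product in the statement, because that product simply includes both factors.

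Finally, I would check that the expansion is an identity of diagonal unitaries on every basis state. Since all operators involved are diagonal in the $Z$ basis, agreement on every basis state establishes the operator identity.
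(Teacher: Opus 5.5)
Your proof is correct and follows essentially the same route as the paper. Both rewrite the conjugated $\mathrm{CZ}_{i,j}$ as the phase $(-1)^{ab}$ on the parities of $\ell(i)$ and $\ell(j)$ and expand $ab$ by distributivity over $\mathbb{F}_2$; you start from $\tfrac12(I+Z_i+Z_j-Z_iZ_j)$ where the paper uses $\exp(-i\frac{\pi}{4}(I-Z_i)(I-Z_j))$, and your explicit handling of overlapping labels (diagonal terms giving $Z_{i'}$, doubly counted pairs cancelling) is more careful than the paper, which only notes $\mathrm{CZ}_{i,i}=Z_i$.
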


\begin{proof}
We begin with the matrix representation of a $\mathrm{CZ}_{i,j}$ gate
\[
\mathrm{CZ}_{i,j}=\exp\left(-i\frac{\pi}{4}(I-Z_i)(I-Z_j)\right).
\]
At the logical level, it applies a phase of $-1$ when both qubits $i$ and $j$ are in the state $\ket{1}$. Equivalently, if the computational basis state is labelled by bits $x_i,x_j\in\mathbb{F}_2$, then the phase is applied when
\[
x_i x_j = 1.
\]
A $\mathrm{CZ}_{i,j}$ gate conjugated by CNOT gates corresponds to
\[
\begin{aligned}
    &\text{CNOTs}^{\dagger} \left(\exp\left(-i\frac{\pi}{4}\left(I-Z_{i}\right)\left(I-Z_{j}\right)\right) \right) \text{CNOTs}\\
    &=\exp\left(-i\frac{\pi}{4}\left(I-\bar{P}\left(\ell(j)\right)\right)\left(I-\bar{P}\left(\ell(k)\right)\right)\right) \\
    &=\exp\left(-i\frac{\pi}{4}\left(I-\prod_{i' \in \ell(i)}Z_{i'}\right)\left(I-\prod_{j' \in \ell(j)}Z_{j'}\right)\right).
\end{aligned}
\]
This can also be understood at the level of the underlying Boolean logic. The gate checks whether the parity of the qubits in $\ell(i)$ and the parity of the qubits in $\ell(j)$ are both equal to $1$. In terms of $\mathbb{F}_2$-valued variables, this condition is
\[
\left(\sum_{i' \in \ell(i)} x_{i'}\right)
\left(\sum_{j' \in \ell(j)} x_{j'}\right)
= 1 .
\]
By distributivity over $\mathbb{F}_2$, we have
\[
\left(\sum_{i' \in \ell(i)} x_{i'}\right)
\left(\sum_{j' \in \ell(j)} x_{j'}\right)
=
\sum_{i' \in \ell(i),\, j' \in \ell(j)}
x_{i'}x_{j'} .
\]
The right-hand side is the phase computed by applying pairwise controlled-$Z$ gates between all qubits in $\ell(i)$ and all qubits in $\ell(j)$, namely
\[
\prod_{i' \in \ell(i)} \prod_{j' \in \ell(j)}
\mathrm{CZ}_{i',j'} .
\]
\end{proof}

Lemma~\ref{lem:FlowCZ} is our main tool for showing how sets of non-local CZ gates required for changing between orderings can be implemented using local CNOT and CZ gates.

\section{2D code switching}
\label{app:2d_circuits}
In this section we demonstrate that the transformation depicted in Fig.~\ref{fig:2} applies the CZ gates that are required by Lemma~\ref{lem:FermPerm} to switch between the $Z$-pattern encoding and the $S$-pattern encoding. 
Subsequently, we show that applying this same transformation on only parts of the qubit lattice facilitates the switching between boustrophedon encodings as illustrated in Fig.~\ref{fig:3}.

\subsection*{Switching between S and Z ordering}

Consider a square lattice $G=\{0, \ldots, L-1\}^2$ of qubits with rows and columns indexed respectively by $r$ and $c$.
The canonical ordering corresponding to the $Z$ pattern is described as
\begin{equation*}
    m_{Z}(r,c)=Lc+L-r.
\end{equation*}
For the $S$ pattern we need to differentiate between even and odd rows
\begin{equation}\label{eq:2D_S_pattern}
\begin{aligned}
    &m_{S}(r_\text{even}, c)= Lr_\text{even}+c\\
    &m_{S}(r_\text{odd}, c)= Lr_\text{odd}+L-c.
\end{aligned}
\end{equation}
The key difference between these two patterns is that the $Z$-pattern realizes a column-first ordering, whereas the $S$-pattern realizes a row-first ordering. To transform between them, we use CNOT ladders that compute row and column parity sums at their crossing points. The resulting local parity information is then used, via Lemma~\ref{lem:FlowCZ}, to simplify interactions between row-originating and column-originating parities.
\begin{definition}[Intersection basis]
Intersection basis refers to the Clifford transformation 
\[
U=C_{\leftarrow} C_{\uparrow},
\]
consisting of parallel CNOT ladders along each column
\begin{equation*}
    C_{\uparrow}=\prod_{c}\prod_{r} \mathrm{CNOT}_{(r+1, c), (r, c)},
\end{equation*}
with parallel CNOT ladders along each odd row
\begin{equation*}
    C_{\leftarrow}=\prod_{c}\prod_{r_{\mathrm{odd}}} \mathrm{CNOT}_{(r_{\mathrm{odd}}, c+1), (r_{\mathrm{odd}}. c)}.
\end{equation*}
\end{definition}

\begin{lemma}\label{lem:colRowBasis}
The intersection basis is specified by the labels
\begin{equation}\label{eq:labels2D}
\ell(r,c)= \left\{(r',c'):
\left[
\begin{array}{ll}
r' \geq r, c' \geq c & \quad r \text{ even}, \\[1em]
r' \geq r, c' = c & \quad r \text{ odd},
\end{array}
\right.
\right\}.
\end{equation}
This structure corresponds to bit-sums supported on crossing points $(r,c)$, with $r$ odd, of axis aligned paths. In particular, this simplifies the implementation of all-to-all CZ interactions
\[
U_{r,c}=\prod_{i' \in C_{r,c}} \prod_{j' \in R_{r,c}}
\mathrm{CZ}_{i',j'}
\]
between the column-path
\[
C_{r,c}=\{(r',c'): r' \geq r, c' = c\},
\]
and the width two row-path
\[
R_{r,c}=\{(r',c'): r-1 \leq r' \leq r, c' \geq c\},
\]
intersecting at $(r,c)$. These interactions can be implemented using the following two local operations in the intersection basis
\[
U_{r,c}=C_{\uparrow}^{\dagger} C_{\leftarrow}^{\dagger} \left ( \mathrm{CZ}_{(r,c), (r+1,c)}\mathrm{CZ}_{(r,c), (r-1,c)} \right ) C_{\leftarrow} C_{\uparrow}
\]
\end{lemma}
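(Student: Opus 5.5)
The plan is to compute the labels of Eq.~\eqref{eq:labels2D} by tracking $U=C_{\leftarrow}C_{\uparrow}$ gate by gate in the parity flow formalism. Then we read off $U_{r,c}$ from Lemma~\ref{lem:FlowCZ}, using bilinearity over $\mathbb{F}_2$.

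\textbf{Step 1 (labels after $C_{\uparrow}$).} We start from the trivial labels $\ell(r,c)=\{(r,c)\}$. For $\mathrm{CNOT}_{i,j}$ the $Z$-label update is $\ell(j)\mapsto \ell(j)\,\triangle\,\ell(i)$, while $\ell(i)$ is unchanged; this follows from $Z_j\mapsto Z_iZ_j$. In each column the ladder $\prod_r \mathrm{CNOT}_{(r+1,c),(r,c)}$ is ordered as in Fig.~\ref{fig:2}, so that the label of $(r+1,c)$ is already accumulated when it is added to $(r,c)$. By induction on $r$ downward from $L-1$, this gives $\ell(r,c)=\{(r',c):r'\ge r\}$. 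The sets are nested suffixes, so no cancellations occur. Different columns act on disjoint qubits and can be treated in parallel.

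\textbf{Step 2 (labels after $C_{\leftarrow}$).} The horizontal ladder acts only on the alternate rows. By the same induction along the row, with $c$ running downward from $L-1$, the label of a laddered row becomes the symmetric difference $\triangle_{c'\ge c}\{(r',c'):r'\ge r\}$. The column suffixes for different $c'$ are disjoint, so this equals the quadrant $\{(r',c'):r'\ge r,\ c'\ge c\}$. Rows untouched by $C_{\leftarrow}$ keep their column-suffix labels. This reproduces the two-case structure of Eq.~\eqref{eq:labels2D}. I will state explicitly which parity of rows carries the ladder; the indexing convention must make the laddered rows coincide with the quadrant-labeled rows. This bookkeeping of the even/odd convention, and the ordering of the ladder gates, is the only delicate point, and I would fix it first against Fig.~\ref{fig:2}.

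\textbf{Step 3 (the CZ identity).} Fix a crossing point $(r,c)$ whose own label is the column path $\ell(r,c)=C_{r,c}$, and whose vertical neighbours $(r\pm1,c)$ carry quadrant labels. By Lemma~\ref{lem:FlowCZ}, conjugating $\mathrm{CZ}_{(r,c),(r+1,c)}\mathrm{CZ}_{(r,c),(r-1,c)}$ gives a phase whose exponent is a product of parities. Viewing the phase exponent as the bilinear form $\bigl(\sum_{\ell(r,c)}x\bigr)\bigl(\sum_{\ell(r+1,c)}x+\sum_{\ell(r-1,c)}x\bigr)$ over $\mathbb{F}_2$, the second factor is the parity of $\ell(r-1,c)\,\triangle\,\ell(r+1,c)$. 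This symmetric difference is the set of rows $r-1$ and $r$ restricted to $c'\ge c$, which is exactly $R_{r,c}$.

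Expanding by distributivity then gives $\prod_{i'\in C_{r,c}}\prod_{j'\in R_{r,c}}\mathrm{CZ}_{i',j'}=U_{r,c}$. The diagonal term at $i'=j'=(r,c)$ appears as $\mathrm{CZ}_{i,i}=Z_i$, consistent with the convention noted after Lemma~\ref{lem:FlowCZ}. At the boundary rows the corresponding neighbour is absent and the label is empty, and the same computation goes through. Finally, for the stated operator form, note that $\bar P(\ell)=U^\dagger Z U$, so the conjugation in the statement is $U^\dagger(\cdot)U=C_{\uparrow}^{\dagger}C_{\leftarrow}^{\dagger}(\cdot)C_{\leftarrow}C_{\uparrow}$, as written.
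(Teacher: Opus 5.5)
Your proposal is correct and follows essentially the same route as the paper. You track the $Z$-labels through $C_{\uparrow}$ and then $C_{\leftarrow}$ to get column suffixes and quadrants, then apply Lemma~\ref{lem:FlowCZ} to the two CZ gates so that $\ell(r-1,c)\cap\ell(r+1,c)$ cancels, leaving $\ell(r+1,c)\triangle\ell(r-1,c)=R_{r,c}$. Your caveat about row parity is well placed: the definition of $C_{\leftarrow}$ puts the ladders on odd rows, while Eq.~\eqref{eq:labels2D} needs them on even rows, and the paper's proof does not address this mismatch.
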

\begin{proof}
We first compute the parity labels $\ell(r,c)$ induced by the CNOT gates. The initial parity labels are given by
\[
P(\ell(r,c))=Z_{r,c}.
\]
After applying the first operator $C_{\uparrow}$
\begin{equation*}
P(\ell(r,c))=\prod_{r' \geq r}Z_{r',c}    
\end{equation*}
and the subsequent application of $C_{\leftarrow}$
\begin{equation*}
\label{eq:grid_labels}
P(\ell(r,c)) =
\begin{dcases}
\prod_{r' \geq r}\prod_{c' \geq c} Z_{r',c'} & r \text{ even}, \\
\prod_{r' \geq r} Z_{r',c} & r \text{ odd},
\end{dcases}
\end{equation*}
where we have to make a distinction between even and odd rows. The corresponding labels are as shown in \ref{eq:labels2D}. 
Next, for $r$ odd, we consider the equivalence from Lemma~\ref{lem:FlowCZ} for the pair
\begin{gather*}
U_{r,c}=C_{\uparrow}^{\dagger} C_{\leftarrow}^{\dagger} \left ( \mathrm{CZ}_{(r,c), (r+1,c)}\mathrm{CZ}_{(r,c), (r-1,c)} \right ) C_{\leftarrow} C_{\uparrow}=\\[1em]
\left(\prod_{j \in \ell(r,c)} \prod_{k \in \ell(r+1,c)}\overline{\mathrm{CZ}}_{j,k}\right)\left(\prod_{j \in \ell(r,c)} \prod_{k \in \ell(r-1,c)}\overline{\mathrm{CZ}}_{j,k}\right),
\end{gather*}
where we use $\overline{\mathrm{CZ}}$ to indicate the CZ gates originating from Lemma~\ref{lem:FlowCZ}.
For labels $m \in \ell(r-1,c) \cap \ell(r+1,c)$, we have two corresponding $\overline{\mathrm{CZ}}_{j,m}$ gates canceling each other
\begin{equation}\label{eq:CZpair}
U_{r,c}=\prod_{j \in \ell(r,c)} \prod_{k \in \ell(r+1,c)\triangle\ell(r-1,c)}\overline{\mathrm{CZ}}_{j,k}.
\end{equation}
Such a pair of CZ gates is therefore equivalent to applying $\overline{\mathrm{CZ}}$ gates between all pairs of qubits drawn from $\ell(r,c)=C_{r,c}$ and
\begin{equation}\label{eq:Delta2D}
\begin{gathered}
\ell(r+1,c)\triangle\ell(r-1,c)=\\[1em]
\left\{(r',c'): r-1 \leq r' \leq r, c' \geq c\right\}
=R_{r,c}.
\end{gathered}
\end{equation}
\end{proof}

\begin{definition}[$C_\mathrm{2D}$]\label{def:C2D}
The quantum circuit $C_\mathrm{2D}$, as illustrated in Fig.~\ref{fig:2}, is of the form
\begin{equation*}
    C_\mathrm{2D}=C_{\uparrow}^{\dagger} C_{\leftarrow}^{\dagger} C_Z C_{\leftarrow} C_{\uparrow}
\end{equation*}
and combines the intersection basis with CZ gates along each column
\begin{equation*}
   C_{Z}=\prod_{c} \prod_{r} \mathrm{CZ}_{(r,c), (r+1,c)}. 
\end{equation*}
This circuit uses $\mathcal{O}(N)$ gates, and its depth is dominated by the constituent CNOT ladders. Hence, for nearest-neighbor connectivity, the depth scales as $\mathcal{O}(L)$, whereas for fully connected qubits it scales as $\mathcal{O}(\log L)$.
\end{definition}

\begin{theorem}
The circuit $C_{\mathrm{2D}}$ implements the transformation between the orderings $m_Z$ and $m_S$ in either direction, up to additional single-qubit $Z$ gates on odd rows. Consequently, this transformation requires $\mathcal{O}(N)$ Clifford gates and has depth $\mathcal{O}(\sqrt{N})$ on a square qubit lattice, or
$\mathcal{O}(\log N)$ with fully connected qubits.
\end{theorem}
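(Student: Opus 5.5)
Reduce to Lemma~\ref{lem:FermPerm} and Lemma~\ref{lem:colRowBasis}. First I compute the set of inverted pairs between $m_Z$ and $m_S$. Then I show that $C_{\mathrm{2D}}$, written as a product of $\overline{\mathrm{CZ}}$ gates via Lemma~\ref{lem:FlowCZ}, applies exactly one CZ to each inverted pair, plus only diagonal single-qubit $Z$ terms (from $\mathrm{CZ}_{i,i}=Z_i$). All CZ gates commute, so only the parity of the number of times each pair $\{i,j\}$ is hit matters. The claim then becomes an identity over $\mathbb{F}_2$ between two edge sets.

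\emph{Step 1: inverted pairs.} Two qubits $(r,c)$ and $(r',c')$ are ordered column-first by $m_Z$ and row-first by $m_S$. If they lie in the same column, their relative order reverses between the patterns as dictated by the row direction in $m_Z$ and $m_S$. If they lie in the same row, the serpentine direction on odd rows either reverses the order or not. If they differ in both coordinates, the pair is inverted exactly when the column comparison disagrees with the row comparison, i.e. one qubit is up-left of the other in the appropriate orientation. I will write the inverted set as an explicit union. This is "$(r',c')$ with $r'>r$, $c'>c$" type quadrant pairs, plus within-row pairs on odd rows, plus possibly within-column pairs.

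\emph{Step 2: expand $C_{\mathrm{2D}}$.} $C_Z$ is a product of $\mathrm{CZ}_{(r,c),(r+1,c)}$ over all vertical edges. I group these by odd $r$: each odd $r$ contributes the pair $\mathrm{CZ}_{(r,c),(r+1,c)}\mathrm{CZ}_{(r,c),(r-1,c)}$, which by Lemma~\ref{lem:colRowBasis} equals $U_{r,c}$, the all-to-all CZ between $C_{r,c}$ and $R_{r,c}$. Boundary edges at $r=0$ or $r=L-1$, if unpaired, I handle separately with the labels in Eq.~\eqref{eq:labels2D}. The total is $\prod_{r\,\mathrm{odd},c}U_{r,c}$. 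Now I count, for a fixed pair of qubits $p,q$, the number of $(r,c)$ with $p\in C_{r,c}$, $q\in R_{r,c}$ or vice versa, modulo 2. Take $p=(r_1,c_1)$ and $q=(r_2,c_2)$. Then $p\in C_{r,c}$ forces $c=c_1$ and $r\le r_1$ with $r$ odd. The condition $q\in R_{r,c}$ forces $r\in\{r_2,r_2+1\}$ and $c_2\ge c_1$. So there is at most one such odd $r$ for each orientation, and I get an explicit indicator. Diagonal hits ($p=q$) give the single-qubit $Z$ gates. These sit on odd rows, which matches the claimed correction.

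\emph{Step 3: compare and conclude.} I check case by case (same row, same column, strictly separated, with parities of $r_1,r_2$) that the indicator from Step 2 equals the inversion indicator from Step 1. By Lemma~\ref{lem:FermPerm}, $C_{\mathrm{2D}}$ then maps $m_Z$ to $m_S$ up to the $Z$ gates. Since $C_{\mathrm{2D}}$ is diagonal of order two (a product of CZs and $Z$s), it is self-inverse. The inversion set is symmetric in $m\leftrightarrow m'$, so the same circuit gives the reverse direction. The resource claims follow from Definition~\ref{def:C2D}. The ladders $C_\uparrow$ and $C_\leftarrow$ have $O(L)$ depth with $O(N)$ gates on nearest-neighbour connectivity, and $O(\log L)$ depth using the log-depth ladder decomposition for full connectivity; $C_Z$ has depth 2.

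The main obstacle is Step 3's bookkeeping. This includes boundary and parity effects: the pairing of vertical edges around odd rows, the off-by-one in $m_Z(r,c)=Lc+L-r$ and the odd-row formula of $m_S$, and the same-column pairs. My first attempt would be to verify on a $4\times4$ lattice to fix conventions, then generalise. If the sets differ only by terms of the form $\mathrm{CZ}$ with $Z$-strings, I would absorb them into the stated single-qubit $Z$ correction, which acts as a sign redefinition of Majoranas on odd rows.
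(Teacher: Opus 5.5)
Your proposal is correct and follows the paper's proof. You reduce via Lemma~\ref{lem:FermPerm} to the set of order-reversed pairs, rewrite $C_{\mathrm{2D}}$ as $\prod_{r\,\mathrm{odd},c}U_{r,c}$ using Lemma~\ref{lem:colRowBasis}, compare the resulting $\mathbb{F}_2$ quadratic forms, and cancel the leftover diagonal odd-row terms with $Z$ gates. Your pair-by-pair indicator count is a more explicit version of the paper's index-renaming step, and carrying it out will show that every same-column pair is reversed, not just ``possibly'' some.
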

\begin{proof}
To switch between the ordering $m_{Z}$ and $m_{S}$, according to Lemma~\ref{lem:FermPerm}, we need to apply a CZ gate between all pairs:
\begin{equation*}
\begin{aligned}
    (m_{Z}(r, c) > m_{Z}(r', c')) &\text{ and } (m_{S}(r, c) < m_{S}(r', c')).
\end{aligned}
\end{equation*}

Broadly, these conditions require each qubit to interact with qubits in its upper-right sector, or equivalently, due to the symmetry of CZ gates, with qubits in its lower-left sector. The only distinction is that the row $r$ itself is excluded for even $r$, but included for odd $r$, up to the excluded self-interaction
\begin{gather*}
\prod_{(r,c)}
\prod_{(r',c') \in \Lambda_{r,c}}
\mathrm{CZ}_{(r,c),(r',c')},\\\\[0.5em]
\Lambda_{r,c}
=
\left\{
(r',c') :
\begin{array}{l}
(r',c')\neq (r,c),\\[0.3em]
c' \geq c,\\[0.3em]
r' < r \ (r\ \mathrm{even}),\\[0.3em]
r' \leq r \ (r\ \mathrm{odd})
\end{array}
\right\}.
\end{gather*}

As in the proof of Lemma~\ref{lem:FlowCZ}, this can be
understood at the level of the underlying Boolean logic. For $\mathbb{F}_2$-valued variables, we evaluate
\begin{equation}\label{eq:logicSwap}
\sum_{(r,c)}
\sum_{(r',c') \in \Lambda_{r,c}}
x_{(r,c)}x_{(r',c')}=1
\end{equation}
and apply a phase of $-1$ whenever this condition holds.

We compare this condition with the one induced by $C_{\mathrm{2D}}$. For odd $L$, all CZ gates in $C_{\mathrm{2D}}$ can be grouped into the pairings described in Lemma~\ref{lem:colRowBasis}. For even $L$, the same pairing works
for all gates except for an additional set involving edge qubits on one side of the lattice. We therefore focus on the odd-$L$ case; the even-$L$ case differs only by this edge contribution, which can be checked separately and gives the expected result.
Then
\[
U_{r,c}=\prod_{i' \in C_{r,c}} \prod_{j' \in R_{r,c}}
\mathrm{CZ}_{i',j'}
\]
and
\begin{equation*}
C_{\mathrm{2D}}=\prod_{r \text{ odd}}\prod_c U_{r,c}.
\end{equation*}
This gives the condition
\begin{gather*}
\sum_{\substack{(r,c)\\ r \text{ odd}}}
\left(\sum_{i' \in C_{r,c}} x_{i'}\right)
\left(\sum_{j' \in R_{r,c}} x_{j'}\right)=
\\[0.5em]
\sum_{\substack{(r,c)\\ r \text{ odd}}}
\left(\sum_{r'\geq r} x_{(r',c)}\right)
\left(\sum_{c' \geq c}
\bigl(x_{(r-1,c')}+x_{(r,c')}\bigr)\right)
=1.
\end{gather*}
We see that up to a change of names $r'\leftrightarrow r$
\[
\left(\sum_{\substack{r'\geq r\\ c'\geq c}} x_{(r',c)}x_{(r,c')}\right)=
\left(\sum_{\substack{r'\leq r\\ c'\geq c}} x_{(r,c)}x_{(r',c')}\right).
\]
Therefore, we have reproduced Eq.~\eqref{eq:logicSwap}, except for additional diagonal terms
\[
\sum_{\substack{(r,c)\\ r \text{ odd}}} x_{(r,c)}x_{(r,c)}.
\]
These terms can be canceled by applying
\[
\prod_{\substack{(r,c)\\ r \text{ odd}}} Z_{(r,c)}.
\]
\end{proof}

\subsection*{Switching between boustrophedon encodings}

We can directly apply the result of switching between $S$ and $Z$ patterns to demonstrate that the transformation $C_{\mathrm{2D}}$ in Fig.~\ref{fig:3} allows us to switch between any two boustrophedon encodings.

\begin{definition}[Boustrophedon encoding]
A boustrophedon encoding $m_{\mathcal{C}}$ is specified by an interval partition
\[
    \mathcal{C} = \{C_1,\dots,C_K\}
\]
of the column index set $\{0, \ldots, L-1\}$. That is, the sets $C_i$ are pairwise disjoint, satisfy
\[
    \bigcup_{i=1}^K C_i =\{0, \ldots, L-1\},
\]
and each $C_i$ is an interval of consecutive column indices
\[
    C_i = \{a_i, a_i+1, \dots, a_i+w_i-1\}.
\]
We refer to $w_i$ as width.
The corresponding canonical ordering is given by connected $S$ patterns
\begin{equation}
\label{eq:meander_order}
m_{\mathcal{C}}(r,c) = m^{(j)}_S(r, c) +\sum_{i=1}^{j-1}Lw_{i} \text{ if } c\in C_{j},
\end{equation}
where $m_{\mathcal{C}}$ follows the $S$ pattern $m^{(j)}_S$ through the subgrid defined by $C_j\times \{0, \ldots, L-1\}$.
\end{definition}

\begin{corollary}\label{cor:bous2D}
Switching between any two boustrophedon encodings $m_{\mathcal{C}}(r,c)$ and $m_{\mathcal{C'}}(r,c)$ can be achieved using $\mathcal{O}(N)$ nearest-neighbour and single-qubit Clifford gates and the same asymptotic depth scaling as the $C_{2D}$ transformation.
\end{corollary}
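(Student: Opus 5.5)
Our approach is to route every switch through the global $Z$-pattern. Write the transformation $m_{\mathcal{C}}\to m_{\mathcal{C}'}$ as $m_{\mathcal{C}}\to m_Z\to m_{\mathcal{C}'}$. Clifford circuits implementing orderings compose, so it suffices to show that for a single partition $\mathcal{C}$ the map $m_Z\to m_{\mathcal{C}}$ is realised by $\mathcal{O}(N)$ nearest-neighbour Clifford gates in depth $\mathcal{O}(\sqrt N)$. The reverse direction then follows by inverting the circuit: it consists of CNOTs, CZs and $Z$ gates, and the $C_{2D}$ theorem already holds in either direction.

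For the map $m_Z\to m_{\mathcal{C}}$, apply $C_{2D}$ independently on each vertical strip $C_j\times\{0,\dots,L-1\}$. That is, run column and odd-row CNOT ladders restricted to the strip, the column CZs, the inverse ladders, and finally the $Z$ corrections on odd rows of the strip. The strips are disjoint, so all these circuits run in parallel. The total gate count is $\sum_j\mathcal{O}(Lw_j)=\mathcal{O}(N)$, and the depth is at most $\mathcal{O}(L)=\mathcal{O}(\sqrt N)$, or $\mathcal{O}(\log N)$ with row and column connectivity.

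For correctness, use Lemma~\ref{lem:FermPerm}: the required CZ set is determined only by the pairs whose relative order is reversed between $m_Z$ and $m_{\mathcal{C}}$. The key observation is that the $Z$-pattern is column-first, so every qubit in strip $C_i$ precedes every qubit in strip $C_j$ for $i<j$ under $m_Z$. By Eq.~\eqref{eq:meander_order} the same holds under $m_{\mathcal{C}}$. Hence no cross-strip pair is reversed. Within a strip, the restriction of $m_Z$ is (up to an additive shift) the $Z$-pattern of the $L\times w_j$ subgrid, and the restriction of $m_{\mathcal{C}}$ is its $S$-pattern $m_S^{(j)}$. The reversed pairs inside the strip are therefore exactly those handled by the $C_{2D}$ theorem applied to that subgrid. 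Since CZ gates commute, the product of the per-strip circuits applies precisely the CZ set required by Lemma~\ref{lem:FermPerm}.

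The main obstacle is that the $C_{2D}$ theorem was proved for a square $L\times L$ grid, and with explicit care only for odd side length, whereas a strip is $L\times w_j$ and $w_j$ may be even (even $w_j=1$). I would handle this first by checking that the proof of Lemma~\ref{lem:colRowBasis} and of the theorem never uses squareness. The labels and the pairing $\ell(r+1,c)\triangle\ell(r-1,c)=R_{r,c}$ depend only on the row structure within each column, and the column range enters only through "$c'\ge c$". The theorem's edge caveat concerns the parity of $L$ (the number of rows), which is the same for every strip and is already covered by the theorem's remark. What remains to verify is the orientation convention of $m_S^{(j)}$: the row that starts each snake must match the one produced by $C_{2D}$ with its $Z$ corrections. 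If it does not, it can be fixed by reflecting the column index within the strip, which is again a strip-local rewrite of the same circuit. Finally, for gate counting, the ladders $C_\uparrow$ of the first half and $C_\uparrow^\dagger$ of the second half are identical across both legs. They cancel in the middle of the composite $m_{\mathcal{C}}\to m_Z\to m_{\mathcal{C}'}$, which yields the constant $6N$ quoted in the main text without affecting the asymptotic claim.
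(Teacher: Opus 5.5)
Your proposal is correct and follows the paper's proof: route $m_{\mathcal{C}}\to m_Z\to m_{\mathcal{C}'}$ by applying $C_{2D}$ independently on each column strip. Your cross-strip argument via Lemma~\ref{lem:FermPerm} and your remark on $L\times w_j$ strips spell out what the paper leaves implicit. The orientation check you flag is automatic: $m_S^{(j)}$ is by definition the translated $S$-pattern of the strip and $m_Z$ restricts to the translated $Z$-pattern, so the reflection fallback is never needed. That fallback would not be harmless anyway, since mirroring the columns also reverses their order in the intermediate $Z$-pattern.
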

\begin{proof}
We can switch between any two boustrophedon encodings via the sequence:
\begin{enumerate}
    \item Transform the $m_{\mathcal{C}}$ encoding into an $m_Z$ encoding by applying $C_{\mathrm{2D}}$ on each subgrid $C_i \in \mathcal{C}$. This is possible because $m_{\mathcal{C}}$ follows an $S$ pattern within every subgrid $C_i \in \mathcal{C}$. The result is a collection of local $Z$-patterns on the subgrids $C_i$, which together form a $Z$-pattern over the entire grid.

    \item Transform the resulting $m_Z$ encoding into the
    $m_{\mathcal{C}'}$ encoding by applying $C_{\mathrm{2D}}$ on each subgrid $C'_i \in \mathcal{C}'$.
\end{enumerate}
This sequence requires $\mathcal{O}(N)$ gates and only the $C_{2D}$ circuit contributes non-constant depth.
\end{proof}

\section{Constant depth encoding switching via lattice surgery}
\label{app:latticeSurgery}

\begin{figure*}
    \centering
    \includegraphics[width=\textwidth]{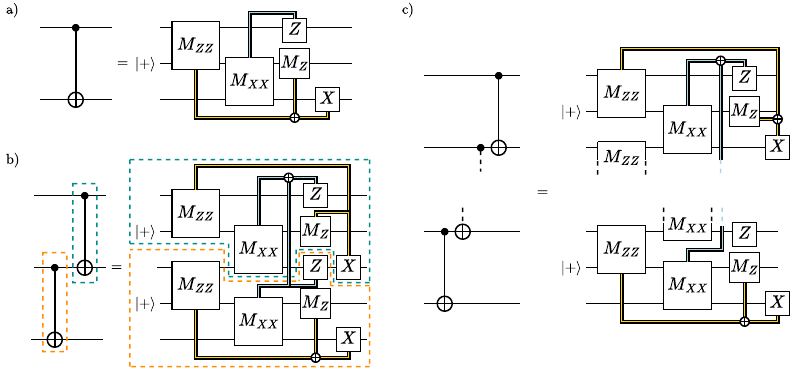}
    \caption{CNOT gates via lattice surgery \cite{campbellRoadsTowards2017}. a) Implementation of a single CNOT using two-qubit Pauli measurements. b) Two sequential CNOT gates implemented using two-qubit Pauli measurements, where the $Z$ correction of the first CNOT is commuted through the $XX$ measurement of the second CNOT. c) An Hadamard-transformed CNOT ladder implemented using two-qubit Pauli measurements in $\mathcal{O}(1)$ depth.}
    \label{fig:LatticeSurgery}
\end{figure*}
This section demonstrates how the proposed encoding switches can be implemented in constant gate depth, using lattice surgery on surface codes.
Lattice surgery works by measuring local stabilizers between boundaries of different surfaces codes, to either merge or split the boundaries of two different surface codes. 

Measuring the $X$ stabilizers between two rough boundaries merges these boundaries and induces a joint Pauli $XX$ measurement of the two logical qubits.
Similarly, measuring the $Z$ stabilizers between two smooth boundaries merges them and induces a joint $ZZ$ measurement.
A surface code can be split into two by measuring a line of data qubits in the $X$ basis to create a smooth boundary or in the $Z$ basis to create a rough boundary.
We can use the two-qubit Pauli measurements resulting from the merging of two surface codes to apply a CNOT gate between two surface codes \cite{campbellRoadsTowards2017}:
\begin{enumerate}
    \item Introduce an auxiliary qubit initialized in the $\ket{+}$ state, between the logical control and logical target of the desired CNOT gate. 
    \item Merge and split the smooth boundary of the control qubit and the auxiliary qubit.
    \item Merge and split the rough boundary of the target qubit and the auxiliary qubit.
    \item Measure the auxiliary qubit in the $Z$ basis.
    \item Correct the control qubit with a $Z$ gate if the $XX$ measurement outcome is $-1$.
    \item Correct the target qubit with an $X$ gate if the product of the $ZZ$-and $Z$ measurement is $-1$.
\end{enumerate}

\begin{corollary}
The circuit $C_{2D}$ can be implemented using $\mathcal{O}(N)$ gates in $\mathcal{O}(1)$ depth via surface codes using lattice surgery.
\end{corollary}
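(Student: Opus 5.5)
The plan is to decompose $C_{2D}=C_{\uparrow}^{\dagger} C_{\leftarrow}^{\dagger} C_Z C_{\leftarrow} C_{\uparrow}$ into its constituent layers and show that each layer can be executed in $\mathcal{O}(1)$ logical depth with $\mathcal{O}(N)$ lattice-surgery operations.

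\textbf{Easy layers.} The middle layer $C_Z$ consists of nearest-neighbour CZ gates along columns. It splits into two layers of disjoint pairs, even and odd bonds. Each CZ is a CNOT conjugated by Hadamards on the target, or equivalently a $ZZ$-type surgery with an auxiliary patch, so $C_Z$ has constant depth. All single-qubit $Z$ corrections, including the odd-row $Z$ gates from the preceding theorem, are Pauli frame updates and cost nothing.

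\textbf{CNOT ladders.} Each ladder is a product of CNOTs sharing qubits in a chain, $\mathrm{CNOT}_{k+1,k}\cdots$, so in the circuit model it is inherently sequential. I would follow Fig.~\ref{fig:LatticeSurgery}(b,c), and the argument has three parts.

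\emph{Gadget.} Realize each $\mathrm{CNOT}_{i,j}$ by the three-step gadget: an auxiliary $\ket{+}$, a $ZZ$ merge with the control, an $XX$ merge with the target, then a $Z$ measurement of the auxiliary.

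\emph{Commuting corrections.} The Pauli corrections are deferred to the end of the ladder. This needs two checks. First, the $Z$ correction on a control commutes past subsequent $ZZ$ measurements. Second, where it meets an $XX$ measurement of a later CNOT, it only flips that measurement's recorded outcome rather than changing the measurement itself. The corrections can therefore be propagated classically through the Clifford ladder and applied at the end as a single Pauli layer, computed by a classical parity (prefix-sum) calculation.

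\emph{Scheduling.} With corrections deferred, there is no quantum dependency between consecutive gadgets. The only subtlety is that a data patch participates in two merges: as the target of one CNOT and as the control of the next. I would handle this by using opposite boundaries of each patch, smooth boundary for $ZZ$ and rough for $XX$, and by splitting the merges into even and odd layers. These Pauli measurements act on different Pauli types, and I would check that they commute as operators on the logical qubits. Since all measured operators then commute, they can be performed simultaneously, in at most two rounds, giving constant depth independent of the ladder length $L$.

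\textbf{Main obstacle.} The step I expect to be hardest is verifying that after deferring all byproducts the net Pauli frame is correct and the measurements genuinely commute. This requires the Heisenberg-picture bookkeeping of Fig.~\ref{fig:LatticeSurgery}(c), which I would carry out by induction along the ladder.

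\textbf{Layout and counting.} I would also check that the auxiliary patches fit geometrically. Placing one auxiliary patch per lattice bond gives only constant space overhead, and the column ladders $C_{\uparrow}$ (all columns in parallel) and the odd-row ladders $C_{\leftarrow}$ use disjoint bonds, so these layers can be laid out without interference. Counting one gadget per CNOT gives $\mathcal{O}(N)$ operations. Composing the constant number of constant-depth layers then gives $\mathcal{O}(1)$ total depth, which proves the corollary.
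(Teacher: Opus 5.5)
\textbf{There is a genuine gap in your scheduling step, which is the step that must deliver the constant depth.}

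In your ladder, a shared patch $q$ is the target of one gadget and then the control of the next. It therefore takes part first in $X_{a}X_{q}$ and afterwards in $Z_{q}Z_{a'}$. These two logical operators overlap only on $q$, with $X$ against $Z$, so they anticommute. Routing them through the rough and smooth boundaries does not change this, because commutation is a property of the logical Paulis, not of the boundaries used.

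Inside each gadget the order is also forced: $ZZ$ with the control comes before $XX$ with the target. Together these give the chain $Z_{q_0}Z_{a_0}\to X_{a_0}X_{q_1}\to Z_{q_1}Z_{a_1}\to X_{a_1}X_{q_2}\to\cdots$, in which consecutive elements anticommute. You cannot perform them simultaneously or regroup them into even/odd layers without changing the implemented map. Executed faithfully, your schedule has depth linear in $L$, so your ``main obstacle'' check would fail.

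Your byproduct bookkeeping also does not fit this orientation. A deferred $Z$ on a control flipping a later $XX$ outcome happens when a patch is a control first and a target afterwards. In your target-then-control ladder, it is the $X$ byproduct on the target that hits the next gadget's $ZZ$ merge.

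The missing idea, used in the paper, is to build the \emph{reversed} ladder, in which each new CNOT targets the control of the previous one. Then every data patch is used first in a $ZZ$ merge (as control) and only later in an $XX$ merge (as target). All precedence constraints are then satisfied by three global rounds: all $ZZ$ merges, then all $XX$ merges, then all ancilla $Z$ measurements. This works even though not all of these operators commute. The only cross-gadget effect is the control's $Z$ correction flipping the next $XX$ outcome, which is tracked classically.

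A layer of Hadamards on all data patches at the start and end then swaps control and target of every CNOT. This turns the reversed ladder into the accumulating ladder needed for $C_{\uparrow}$ and $C_{\leftarrow}$; this is the ``Hadamard-transformed'' ladder of Fig.~\ref{fig:LatticeSurgery}(c) that your proposal omits. Equivalently, you could keep your orientation and switch to the mirrored gadget: ancilla in $\ket{0}$, $XX$ with the target before $ZZ$ with the control, then an $X$ measurement of the ancilla.

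With this change, the rest of your argument goes through: the constant-depth $C_Z$ layer, the Pauli-frame corrections, and the $\mathcal{O}(N)$ count.
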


\begin{proof}
    The depth of the $C_{\mathrm{2D}}$ circuit, as discussed in Definition~\ref{def:C2D}, is determined by CNOT ladder depths.
    Here we discuss how to implement a CNOT ladder with constant depth. First, define a reverse CNOT ladder by recursively adding a CNOT that targets the control of the last CNOT.
    Then substitute each CNOT by the lattice surgery procedure described above and depicted as quantum circuit in Fig.~\ref{fig:LatticeSurgery}a).
    We find that for each sequential CNOT, the $XX$-measurement outcome is flipped by the $Z$-gate correction of the control of the previous CNOT. 
    Therefore we can simply perform the $XX$ measurement before the corrective $Z$ gate and flip the measurement result if the $Z$ gate was indeed applied, as demonstrated for two CNOTs in Fig.~\ref{fig:LatticeSurgery}b).
    At last, to reverse the CNOT ladder in Fig.~\ref{fig:LatticeSurgery}c), we apply a single round of Hadamard gates at the start and end of the circuit.
    Consequently, we find a quantum circuit that requires $3$ rounds of logical measurements and $4$ rounds of single-qubit Clifford gates, resulting in a constant circuit depth for implementing a CNOT ladder.
\end{proof}

\section{Locality conservation in subgrid coverage}
\label{app:higher_dimensional_conservation}

In this section we discuss how one can define multiple partitions of a $d$-dimensional qubit grid $G = \{0,\dots,L-1\}^d$
into constant size subgrids, such that all geometrically local points

\begin{equation*}
    \|x_i - x_j\|_\infty \le \delta
\end{equation*}

are contained inside at least one subgrid. This is the key motivation behind boustrophedon encodings, which allows us to map fermionic interactions between modes contained inside one subgrid to qubit interactions contained to a corresponding subgrid.

\begin{definition}[Subgrid coverage]\label{def:subgrid}
Consider the constant-size one-dimensional intervals 
\[
I_j := \{k : \max\{0,j\} \le k \le \min\{L-1,j+2\delta-1\}\}.
\]
These give rise to $d$-dimensional subgrids via the Cartesian product
\begin{equation*}
    I_{i_1,\ldots, i_d} = I_{i_1} \times I_{i_2} \times \ldots I_{i_d}.
\end{equation*}
We use them to define $2^d$ shifted partitions of $G$
\begin{equation*}
\mathcal{G}_b
=
\left\{
I_{2\delta n_1-\delta b_1,\ldots,2\delta n_d-\delta b_d}
:
n=(n_1,\ldots,n_d)\in \mathbb{Z}_{\ge 0}^d
\right\},
\end{equation*}
with the shift $b\in\{0,1\}^d$. 
For any partition of $G$ it holds that 
\[
\forall I \neq I' \in \mathcal{G},\quad I \cap I' = \varnothing,
\qquad
\bigcup_{I \in \mathcal{G}} I = G.
\]

\end{definition}

\begin{lemma}\label{lem:localPartition}
Consider pairs of geometrically local points $\|x_i - x_j\|_\infty \le \delta$ and the $2^d$ partitions $\mathcal{G}_b$.
At least one such partition includes a subgrid $I$ with $x_i, x_j \in I$.
\end{lemma}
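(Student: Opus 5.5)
The argument reduces to one dimension and then takes a Cartesian product. Since both the $\ell_\infty$ condition and the subgrids $I_{i_1,\ldots,i_d}$ factor coordinatewise, it suffices to prove the following one-dimensional claim. For any two integers $u,v\in\{0,\dots,L-1\}$ with $|u-v|\le\delta$, there is a shift $\beta\in\{0,1\}$ and an $n\ge 0$ such that $u,v\in I_{2\delta n-\delta\beta}$.

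Granting the claim, the $d$-dimensional case follows directly. Let $x_i=(u_1,\dots,u_d)$ and $x_j=(v_1,\dots,v_d)$. Apply the claim in each coordinate $k$ to obtain $\beta_k$ and $n_k$, and set $b=(\beta_1,\dots,\beta_d)$. Then
\[
x_i,x_j\in I_{2\delta n_1-\delta b_1}\times\cdots\times I_{2\delta n_d-\delta b_d},
\]
and this product is a subgrid in $\mathcal{G}_b$.

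For the one-dimensional claim, fix $\beta$ and consider the intervals $I_{2\delta n-\delta\beta}$. Clipped to $[0,L-1]$, they are the blocks $[2\delta n-\delta\beta,\,2\delta n-\delta\beta+2\delta-1]$ of length $2\delta$. Their left endpoints lie on the lattice $2\delta\mathbb{Z}-\delta\beta$, and the block with $n=0$ and $\beta=1$ is truncated to $[0,\delta-1]$. These blocks tile $\{0,\dots,L-1\}$.

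The two shifts together place block boundaries only at multiples of $\delta$. With $\beta=0$, a cut lies between $k-1$ and $k$ exactly when $k\in 2\delta\mathbb{Z}$. With $\beta=1$, a cut lies there exactly when $k\in 2\delta\mathbb{Z}+\delta$.

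Assume without loss of generality $u\le v\le u+\delta$. Then $u$ and $v$ are separated by shift $\beta$ only if some cut position $k$ satisfies $u<k\le v$. Any two cut positions from different shifts differ by at least $\delta$. The half-open window $(u,v]$ has length at most $\delta$, so it contains at most one multiple of $\delta$. Hence at most one of the two shifts separates $u$ and $v$, and the other places them in a common block, which proves the claim.

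Two points need care, and checking them is the main work. First, the window length bound is exact: a window of length exactly $\delta$ can contain one multiple of $\delta$ but not two, and that one multiple belongs to exactly one shift class, which gives the strict separation we need. Second, at the left boundary the block $I_{-\delta}$ is clipped to $[0,\delta-1]$ by the $\max\{0,j\}$ in the definition. The right boundary is clipped similarly by $\min\{L-1,\cdot\}$. Clipping adds no new cuts, so it does not affect the argument, and the index $n$ ranges over $\mathbb{Z}_{\ge0}$ as required.

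The main obstacle I expect is this bookkeeping of cut positions and clipping. The reduction from $d$ dimensions to one is immediate.
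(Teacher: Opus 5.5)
Your proof is correct and follows essentially the same route as the paper: you prove the one-dimensional statement for the two shifted partitions $\mathcal{G}_0,\mathcal{G}_1$, then take the Cartesian product of the coordinatewise intervals to obtain a subgrid in some $\mathcal{G}_b$. The only difference is in phrasing the one-dimensional step. You count cut positions (multiples of $\delta$) in the window $(u,v]$, whereas the paper shows directly that points split by $\mathcal{G}_0$ into adjacent blocks $I_i, I_{i+2\delta}$ satisfy $i+\delta\le a<i+2\delta\le b<i+3\delta$, and therefore lie together in $I_{i+\delta}\in\mathcal{G}_1$.
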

\begin{proof}
We start with the one-dimensional grid $\{0,\dots,L-1\}$ and the two partitions
\begin{equation*}
\begin{aligned}
    \mathcal{G}_0 &= \{I_0, I_{2\delta}, I_{4\delta}, \ldots\}\\
    \mathcal{G}_1 &= \{I_{-\delta}, I_{\delta}, I_{3\delta}, \ldots\}.
\end{aligned}
\end{equation*}
Consider any two points with $|b-a|\le \delta$. Lets assume that they are contained in different subgrids of $\mathcal{G}_0$, meaning they must be in neighboring subgrids $a \in I_{i}$ and $b \in I_{i+2\delta}$. It follows
\[
i+\delta \le a < i+2\delta \le b < i+3\delta,
\]
and therefore $a,b \in I_{i+\delta} \in \mathcal{G}_1$. This completes the one-dimensional setting.

As a direct generalization, consider the $d$-dimensional grid $\{0,\dots,L-1\}^d$ with partitions $\mathcal{G}_b$ and points $\|x_i-x_j\|_\infty \le \delta$. Then, in each coordinate direction $k$, we have a one-dimensional interval $I_{\ell_k}$ such that
\[
(x_i)^{(k)}, (x_j)^{(k)} \in I_{\ell_k}.
\]
The Cartesian product of these intervals corresponds to a subgrid that contains both $x_i$ and $x_j$, which itself is part of a partition $\mathcal{G}_b$.
\end{proof}
The generalization to grids with differing lengths $L_i$ is straightforward.

\section{Asymptotic-optimality proof for 2D local lattice models}
\label{app:locality_conservation}
\begin{figure}
    \centering
    \includegraphics[width = 0.65\columnwidth]{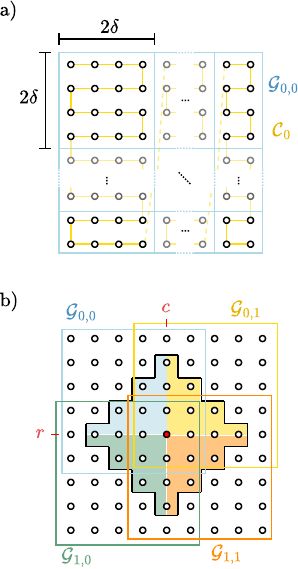}
    \caption{a) The partitions $\mathcal{G}_{0,0}$ of the grid in blue and the boustrophedon encoding $m_{\mathcal{C}_{0}}$ in yellow corresponding to a Hamiltonian path through each subgrid.
    b) Four overlapping subgrids of different partitions. The qubit $(r,c)$ is highlighted in red and any other qubit within the maximal interaction distance always falls in one of the four subgrids.}
    \label{fig:locality_proof}
\end{figure}

Here, we show that all geometrically local fermionic interactions on a two-dimensional square lattice $G_{\mathrm{2D}} = \{0,\dots,L-1\}^2$ can be implemented on a corresponding qubit lattice using $\mathcal{O}(N)$ nearest-neighbor qubit interactions in $\mathcal{O}(\sqrt{N})$ depth.
The corresponding definition of nearest-neighbor edges is
\begin{equation*}
    E_{2D}=\{\{x_1,x_2\}\subset G_{2D} : \|x_2-x_1\|_1 = 1\}.
\end{equation*}
In this two-dimensional context, we label the indices with $r$ and $c$ for clarity.
We begin by considering a subgrid \(I_{r,c}\subset G_{\mathrm{2D}}\), as in Definition~\ref{def:subgrid}, and the induced edge set
\begin{equation*}
    E_{r,c}=\{\{x_i,x_j\} \in E_{\mathrm{2D}} : x_i, x_j \in I_{r,c}\}.
\end{equation*}
Our first goal is to implement all pairwise interactions between the modes contained in $I_{r,c}$.
\begin{definition}[Hamiltonian path]
Let $m$ be a canonical ordering and let $I_{r,c}$ be a subgrid containing $|I_{r,c}|$ points with induced edge set $E_{r,c}$. There is a minimal canonical rank among the points contained in the subgrid
\[
    m_{\min} = \min_{x \in I_{r,c}} m(x).
\]
We say that $m$ follows a Hamiltonian path through $I_{r,c}$ if
\[
    m(I_{r,c}) = \{m_{\min}, m_{\min}+1, \dots, m_{\min}+|I_{r,c}|-1\},
\]
and for all $i \in \{m_{\min},\dots,m_{\min}+|I_{r,c}|-2\}$
\[
    \{m^{-1}(i), m^{-1}(i+1)\} \in E_{r,c}.
\]
\end{definition}

A canonical ordering that follows a Hamiltonian path through a subgrid $I_{r,c}$ of the qubit hardware allows us to apply FSNs locally on that subgrid. Because each subgrid has constant size, all pairwise fermionic interactions within such a subgrid can be implemented using constant gate count and constant depth. By Lemma~\ref{lem:localPartition}, the lattice $G_\mathrm{2D}$ allows for four shifted partitions into constant-size sublattices that collectively contain all geometrically local interactions. It therefore remains to construct an encoding whose canonical ordering restricts to a Hamiltonian path on each subgrid of such a partition.

\begin{lemma}\label{lem:boustrophedonSubgrid}
    A boustrophedon encoding $m_{\mathcal{C}}$ corresponds to a Hamiltonian path through the subgrid $I_{r,c}$ if 
\begin{equation*}
    \exists C \in \mathcal{C} \text{ s.t. } C = \{c, ..., c+2\delta-1\}.
\end{equation*}
\end{lemma}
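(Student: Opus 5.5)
The plan is to unfold the definition of $m_{\mathcal{C}}$ from Eq.~\eqref{eq:meander_order} on the column block $C=\{c,\dots,c+2\delta-1\}\in\mathcal{C}$. I will then check the two conditions of the Hamiltonian-path definition directly for the subgrid $I_{r,c}=I_r\times I_c$, where $I_r=\{r,\dots,r+2\delta-1\}$ (clipped to the lattice) is the row interval and $I_c$ is the column interval. Since $C$ is an interval of the column partition, $I_c=C$ (up to boundary clipping, where $C$ is also clipped), and the subgrid consists of the rows $r,\dots,r+2\delta-1$ of the $C$-strip. Within this strip, $m_{\mathcal{C}}$ is the $S$-pattern $m_S^{(j)}$ shifted by the constant $\sum_{i<j}Lw_i$. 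Because a constant shift preserves both consecutiveness of ranks and adjacency of successive ranks, I will work with the $S$-pattern on a strip of width $w=2\delta$ alone, reading Eq.~\eqref{eq:2D_S_pattern} with $L$ replaced by the strip width and $c$ by the local column index.

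\textbf{Step 1 (consecutive ranks).} The $S$-pattern is row-first, so each row $r'$ of the strip occupies the rank block $\{wr',\dots,wr'+w-1\}$. In Eq.~\eqref{eq:2D_S_pattern} the odd-row offset should be read as $w-1-c$ so that ranks stay in range; I would flag and fix this indexing convention. The union over $r'=r,\dots,r+2\delta-1$ of these blocks is the contiguous set $\{wr,\dots,w(r+2\delta)-1\}$. Since the whole width of $C$ lies in $I_c$, this set is exactly $m(I_{r,c})$, and it has $|I_{r,c}|$ elements starting at $m_{\min}=wr$ (plus the shift).

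\textbf{Step 2 (adjacency).} Consecutive ranks inside a row differ by one column step and are therefore neighbors in $E_{2D}$. At a row transition from rank $wr'+w-1$ to $wr'+w$, the alternating direction of the snake puts both points in the same column (the last column of an even row and the first column visited in the following odd row, and vice versa). They thus differ by one row step, which is again an edge. Since both endpoints lie in $I_{r,c}$, every such edge belongs to $E_{r,c}$.

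The main obstacle is not the geometry but bookkeeping. First, Eq.~\eqref{eq:2D_S_pattern} as written uses the global $L$ and an offset $L-c$ instead of $w-1-c_{\mathrm{loc}}$, which has to be reconciled with the local strip. Second, the clipped intervals at the lattice boundary need handling, where $I_{-\delta}$ or a final short interval yields strips of width less than $2\delta$. In those cases the hypothesis should be read as $C=I_c$, and the same argument goes through verbatim, since it used only that the strip width equals the width of $I_c$.
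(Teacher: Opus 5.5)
Your proposal is correct and takes essentially the same route as the paper: the paper's proof simply notes that $m_{\mathcal{C}}$ runs an $S$ pattern through the strip $C\times\{0,\dots,L-1\}$, so it restricts to a contiguous Hamiltonian path on any subgrid spanning exactly the columns of $C$. You spell out what the paper leaves implicit (contiguity of the row blocks, adjacency at the row turns, boundary clipping), and your fix of the odd-row offset to $w-1-c$ is warranted, since the formula as written in Eq.~\eqref{eq:2D_S_pattern} is not a bijection onto $\{0,\dots,N-1\}$.
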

\begin{proof}
Under the boustrophedon encoding, the qubits canonical ranking is ordered along an $S$ pattern through the columns in $C$. Hence, for each subgrid $I_{r,c}$ spanning these columns, the ordering restricts to a contiguous Hamiltonian path through $I_{r,c}$.
\end{proof}

\begin{theorem}\label{theo:LocalImplementation}
    All pairwise fermionic interactions between geometrically local modes $\|x_i - x_j\|_\infty \le \delta$ can be implemented using $\mathcal{O}(N)$ gates and the same asymptotic depth scaling as the $C_{2D}$ transformation. Consequently, this requires $\mathcal{O}(N)$ gates and has depth $\mathcal{O}(\sqrt{N})$ on a square qubit lattice, $\mathcal{O}(\log N)$ with all-to-all connectivity, or $\mathcal{O}(1)$ depth with lattice surgery based surface codes.
\end{theorem}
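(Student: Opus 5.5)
The plan is to combine the four shifted partitions of Lemma~\ref{lem:localPartition} with boustrophedon encodings that follow a Hamiltonian path through every subgrid of a given partition. In each such encoding, constant-size fermionic swap networks can then be run in parallel on all subgrids.

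Fix $d=2$ and the four partitions $\mathcal{G}_{b}$, $b\in\{0,1\}^2$. By Lemma~\ref{lem:localPartition}, every pair with $\|x_i-x_j\|_\infty\le\delta$ lies in a common subgrid of at least one $\mathcal{G}_b$. I will group the partitions by their column shift $b_2$. For $b_2=0$, take the column interval partition $\mathcal{C}_0=\{\{0,\dots,2\delta-1\},\{2\delta,\dots,4\delta-1\},\dots\}$. For $b_2=1$, take $\mathcal{C}_1=\{\{0,\dots,\delta-1\},\{\delta,\dots,3\delta-1\},\dots\}$, with truncated boundary intervals matching the truncation in $I_j$.

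Lemma~\ref{lem:boustrophedonSubgrid} needs a slight extension here. Within each column strip $C$, the $S$ pattern sweeps whole rows of width $|C|\le 2\delta$ consecutively. Hence any row band $I_{r}\times C$ of height at most $2\delta$ that is aligned to rows of the strip is visited as a contiguous block, and consecutive ranks are lattice neighbours (both within a row and at the turn between rows). This holds for every row offset $r$. So $m_{\mathcal{C}_{b_2}}$ follows a Hamiltonian path through every subgrid of both $\mathcal{G}_{(0,b_2)}$ and $\mathcal{G}_{(1,b_2)}$. Boundary subgrids truncated by $\max/\min$ in $I_j$ need a separate check, but they are still rectangles aligned with the strips, so the same argument applies. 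Consequently two encodings suffice, matching Fig.~\ref{fig:3}(b).

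The algorithm runs in the following steps.
\begin{enumerate}
\item Switch from the initial encoding to $m_{\mathcal{C}_0}$ using Corollary~\ref{cor:bous2D}.
\item On every subgrid of $\mathcal{G}_{(0,0)}$, run in parallel a fermionic swap network that brings every pair of the $\mathcal{O}(\delta^2)$ modes adjacent once, and apply the interactions there. This is legitimate because the JW strings of a contiguous Hamiltonian-path block never leave the block, so gates on disjoint subgrids commute and act locally. Each subgrid costs $\mathcal{O}(\delta^4)=\mathcal{O}(1)$ gates and $\mathcal{O}(\delta^2)$ depth.
\item Undo the swap network, or simply track the permuted order, which stays inside each block.
\item Repeat step 2 for $\mathcal{G}_{(1,0)}$.
\item Switch to $m_{\mathcal{C}_1}$ and treat $\mathcal{G}_{(0,1)}$ and $\mathcal{G}_{(1,1)}$ in the same way.
\end{enumerate}

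Counting resources: there are $N/(4\delta^2)$ subgrids per partition, so the local phases cost $\mathcal{O}(N)$ gates and $\mathcal{O}(1)$ depth. The constant number of encoding switches costs $\mathcal{O}(N)$ gates, with the depth of $C_{2D}$ by Corollary~\ref{cor:bous2D}. That depth is $\mathcal{O}(\sqrt N)$ on the square lattice and $\mathcal{O}(\log N)$ with all-to-all connectivity via log-depth CNOT ladders. With lattice surgery it is $\mathcal{O}(1)$ by the preceding corollary, since the CZ layers and single-qubit $Z$ corrections are constant depth.

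The main obstacle is step 3 together with the Hamiltonian-path claim for row-shifted subgrids. The fermionic swap network permutes ranks inside a block, so I must argue that the block remains contiguous and the permuted order is still a Hamiltonian path on the block's nearest-neighbour graph. The cleanest option is to run the swap network and then its inverse, which at most doubles the constant. I also need to check the $S$-pattern row turns when a subgrid's starting row has a different parity from the strip's turning convention; adjacency at the turn still holds because the turn is between vertically adjacent end cells.
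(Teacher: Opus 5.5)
Your proposal is correct and follows the paper's proof essentially step for step. It uses the four shifted partitions of Lemma~\ref{lem:localPartition}, the two boustrophedon encodings $m_{\mathcal{C}_0}$ and $m_{\mathcal{C}_1}$ grouped by column shift, parallel constant-size FSNs on every subgrid, and a single encoding switch (Corollary~\ref{cor:bous2D}) that dominates the depth. The one place you go beyond the paper is undoing each FSN before treating the row-shifted partition, and this is needed rather than merely convenient: an all-pairs odd-even network reverses the modes along each block's path, so if you only tracked the order, the blocks of $\mathcal{G}_{(1,b_2)}$ would not contain the geometrically adjacent modes that straddle the old block boundaries.
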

\begin{proof}
We start from the partitions given by Definition~\ref{def:subgrid} 
\begin{equation*}
\begin{aligned}
    \mathcal{G}_{0,0}&=\{ I_{r,c} : (r,c) \in \{0,2\delta,\ldots\}\times \{0,2\delta,\ldots\} \}\\
    \mathcal{G}_{0,1}&=\{ I_{r,c} : (r,c) \in \{0,2\delta,\ldots\}\times \{-\delta, \delta, \ldots\} \}\\
    \mathcal{G}_{1,0}&=\{ I_{r,c} : (r,c) \in \{-\delta, \delta, \ldots\}\times \{0,2\delta,\ldots\} \}\\
    \mathcal{G}_{1,1}&=\{ I_{r,c} : (r,c) \in \{-\delta, \delta, \ldots\}\times \{-\delta, \delta, \ldots\}\}.
\end{aligned}
\end{equation*}

We also introduce the two boustrophedon encodings, $m_{\mathcal{C}_{0}}$ and $m_{\mathcal{C}_{1}}$, defined by
\begin{equation*}
\begin{aligned}    
    \mathcal{C}_0 &= \{I_0, I_{2\delta}, I_{4\delta}, \ldots\}\\
    \mathcal{C}_1 &= \{I_{-\delta}, I_{\delta}, I_{3\delta}, \ldots\}.
\end{aligned}
\end{equation*}
By Lemma~\ref{lem:boustrophedonSubgrid}, $m_{\mathcal{C}_0}$ corresponds to a Hamiltonian path through each subgrid in $\mathcal{G}_{0,0}$ (see Fig.~\ref{fig:locality_proof}) and $\mathcal{G}_{1,0}$, while $m_{\mathcal{C}_1}$ corresponds to a Hamiltonian path through each subgrid in $\mathcal{G}_{0,1}$ and $\mathcal{G}_{1,1}$. This allows us to implement all pairwise interactions between geometrically local modes with the sequence:
\begin{enumerate}
    \item Initialize in the $m_{\mathcal{C}_{1}}$ boustrophedon encoding.
    \item Apply a FSN within each subgrid in $\mathcal{G}_{0,0}$ in parallel.
    \item Apply a FSN within each subgrid in $\mathcal{G}_{1,0}$ in parallel.
    \item Switch from boustrophedon encoding $m_{\mathcal{C}_{1}}$ to $m_{\mathcal{C}_{2}}$ as described in Corollary~\ref{cor:bous2D}. 
    \item Apply a FSN within each subgrid in $\mathcal{G}_{0,1}$ in parallel.
    \item Apply a FSN within each subgrid in $\mathcal{G}_{1,1}$ in parallel.
\end{enumerate}
All steps except the encoding switches can be implemented in constant depth. Thus, the asymptotic gate-count and depth scaling is determined by the cost of switching between the boustrophedon encodings.
\end{proof}

\begin{corollary}\label{cor:AddingInteractions}
Including additional fermionic interactions between modes $\|x_i - x_j\|_\infty \le \delta + \mathcal{O}(1)$ to the protocol given in Theorem~\ref{theo:LocalImplementation}, increases the circuit depth by at most an additive $\mathcal{O}(1)$ term. Therefore, this does not affect the leading prefactor of the $\mathcal{O}(\sqrt{N})$ depth scaling.
\end{corollary}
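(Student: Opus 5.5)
The plan is to rerun the protocol of Theorem~\ref{theo:LocalImplementation} with the enlarged interaction range $\delta' = \delta + k$, where $k=\mathcal{O}(1)$, and then check which parts of the cost actually change. The construction has two kinds of cost. The encoding switch between $m_{\mathcal{C}_0}$ and $m_{\mathcal{C}_1}$ contributes the $\mathcal{O}(\sqrt{N})$ part. The local fermionic swap networks on the subgrids contribute a constant part. The corollary follows if enlarging the range leaves the leading term of the switch untouched and only enlarges the constant part by $\mathcal{O}(1)$.

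First, I replace the subgrid coverage of Definition~\ref{def:subgrid} by intervals of length $2\delta'$ and the shifted partitions $\mathcal{G}_b$ with shift $\delta'$. Lemma~\ref{lem:localPartition} then guarantees that every pair with $\|x_i-x_j\|_\infty\le\delta'$ lies in a common subgrid of one of the four partitions. Correspondingly, I define the boustrophedon encodings $m_{\mathcal{C}_0}$ and $m_{\mathcal{C}_1}$ with column blocks of width $2\delta'$. By Lemma~\ref{lem:boustrophedonSubgrid}, each encoding follows a Hamiltonian path through every subgrid of two of the partitions, exactly as in the proof of Theorem~\ref{theo:LocalImplementation}.

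Second, I bound the switch. By Corollary~\ref{cor:bous2D}, the switch goes $m_{\mathcal{C}_0}\to m_Z\to m_{\mathcal{C}_1}$ by applying $C_{\mathrm{2D}}$ blockwise. Within a block of width $w$, $C_{\mathrm{2D}}$ consists of column ladders $C_\uparrow$ of length $L$, row ladders $C_\leftarrow$ of length $w$, and a constant-depth CZ layer. The column ladders have depth about $L=\sqrt{N}$, and this does not depend on the block width. The row ladders have depth $\mathcal{O}(w)=\mathcal{O}(\delta')$, which is $\mathcal{O}(1)$ more than before. So the leading $\sqrt{N}$ coefficient of the switch depth is unchanged.

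Third, the FSN on each subgrid acts on $(2\delta')^2$ modes. Its depth is therefore $\mathcal{O}(\delta')$, a constant independent of $N$ that grows by only $\mathcal{O}(1)$ when $\delta\to\delta+\mathcal{O}(1)$. The four FSN stages together therefore add an additive $\mathcal{O}(1)$.

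The main obstacle is the second step: showing rigorously that block width enters the switch depth only additively. For a single wide block the row ladders alone would cost $\Theta(L)$, so the argument must rely on the blocks having constant width. I would make this precise by writing the $C_{\mathrm{2D}}$ depth as roughly $2L + 2w + \mathcal{O}(1)$, applied to each block in parallel, noting that the column ladders are shared across all partitions. I would also point out that the boundary case of Lemma~\ref{lem:boustrophedonSubgrid}, with truncated edge blocks, is likewise unaffected.
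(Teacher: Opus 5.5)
Your proposal is correct and follows essentially the same route as the paper: the only $N$-dependent cost is the boustrophedon switch, and because blockwise $C_{\mathrm{2D}}$ has depth roughly $2L+2w+\mathcal{O}(1)$, widening the blocks by $\mathcal{O}(1)$ adds only an additive $\mathcal{O}(1)$ term. This is exactly the paper's argument, which you make more explicit and extend by also tracking the FSN stage. One small slip: an FSN over the $(2\delta')^2$ modes of a subgrid laid out along the path has depth $\mathcal{O}((\delta')^2)$ rather than $\mathcal{O}(\delta')$, but since $\delta=\mathcal{O}(1)$ this is still independent of $N$ and does not affect your conclusion.
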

\begin{proof}
Additional geometrically local fermionic interactions that do not increase the maximum interaction distance
\[
    \|x_i-x_j\|_\infty \le \delta,
\]
can be incorporated directly into the existing sequence of FSNs. It remains to consider the effects of increasing the maximum interaction distance to $\delta+\mathcal{O}(1)$. Generally, the only part of the sequence that does not have constant depth is the encoding switching. However, by Corollary~\ref{cor:bous2D}, the width of the $S$ patterns only contributes an additive $\mathcal{O}(w)$ factor to the circuit depth. Consequently, increasing this width by only $\mathcal{O}(1)$, as required by the larger interaction distance, changes the total depth by at most an additive $\mathcal{O}(1)$ term.
\end{proof}

\section{Changing the dimensional hierarchy}\label{app:dimHier}
In this section we generalize our encoding switches to higher dimensions. To this end, we define an hierarchy within the dimensions and provide some additional Clifford circuit to aid in changing this hierarchy. This allows us to utilize $d$-dimensional boustrophedon encodings, which preserve locality of higher dimensional fermion lattices. These tools can additionally be used for fermion routing.

We start by rewriting our definition of an $S$ pattern in Eq.~\eqref{eq:2D_S_pattern}.
For this we use a helper function 
\begin{equation*}
    \rho_r(c) :=
    \begin{cases}
    c, & r \equiv 0 \pmod{2}\\
    L-c, & r \equiv 1 \pmod{2},
    \end{cases}
\end{equation*}
which makes it easier to see how parts of the canonical order invert for specific row indices.
Using this function, the two-dimensional $S$ pattern can be written as
\begin{equation*}
    m_S(r, c)=  \rho_r (c) + L \rho_0(r).
\end{equation*}
Importantly, there is an hierarchy between the rows and columns in this encoding.
This hierarchy is imposed by the fact that when iterating through the canonical rank, the corresponding rows and columns $m^{-1}_S(j)=(r,c)$, iterate first over $L$ values of $c$ for one specific value of $r$ and only second iterate over $r$ with every sequential $L$ indices of $c$.
This dimensional hierarchy we can formally define for an ordering with multiple dimensions.

\begin{definition}[Dimensional hierarchy]
Let
\[
    \mathcal{D} = \{x_1,x_2,\dots,x_d\}
\]
be the set of dimensions. A dimensional hierarchy is an ordering of these
dimensions, i.e. a tuple
\[
    \sigma = (\sigma_1,\dots,\sigma_d)
\]
such that
\[
    \{\sigma_1,\dots,\sigma_d\} = \mathcal{D}.
\]
Equivalently, there exists a permutation $\pi$ of $\{1,\dots,d\}$ such that
\[
    \sigma = (x_{\pi(1)},x_{\pi(2)},\dots,x_{\pi(d)}).
\]
We refer to $\sigma_1$ as the lowest in hierarchy and $\sigma_d$ as the highest.
\end{definition}
From this definition we can define an S pattern where the rows and columns are swapped in hierarchy as
\begin{equation*}
    m_{S}(c,r) = \rho_{c}(r)+L\rho_{0}(c).
\end{equation*}
This S pattern is almost equal to the Z pattern 
\begin{equation*}
    m_{Z}(r,c) = \rho_{1}(r)+L\rho_{0}(c),
\end{equation*}
up to a $\mathrm{1D}$ inversions of each even column.
These inversions can be realized with the following Clifford circuit. 

\begin{definition}[$C_\mathrm{1D}$]\label{def:C1D}
The quantum circuit $C_\mathrm{1D}$ is of the form
\begin{equation*}
    C_\mathrm{1D}=C^{\dagger}_{\uparrow} C'_Z C_{\uparrow}
\end{equation*}
with parallel CNOT ladders
\begin{equation*}
    C_{\uparrow}=\prod_{c}\prod_{r} \mathrm{CNOT}_{(r+1, c), (r, c)},
\end{equation*}
and CZ gates along each even column
\begin{equation*}
   C'_{Z}=\prod_{c=c_{\text{even}}} \prod_{r} \mathrm{CZ}_{(r,c), (r+1,c)}. 
\end{equation*}
\end{definition}

Applying $C_\mathrm{1D}$ switches between order $m_{S}(c,r)$ and $m_{Z}(r,c)$.
Therefore, we can apply $C_\mathrm{2D}$ and $C_\mathrm{1D}$ sequentially to find $C'_\mathrm{2D}$ that switches between $m_{S}(r,c)$ and $m_{S}(c,r)$.

\begin{definition}[$C'_\mathrm{2D}$]\label{def:C'2D}
The quantum circuit $C'_\mathrm{2D}$ is of the form
\begin{equation*}
\begin{aligned}
    C'_\mathrm{2D}&=C_\mathrm{1D}C_\mathrm{2D}\\
                  &=C_{\uparrow}^{\dagger} C'_{Z} C_{\leftarrow}^{\dagger} C_Z C_{\leftarrow} C_{\uparrow}
\end{aligned}
\end{equation*}
and changes the dimensional hierarchy of the two dimensions in $m_{S}$, which we refer to as $\mathrm{2D}$ transposition.
\end{definition}

To extend the $S$ pattern from two to three dimensions, we assume an $L\times L \times L$ qubit grid, such that $N=L^{3}$. 
For the ordering we imagine stacking multiple two-dimensional $S$ patterns on top of each other in the third dimension $p$.
In this picture we can interpret a two-dimensional $S$ pattern as a single dimension
and have the planes $p$ act as the second dimension.
This leads to the three-dimensional snake ordering
\begin{equation}
\label{eq:2D_to_3D_order}
\begin{aligned}
    m^{\mathrm{3D}}_S(c,r,p) &=  \rho_p(m_S^{\mathrm{2D}}(r,c))+ L^2 \rho_0(p) \\
    &= \rho_{p+r}(c) + L\rho_p(r)  + L^2\rho_0 (p),
\end{aligned}
\end{equation}
where we used that $\rho_p(\rho_r(c))=\rho_{p+r}(c)$.
Using this method recursively for higher dimensions, we can define an S pattern for a $d$-dimensional qubit grid with any dimensional hierarchy.

\begin{definition}[$d$-dimensional S pattern]
Given a dimensional hierarchy $\sigma$, we define the corresponding $d$-dimensional $S$ pattern encoding as a map
\[
    m_S^\sigma : \{0,\ldots,L-1\}^d \to \{0,\dots,N-1\}.
\]
For an index $v\in\{0,\ldots,L-1\}^d$, the canonical rank is given by
\[
    m_S^\sigma(v)=\sum_{i=1}^d L^{i-1} \rho_{\sum_{k=i+1}^{d} v_{\sigma_k}}\bigl(v_{\sigma_i}\bigr),
\]
where we have used the empty-sum convention 
\[
    \sum_{k=d+1}^d v_{\sigma_k}=0.
\]
\end{definition}

Next we demonstrate how to permute the dimensional hierarchy of an $S$ pattern. Here, we start with the three-dimensional case $m^{\mathrm{3D}}_S(c,r,p)$ and then show that any higher dimensional case becomes a straightforward extension. First, note that one can perform $\mathrm{2D}$ transpositions by applying $C'_{\mathrm{2D}}$ within each plane $p$ to switch from $m_S(c,r,p)$ to $m_S(r,c,p)$ and consequently switching the hierarchy of the rows and columns. 
However, modifying these S patterns such that the planes swap in the hierarchy requires an additional tool.

\begin{definition}[Compressed basis]
Given the canonical ordering $m_S^\sigma$ with dimensional hierarchy $\sigma=(c,r,p)$. We consider a set of CNOT ladders starting from $c=0$ and ending at $c=L-1$, for each combination of $r$ and $p$. The resulting circuit is given by
\begin{equation*}
\label{eq:first_ladders}
\mathcal{L} = \prod_{r,p}\prod_{c'=0}^{L-2}\mathrm{CNOT}_{(c',r, p), (c'+1, r, p)},
\end{equation*}
which results in a CNOT count of $\mathcal{O}(N)$. The depth is either $\mathcal{O}(N^{\frac{1}{3}})$, $\mathcal{O}(\log L)$ or $\mathcal{O}(1)$ for respectively nearest-neighbour, all-to-all and measurement-based qubit architectures. We refer to the 2D grid defined by $c=L-1$ as the compressed basis.
\end{definition}

In the compressed basis we can interpret the dimension second lowest in the hierarchy as if it is the lowest in hierarchy.

\begin{lemma}
\label{lem:compressed_CZ}
Consider an ordering $m_S^\sigma$ with dimensional hierarchy
$\sigma=(c,r,p)$ and two sets of columns $c,c'\in \{0, \ldots, L-1\}$, adjacent in the ordering, such that $2L>(m^{-1}_{S}(c',r_{2},p_{2})-m^{-1}_{S}(c,r_{1},p_{1}))>0$ for all $c,c'$.

Then the canonical order of these two sets of columns can be exchanged such that for all $c,c'$
\[
\begin{aligned}
    {m'}_S^\sigma(c,r_1,p_1) &= m_S^\sigma(c,r_1,p_1)+L\\
    {m'}_S^\sigma(c,r_2,p_2) &= m_S^\sigma(c',r_2,p_2)-L\\
\end{aligned}
\]

This swapping of columns can be implemented with $1$ CZ gate in the compressed basis.
\end{lemma}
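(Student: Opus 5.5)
Throughout, a "column" means a line of $L$ qubits along the $c$ direction, i.e.\ a fixed pair $(r,p)$. In $m_S^\sigma$ with $\sigma=(c,r,p)$, each column occupies a contiguous block of $L$ ranks. The hypothesis says we have two such columns, $A=\{(\cdot,r_1,p_1)\}$ and $B=\{(\cdot,r_2,p_2)\}$, whose blocks are consecutive, with $A$ immediately before $B$. The target ordering $m'$ moves the whole block $A$ up by $L$ and the whole block $B$ down by $L$. The internal order of each block is unchanged, and so is the order of all other qubits.

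\textbf{Step 1: identify the required CZ gates.} By Lemma~\ref{lem:FermPerm}, the reordering from $m$ to $m'$ is implemented by a CZ gate on exactly those pairs whose relative order is reversed. Every pair $(a,b)$ with $a\in A$ and $b\in B$ is reversed. No pair inside $A$, no pair inside $B$, and no pair involving a qubit outside $A\cup B$ is reversed. So the required unitary is the complete bipartite product
\[
\prod_{a\in A}\prod_{b\in B}\mathrm{CZ}_{a,b}.
\]
Any reordering phases are diagonal and of the type handled in Lemma~\ref{lem:FermPerm}, so nothing further is needed at this stage.

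\textbf{Step 2: compute the labels in the compressed basis.} Conjugation by $\mathcal{L}$ propagates $Z$-parities backward along each ladder: CNOT$_{(c',\cdot),(c'+1,\cdot)}$ maps $Z_{c'+1}\mapsto Z_{c'}Z_{c'+1}$. Iterating this through the ladder gives
\[
\ell(L-1,r,p)=\{(c',r,p):c'=0,\dots,L-1\},
\]
so the end qubit of each column carries the full column as its label. This is the recursive label update from the parity flow definition.

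\textbf{Step 3: apply Lemma~\ref{lem:FlowCZ}.} A single gate $\mathrm{CZ}_{(L-1,r_1,p_1),(L-1,r_2,p_2)}$ applied in the compressed basis satisfies
\[
\mathcal{L}^\dagger\,\mathrm{CZ}_{(L-1,r_1,p_1),(L-1,r_2,p_2)}\,\mathcal{L}=\prod_{a\in A}\prod_{b\in B}\mathrm{CZ}_{a,b},
\]
which is exactly the product found in Step 1. No diagonal $Z$ terms arise, because $A\cap B=\emptyset$.

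\textbf{Main obstacle.} The delicate point is connectivity, not algebra. The two end qubits $(L-1,r_1,p_1)$ and $(L-1,r_2,p_2)$ must be adjacent on the qubit lattice for the CZ to be a single local gate. Columns adjacent in the ordering differ by one in $r$, or by one in $p$ at a row turnaround. In both cases their $c=L-1$ endpoints are lattice neighbours, so the gate is local. I would check this boundary case explicitly. I would also confirm that the stated condition $0<\text{rank difference}<2L$ is exactly what forces the two columns to be consecutive blocks.
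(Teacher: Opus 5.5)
Your proposal is correct and takes the same route as the paper. Both arguments note that after the ladders $\mathcal{L}$ the endpoint $(L-1,r,p)$ carries the whole column $\{(c,r,p)\}_{c}$ as its parity label. By Lemma~\ref{lem:FlowCZ}, a single CZ between the two endpoints then becomes the complete bipartite set of $\mathrm{CZ}$ gates between the two columns. You additionally make explicit, via Lemma~\ref{lem:FermPerm}, that this bipartite set is exactly the set of order-reversed pairs, and you check that the two endpoints are lattice neighbours. The paper's terse proof leaves both points implicit.
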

\begin{proof}
According to Lemma~\ref{lem:FlowCZ}, any CZ gate in the compressed basis results, in a $\overline{\text{CZ}}$ gate between all qubits involved in one CNOT ladder of $\mathcal{L}$ and all qubits in the other CNOT ladder of $\mathcal{L}$.
More specifically,
\begin{equation*}
\begin{aligned}
\label{eq:ladder_transformation}
    \mathcal{L}^{\dag} (\mathrm{CZ}_{(L-1, r_{1}, p_{1}), (L-1, r_{2}, p_{2})} )\mathcal{L} \\
    = \prod_{c=0}^{L-1}\prod_{c'=0}^{L-1}\overline{\mathrm{CZ}}_{(c, r_{1}, p_{1}), (c', r_{2}, p_{2})}.
\end{aligned}
\end{equation*}
Consequently, the canonical rank of $(c, r_{1}, p_{1})$ switches with $(c', r_{2}, p_{2})$ without changing the internal ordering of either sets of columns. 
\end{proof}
This lemma can be used to show that $C'_{\mathrm{2D}}$ in the compressed basis can be used to change the dimensional hierarchy of the planes.
\begin{theorem}\label{theo:3DimTransposition}
    Consider a $m_S^{(c,r,p)}$ ordering. We can perform any permutation of the hierarchy using $\mathcal{O}(N)$ gates in $\mathcal{O}(N^{\frac{1}{3}})$ depth.
\end{theorem}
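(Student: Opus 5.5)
Every permutation of the three-element hierarchy $(c,r,p)$ is a product of adjacent transpositions. So it suffices to implement the two generators, swapping $c\leftrightarrow r$ and swapping $r\leftrightarrow p$, each with $\mathcal{O}(N)$ gates and $\mathcal{O}(N^{1/3})$ depth. Any permutation of $S_3$ needs at most three such transpositions, so composing them only multiplies the constants.

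\emph{First generator ($c\leftrightarrow r$).} Apply $C'_{\mathrm{2D}}$ (Definition~\ref{def:C'2D}) in parallel within every plane $p$. For fixed $p$, Eq.~\eqref{eq:2D_to_3D_order} shows that the restriction of $m_S^{(c,r,p)}$ to the plane is $\rho_p(m_S^{\mathrm{2D}}(r,c))$ plus the constant offset $L^2\rho_0(p)$. This is the 2D $S$ pattern, reversed when $p$ is odd. Since planes occupy contiguous rank blocks, Lemma~\ref{lem:FermPerm} only requires CZ gates between pairs inside the same plane. The 2D transposition therefore produces $m_S^{(r,c,p)}$.

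For odd planes the block is reversed, so we have to check that $C'_{\mathrm{2D}}$ still realizes the required inversion set. Reversing both the source and target orderings leaves the set of reversed pairs unchanged, so the same CZ set applies, possibly up to single-qubit $Z$ corrections as in the 2D theorem. Each plane has $L\times L$ qubits and only nearest-neighbour ladders of length $L$ are used. This gives $\mathcal{O}(N)$ gates and $\mathcal{O}(L)=\mathcal{O}(N^{1/3})$ depth.

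\emph{Second generator ($r\leftrightarrow p$), expected to be the main obstacle.} Move to the compressed basis with $\mathcal{L}$. Each column $(\cdot,r,p)$ is a contiguous run of $L$ ranks. By Lemma~\ref{lem:compressed_CZ}, a single CZ between two compressed qubits $(L-1,r_1,p_1)$ and $(L-1,r_2,p_2)$ swaps the two column blocks in the ordering as units. The compressed plane $c=L-1$ with coordinates $(r,p)$ thus carries an effective 2D ordering of whole columns, namely $\rho_p(r)+L\rho_0(p)$, an $S$ pattern with hierarchy $(r,p)$. Applying $C'_{\mathrm{2D}}$ on this $L\times L$ plane realizes exactly the block inversions needed, by Lemma~\ref{lem:FlowCZ} and the 2D theorem applied at the level of blocks.

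Afterwards we undo $\mathcal{L}$. Two issues need care. First, the block swaps of Lemma~\ref{lem:compressed_CZ} preserve each column's internal orientation. The target pattern $m_S^{(c,p,r)}$, however, requires internal orientation $\rho_{p+r}(c)$, which is unchanged because $\rho$ depends only on the parity of $p+r$. Second, the single-qubit $Z$ corrections in the compressed basis become $Z$-strings along whole columns, that is, products of the diagonal terms. These can be tracked in the same Boolean-logic computation as the 2D theorem's proof and cancelled by single-qubit $Z$ gates. The costs are $\mathcal{L}$ and its inverse, $\mathcal{O}(N)$ gates in $\mathcal{O}(L)$ depth, plus one $C'_{\mathrm{2D}}$ on $L^2$ qubits, which is $\mathcal{O}(L)$ depth. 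The subtle point is that $C'_{\mathrm{2D}}$ on the compressed plane uses nearest-neighbour gates among qubits $(L-1,r,p)$, and these are indeed lattice neighbours.

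Composing at most three generators yields any hierarchy permutation with $\mathcal{O}(N)$ gates and $\mathcal{O}(N^{1/3})$ depth.
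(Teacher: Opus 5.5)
Your proposal is correct and follows the paper's proof: you reduce to exchanging the lowest two and the highest two dimensions, do the former with $C'_{\mathrm{2D}}$ in every plane, and do the latter by conjugating a $C'_{\mathrm{2D}}$ on the compressed face with $\mathcal{L}$. Your extra checks fill in details the paper leaves implicit, namely the reversed odd planes, the unchanged internal orientation $\rho_{p+r}(c)$, and the $Z$-corrections becoming column $Z$-strings. One wording fix: state the two generators as position-based swaps (lowest pair, highest pair) rather than the fixed labels $c\leftrightarrow r$ and $r\leftrightarrow p$, so that they can be composed from any intermediate hierarchy.
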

\begin{proof}
Note that we can perform every permutation of the dimensional hierarchy by combining exchanging the lowest two dimensions and exchanging the the highest two dimensions. How to exchange the lower two dimensions is already discussed above.
For the remaining part, assume we start with a $m_S^{(c,r,p)}$ ordering. Perform the following sequence, as depicted in Fig.~\ref{fig:3D}a):
\begin{enumerate}
    \item Go to the compressed basis by applying $\mathcal{L}$.
    \item Apply $C'_{\mathrm{2D}}$ in the compressed basis.
    \item Return from the compressed basis by applying $\mathcal{L}^{\dagger}$.
\end{enumerate}
Combing lemma~\ref{lem:FlowCZ} with lemma~\ref{lem:compressed_CZ} to find the $\overline{\text{CZ}}$ gates, we find the resulting ordering is $m_S^{(c,p,r)}$.
 This code switch is illustrated in Fig.~\ref{fig:3D}b) in the compressed basis.
\end{proof}

The parallel CNOT ladders $\mathcal{L}$ in Eq.~\eqref{eq:first_ladders} for the lowest dimension can be generalized for dimension $i$ as
\begin{equation}
    \mathcal{L}_{i}=\prod_{i'=L-2}^{0}\mathrm{CNOT}_{\langle i' \rangle, \langle i'+1 \rangle},
\end{equation}
where we define $\langle i' \rangle = (L-1, \ldots, L-1, i', \dot{x}_{i+1}, \ldots, \dot{x}_{d-1})$ and $\dot{x}_{i}=x_{i}\in \{0, \ldots, L-1\}$. 
Then a basis compression from dimension $0$ to $j$ becomes
\begin{equation}
    \mathcal{L}_{0,j}=\prod_{i=j-1}^{0}\mathcal{L}_{i}.
\end{equation}
such that after this compression and before its inverse, dimension $j$ acts as the lowest in hierarchy.
Therefore applying 
\begin{equation}
    \mathcal{L}_{0,j}^{\dagger}(C_{2D,\langle j, j+1\rangle})\mathcal{L}_{0,j}
\end{equation}
swaps the hierarchy of dimensions $j$ and $j+1$, where $C'_{2D,\langle j, j+1\rangle}$ indicates that the 2D transposition circuit is applied in the surfaces spanned by dimension $j$ and $j+1$ in the compressed basis.
Using $\mathcal{L}_{0,j-1}\mathcal{L}_{0,j}^{\dagger}=\mathcal{L}_{j-1}$, we can swap any dimension $j$ to be the lowest in hierarchy with $\mathcal{O}(jN)$ nearest-neighbour qubit gates and depth $\mathcal{O}(jL)$.

\section{$d$-Dimensional lattice models}\label{app:dD}
Here we extend our boustrophedon encodings to $d$-dimensions, which allows to implement geometrically local fermionic interactions using $\mathcal{O}(N)$ nearest-neighbour qubit gate count and $\mathcal{O}(N^{\frac{1}{d}})$ depth on the qubit lattice $G=\{0,\ldots,L-1\}^d$.

\subsubsection*{3-Dimensional lattice models}

\begin{figure*}
    \centering
    \includegraphics{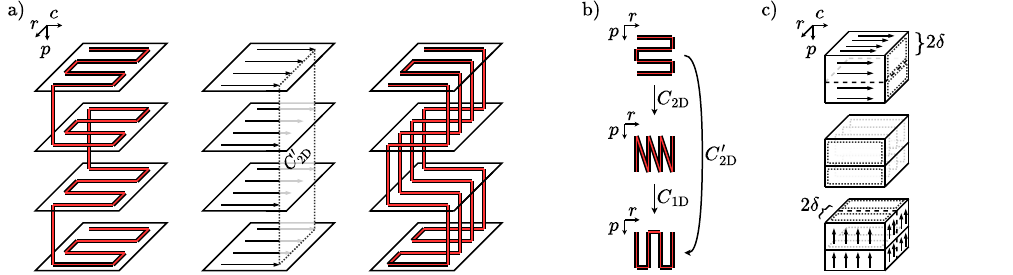}
    \caption{3D setup. a) $4 \times 4 \times 4$ example. Left: the 3D snake pattern $m^{\mathrm{3D}}_{S}(c, r, p)= \rho_{p+r}(c) + L\rho_p(r)  + L^2\rho_0 (p)$. middle: the direction of the CNOT ladders $\mathcal{L}$ indicated by the arrows and the surface where $C'_{\mathrm{2D}}$ is applied in the compressed basis. right: the resulting 3D snake encoding $m^{\mathrm{3D}}_{S}(c, p, r)$ where $r$ and $p$ switched in hierarchy. b) The encoding change of a) in the $r$, $p$ basis. $C_{\mathrm{2D}}$ changes the hierarchy, $C_{1D}$ fixes the inversion and $C'_{\mathrm{2D}}$ includes both. c) Dividing the 3D grid into subsets for a 3D boustrophedon encoding. Top: dividing the highest dimension $p$ and apply the circuit of a) within each subgrid. The arrows indicate the direction of the CNOT ladders and the faces marked by dots where $C'_{\mathrm{2D}}$ is applied. Middle: Apply $C'_{\mathrm{2D}}$ for all values of $r$ within each subgrid. Bottom: divide $r$ to find smaller subgrids and apply the circuit of a) in each. The result is a 3D boustrophedon encoding.}
    \label{fig:3D}
\end{figure*}

The goal is to find a transformation of depth $\mathcal{O}(N^{\frac{1}{3}})$ that allows us to switch between $\mathrm{3D}$ boustrophedon encodings, which in turn facilitates local-to-local encoding just as in the 2D case discussed in appendix~\ref{app:2d_circuits}.

Analogously to the two-dimensional case, we want to partition the three-dimensional grid into three-dimensional subgrids of lengths that scale with the maximum interaction distance $\delta=\mathcal{O}(1)$ and have a boustrophedon encoding that follows a Hamiltonian path through each subgrid of a partition.

\begin{definition}[3D Boustrophedon encoding]
A three-dimensional boustrophedon encoding is specified by two interval partitions
\[
\begin{aligned}
    \mathcal{C}_c &= \{C_1^{(c)},\ldots,C_{K_c}^{(c)}\}\\
    \mathcal{C}_r &= \{C_1^{(r)},\ldots,C_{K_r}^{(r)}\},
\end{aligned}
\]
of the column and row index sets, respectively. Each block
$C_i^{(x)}$, for $x\in\{c,r\}$, consists of consecutive indices,
\[
    C_i^{(x)}
    =
    \{a_i^{(x)},a_i^{(x)}+1,\ldots,a_i^{(x)}+w_i^{(x)}-1\},
\]
where $w_i^{(x)}$ denotes the width of the block.

For $c\in C_i^{(c)}$ and $r\in C_j^{(r)}$, the corresponding canonical ordering is defined by
\[
\begin{aligned}
m_{\mathcal{C}_c,\mathcal{C}_r}(c,r,p)
&=
m_{S_{i,j}}^{(c,r,p)}(c,r,p)\\
&+
\sum_{l=0}^{j-1}\sum_{m=0}^{L-1} L w_l^{(r)} w_m^{(c)}
+
\sum_{m=0}^{i-1} L w_j^{(r)} w_m^{(c)}. 
\end{aligned}
\]
With this definition, $m_{\mathcal{C}_c,\mathcal{C}_r}$ follows a three-dimensional $S$-pattern ordering through each subgrid of the form
\[
    C_i^{(c)} \times C_j^{(r)} \times \{0,\ldots,L-1\}.
\]
\end{definition}

To switch between such boustrophedon encodings we have to able to apply encoding changes on subgrids, just as in the two-dimensional case outlined in Corollary~\ref{cor:bous2D}.
However, we can only apply such methods when $m$ follows a Hamiltonian path through such a subgrid. For a $m^\sigma_S$ encoding this means we can only partition the dimension highest in hierarchy. 
In order to partition a second dimension, we need to dynamically change the hierarchy such that another dimension is the highest in the hierarchy at some point in time.

\begin{lemma}\label{lem:3DBousSwitch}
    We can switch between three-dimensional boustrophedon encodings using $\mathcal{O}(N)$ gates in $\mathcal{O}(N^{\frac{1}{3}})$ depth.
\end{lemma}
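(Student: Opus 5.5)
The strategy mirrors Corollary~\ref{cor:bous2D}: route every switch through a canonical reference encoding, here the global three-dimensional snake $m_S^{(c,r,p)}$. It then suffices to show that any three-dimensional boustrophedon encoding $m_{\mathcal{C}_c,\mathcal{C}_r}$ can be reached from $m_S^{(c,r,p)}$ with $\mathcal{O}(N)$ gates in $\mathcal{O}(L)=\mathcal{O}(N^{1/3})$ depth. The reverse direction is obtained by inverting the Clifford circuit, and composing two such circuits gives the switch between arbitrary $\mathcal{C}$ and $\mathcal{C}'$.

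The key observation, as noted before the lemma, is that an encoding change may only be applied to a subgrid on which the current ordering follows a Hamiltonian path, i.e.\ occupies a contiguous block of ranks. For $m^\sigma_S$ this holds exactly for slabs obtained by cutting the dimension highest in the hierarchy. I would therefore build $m_{\mathcal{C}_c,\mathcal{C}_r}$ in two stages of ``promote a dimension to the top, then cut it'', following Fig.~\ref{fig:3D}c).

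\textbf{Stage 1.} Starting from $m_S^{(c,r,p)}$, apply Theorem~\ref{theo:3DimTransposition} (compressed basis $\mathcal{L}$, then $C'_{\mathrm{2D}}$ and 2D transpositions) to make $r$ the highest dimension, for instance reaching $m_S^{(c,p,r)}$ or $m_S^{(p,c,r)}$. The cost is $\mathcal{O}(N)$ gates and $\mathcal{O}(L)$ depth. The slabs $\{0,\dots,L-1\}\times C_j^{(r)}\times\{0,\dots,L-1\}$ are now contiguous in rank. Inside each slab, acting in parallel, apply the hierarchy-permuting circuit of Theorem~\ref{theo:3DimTransposition}, now on a box of size $L\times w_j^{(r)}\times L$, to bring $c$ to the top of the local hierarchy. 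Its depth is $\mathcal{O}(L)$, since the CNOT ladders have length at most $L$ along every axis. The ranks outside each slab are untouched, so within slab $j$ the ordering is now a local snake with $c$ highest.

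\textbf{Stage 2.} Cut $c$ into the intervals $C_i^{(c)}$, giving boxes $C_i^{(c)}\times C_j^{(r)}\times\{0,\dots,L-1\}$ that are contiguous in rank. In each box, in parallel, permute the local hierarchy to $(c,r,p)$. The rank offsets then reproduce exactly the sums in the definition of $m_{\mathcal{C}_c,\mathcal{C}_r}$. The gate count is $\mathcal{O}(N)$, being a sum of $\mathcal{O}(\text{box volume})$ terms. The depth is a constant number of $\mathcal{O}(L)$ layers, because boxes are disjoint and run in parallel.

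\textbf{Main obstacle.} The step I expect to be hardest is verifying that a circuit applied to a contiguous-rank subgrid has no side effects on qubits outside it. Lemma~\ref{lem:FermPerm} requires CZs only between pairs whose relative order flips, and a permutation internal to a contiguous rank block flips no pair with an outside qubit. Two further checks are needed:

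\begin{itemize}
\item Because of the $\rho$-reflections, the local snakes inside a slab may run in reversed orientation depending on the parity of the outer coordinates. I would handle this by conjugating with the $C_{\mathrm{1D}}$-type inversion circuits, which cost $\mathcal{O}(N)$ gates and $\mathcal{O}(L)$ depth.
\item The local Theorem~\ref{theo:3DimTransposition} must apply to non-cubic boxes. As with the remark about unequal lattice dimensions, I expect this to go through because every step uses only axis-aligned ladders and CZ layers.
\end{itemize}
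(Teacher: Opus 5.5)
Your proof is correct and takes the paper's approach: route through a global snake, then twice promote a dimension to the top of the hierarchy with Theorem~\ref{theo:3DimTransposition}, cut along it, and act in parallel on the resulting contiguous-rank subgrids. The differences are cosmetic: the paper starts from $m_S^{(p,r,c)}$, where $c$ is already highest, so it skips your initial global promotion and cuts columns before rows; your rows-first order actually reproduces the block offsets written in the definition of $m_{\mathcal{C}_c,\mathcal{C}_r}$, and your side-effect check via Lemma~\ref{lem:FermPerm} makes explicit a point the paper leaves implicit.
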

\begin{proof}
    We will discuss how to switch between the $m_{S}^{(p,r,c)}$ ordering to any three-dimensional boustrophedon ordering $m_{\mathcal{C}_c,\mathcal{C}_r}^{(c,r,p)}$. This can then be used to switch between any two boustrophedon encodings using $m_{S}^{(p,r,c)}$ as an intermediary step.
    
    Starting from $m_{S}^{(p,r,c)}$, we can partition the column indices into the intervals $\mathcal{C}_c$. Importantly, this partition uses that the encoding follows a $m_{S}^{(p,r,c)}$ pattern through each of the subgrids $C_i^{(c)}\times L \times L$, where $C_i^{(c)} \in \mathcal{C}_c$. This allows us to perform the change in dimensional hierarchy outlined in Theorem~\ref{theo:3DimTransposition} for each subgrid in parallel, switching the corresponding encodings to $m_{S}^{(c,p,r)}$. Here we can partition the row indices into the intervals $\mathcal{C}_r$. Using that the encoding follows a $m_{S}^{(c,p,r)}$ order through each subgrid $C_i^{(c)}\times C_j^{(r)} \times L$, we switch all of them to $m_{S}^{(c,r,p)}$ orderings.
\end{proof}

\begin{theorem}
    We can implement all pairwise fermionic interactions between geometrically local modes $\|x_i-x_j\|\le \delta$ on the nearest-neighbor connected qubit lattice $G_{3D}$ using $\mathcal{O}(N)$ gates in $\mathcal{O}(N^{\frac{1}{3}})$ depth.
\end{theorem}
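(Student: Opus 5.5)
The proof mirrors Theorem~\ref{theo:LocalImplementation}, with Lemma~\ref{lem:3DBousSwitch} supplying the encoding switches. Take the $2^3=8$ shifted partitions $\mathcal{G}_b$, $b\in\{0,1\}^3$, from Definition~\ref{def:subgrid} for $d=3$. By Lemma~\ref{lem:localPartition}, every pair $\|x_i-x_j\|_\infty\le\delta$ lies together in some subgrid of some $\mathcal{G}_b$. Each subgrid has side length at most $2\delta=\mathcal{O}(1)$, so once the canonical ordering restricts to a Hamiltonian path on a subgrid, a local FSN on that subgrid realizes all pairwise interactions among its modes in $\mathcal{O}(1)$ gates and depth. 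All subgrids of one partition can be treated in parallel.

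Next, define four 3D boustrophedon encodings $m_{\mathcal{C}_c,\mathcal{C}_r}$ with $\mathcal{C}_c\in\{\mathcal{C}_0,\mathcal{C}_1\}$ and $\mathcal{C}_r\in\{\mathcal{C}_0,\mathcal{C}_1\}$, where $\mathcal{C}_0=\{I_0,I_{2\delta},\ldots\}$ and $\mathcal{C}_1=\{I_{-\delta},I_\delta,\ldots\}$ as in the 2D proof. The shift bits $(b_c,b_r)$ of a partition $\mathcal{G}_b$ fix which of these four encodings to use; the bit $b_p$ for the plane direction is left free. The key lemma is the 3D analogue of Lemma~\ref{lem:boustrophedonSubgrid}. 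Within a block $C_i^{(c)}\times C_j^{(r)}\times\{0,\ldots,L-1\}$, the ordering is a 3D $S$-pattern $m_S^{(c,r,p)}$ with $p$ highest in the hierarchy. Consecutive $p$-slabs are therefore traversed contiguously, and each slab is a 2D snake over the $(c,r)$ block, with consecutive slabs joined at a neighbouring site. Any $p$-interval $I_k$ thus receives a contiguous, nearest-neighbour-connected stretch of ranks. So the encoding follows a Hamiltonian path through every subgrid $C_i^{(c)}\times C_j^{(r)}\times I_k$, for both $p$-shifts. Hence each of the four encodings covers two of the eight partitions.

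The protocol runs as follows. Start in the first encoding and apply the FSNs for its two partitions. Switch to the next encoding using Lemma~\ref{lem:3DBousSwitch}, and repeat until all four encodings have been used. This takes three switches, each costing $\mathcal{O}(N)$ gates and $\mathcal{O}(N^{1/3})$ depth. There are eight constant-depth FSN layers with $\mathcal{O}(N)$ gates in total, since the number of subgrids is $\mathcal{O}(N)$. The overall cost is therefore $\mathcal{O}(N)$ gates and $\mathcal{O}(N^{1/3})$ depth.

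The main obstacle is the Hamiltonian-path claim. Two points need checking. First, the 3D $S$-pattern's turning structure, given by $\rho_{p+r}(c)$ and $\rho_p(r)$, must make the last site of one $p$-slab adjacent to the first site of the next. Second, this must hold for blocks of width $w<L$, so that the offsets in the boustrophedon definition leave each block's rank range contiguous. I would verify both by writing out the rank formula restricted to a block, with $L$ replaced by the block widths, and checking adjacency at slab boundaries. If boundary blocks of width less than $2\delta$ cause parity issues, I would pad or merge them, which changes only constants.
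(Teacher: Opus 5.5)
Your proposal is correct and follows the paper's proof essentially step for step. Both use the eight shifted partitions, the four 3D boustrophedon encodings indexed by the column and row shifts (each serving the two partitions that differ only in the plane shift), constant-depth FSNs inside subgrids, and encoding switches via Lemma~\ref{lem:3DBousSwitch}. Your explicit check of the Hamiltonian-path step, which the paper only calls ``analogous'' to the 2D case, goes through for any block widths: within a block, slab $p+1$ traverses the reversed 2D snake of slab $p$, so consecutive slabs always join at $p$-adjacent sites, and the parity/padding worry you raise does not arise.
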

\begin{proof}
The proof is analogous to the proof of Theorem~\ref{theo:LocalImplementation}. We define 8 partitions $\mathcal{G}_b$ along Definition~\ref{def:subgrid}.
Also, we use
\[
\begin{aligned}
    \mathcal{G}_{0}&=\{I_0, I_{2\delta}, \ldots\}\\
    \mathcal{G}_{1}&=\{I_{-\delta}, I_{\delta}, \ldots\}
\end{aligned}
\]
to define the four boustrophedon encodings obtained from
\[
    (\mathcal{C}_c,\mathcal{C}_r)
    \in
    \{\mathcal{G}_{0}, \mathcal{G}_{1}\}
    \times
    \{\mathcal{G}_{0}, \mathcal{G}_{1}\}.
\]

This allows us to implement all pairwise interactions between geometrically local modes with the sequence:
\begin{enumerate}
    \item Initialize in one of the $m_{\mathcal{C}_c,\mathcal{C}_r}$ boustrophedon encodings.
    \item Apply a FSN within each subgrid of a corresponding $\mathcal{G}_b$
    \item Apply a FSN within each subgrid of the second $\mathcal{G}_b$ that corresponds to this boustrophedon encoding.
    \item Switch to another boustrophedon encoding using Lemma~\ref{lem:3DBousSwitch}.
    \item Continue until all geometrically local pairwise interactions have been implemented
\end{enumerate}
All steps except the encoding switches can be implemented in constant depth. Thus, the asymptotic gate-count and depth scaling is determined by the cost of switching between the three-dimensional boustrophedon encodings.
\end{proof}

\subsubsection*{$d$-Dimensional boustrophedon encodings}
\label{app:d_dimensional_boustrophedon}
The methods introduced to go from 2D to 3D can be recursively applied to find the corresponding results for lattice models of arbitrary dimension $d$.
To this end we assume a qubit grid of dimensions $d$ with lengths $L$ such that
\begin{equation*}
    N = L \times L \times \ldots \times L = L^{d}.
\end{equation*} 
Recursively applying Eq.~\eqref{eq:2D_to_3D_order} leads to the snake pattern for dimensions $\vec{x}=(x_{0}, \ldots, x_{d-1})$ as
\begin{equation}
    m_{S}^{d\mathrm{D}}(\vec{x})=\sum_{i=0}^{d-1}L^{i}\rho_{\Sigma(i,d-1)}(x_{i}),
\end{equation}
where we defined $\Sigma(i,d-1) = \sum_{j=i+1}^{d-1}x_{j}$ to shorten notation.
In this ordering, dimension $0$ is lowest in hierarchy and dimension $d-1$ the highest.
The parallel CNOT ladders $\mathcal{L}$ in Eq.~\eqref{eq:first_ladders} for the lowest dimension can be generalized for dimension $i$ as
\begin{equation}
    \mathcal{L}_{i}=\prod_{i'=L-2}^{0}\mathrm{CNOT}_{\langle i' \rangle, \langle i'+1 \rangle},
\end{equation}
where we define $\langle i' \rangle = (L-1, \ldots, L-1, i', \dot{x}_{i+1}, \ldots, \dot{x}_{d-1})$ and $\dot{x}_{i}=x_{i}\in \{0, \ldots, L-1\}$. 
Then a basis compression from dimension $0$ to $j$ becomes
\begin{equation}
    \mathcal{L}_{0,j}=\prod_{i=j-1}^{0}\mathcal{L}_{i}.
\end{equation}
such that after this compression and before its inverse, dimension $j$ acts as the lowest in hierarchy.
Therefore applying 
\begin{equation}
    \mathcal{L}_{0,j}^{\dagger}(C_{2D,\langle j, j+1\rangle})\mathcal{L}_{0,j}
\end{equation}
swaps the hierarchy of dimensions $j$ and $j+1$, where $C'_{2D,\langle j, j+1\rangle}$ indicates that the 2D transposition circuit is applied in the surfaces spanned by dimension $j$ and $j+1$ in the compressed basis.
Using $\mathcal{L}_{0,j-1}\mathcal{L}_{0,j}^{\dagger}=\mathcal{L}_{j-1}$, we can swap any dimension $j$ to be the lowest in hierarchy with $\mathcal{O}(jN)$ nearest-neighbour qubit gates and depth $\mathcal{O}(jL)$.

By pulling down the highest dimension to the lowest, all others dimensions shift one position up in the hierarchy. 
Repeating this for every dimension, we can cycle dimensions through the hierarchy such that each dimension has been highest in the hierarchy at least once.
Together with corrective one-dimensional inversion, we can change a snake pattern through the whole grid to a $d$-dimensional boustrophedon encodings by repeatedly dividing the highest dimension $L$ sets of size $\mathcal{O}(1)$ and then pulling that dimension down within each subgrid.
Consequently, we can swap between two boustrophedon encodings shifted by a constant distance in any dimension of choice by switching back the snake pattern and repeat the process with new subdivisions of the grid.
The resulting asymptotic circuit costs of switching between two $d$-dimensional boustrophedon encodings is $\mathcal{O}(N)$ nearest-neighbour qubit gates and a circuit depth of $\mathcal{O}(N^{\frac{1}{d}})$. 

\section{Fermion routing}\label{app:fermion_routing}

\begin{figure}
    \centering
    \includegraphics[width = \columnwidth]{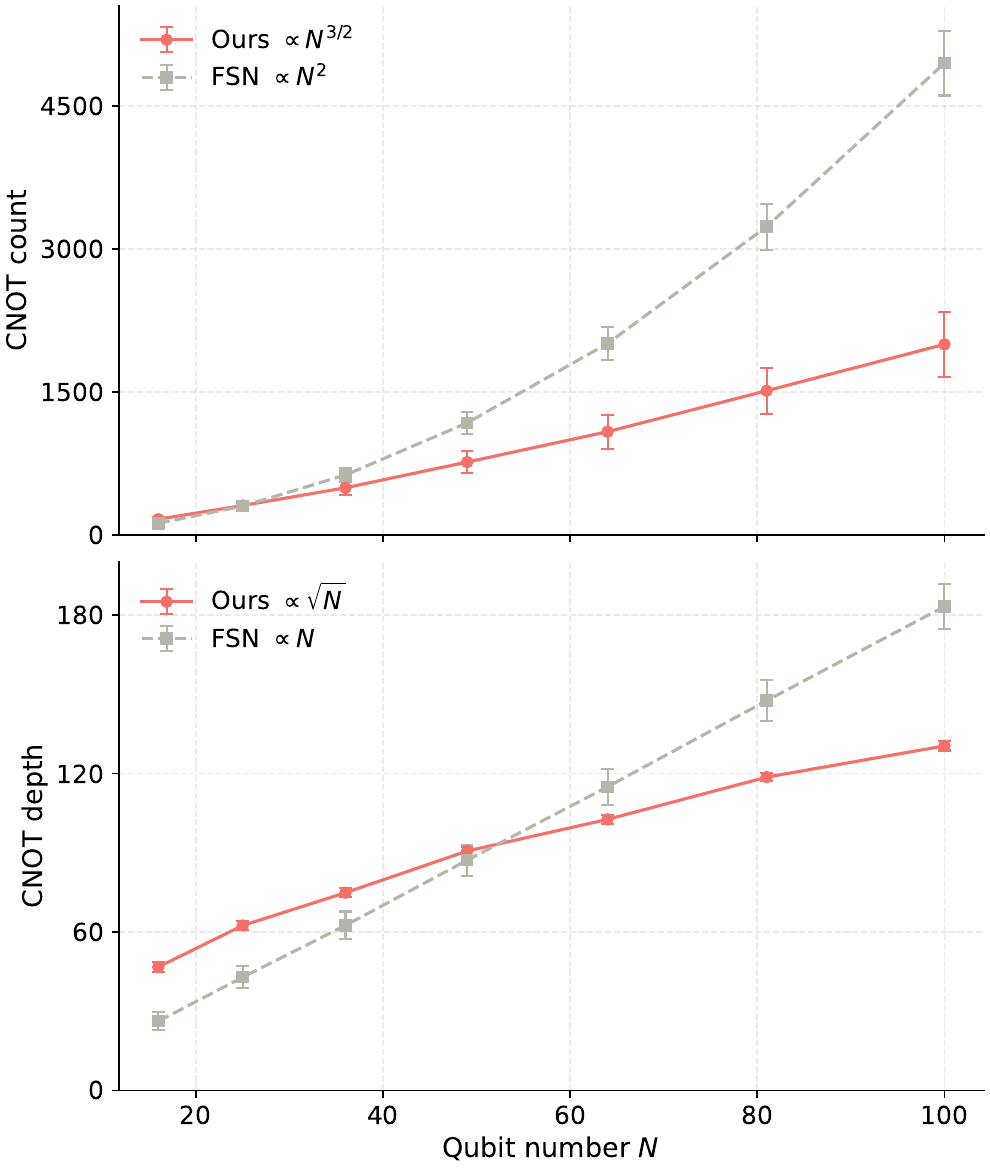}
    \caption{Permutation-routing of fermions via FSNs and via our method on square qubit lattices of size $N=L\times L$. Shown are the CNOT gate count (top) and CNOT depth (bottom). The FSN strategy uses odd-even sort and a geometrically static JW path, while our method uses a permutation algorithm for two-dimensional square lattices \cite{annexsteinUnifiedApproachOffline1990} enabled by dynamic encoding switches. Data points indicate the mean and standard deviation over 1000 uniformly random permutations. For lattices larger than $5\times5$, our method reduces the CNOT count and for lattices larger than $7\times7$, it also reduces the depth.
    }
    \label{fig:PermutationImplementation}
\end{figure}

In this appendix we utilize our dynamic encoding primitives to enable the application of an asymptotically optimal routing algorithm applicable on $d$-dimensional lattices~\cite{annexsteinUnifiedApproachOffline1990} to the routing of fermions. 

\subsubsection*{2D Grid}

This classical routing algorithm constructs a routing schedule for the Cartesian product $G_1 \times G_2$ from routing algorithms on the individual arrays $G_1 = \{0,\ldots, L-1\}$ and $G_2 = \{0, \ldots, L-1\}$.
Using our primitives we can first align the JW path with $G_{1}$ and then efficiently change the encoding to align the JW path with $G_{2}$, allowing us to circumvent the typical restrictions imposed by the JW transformation and directly apply this routing strategy to fermions.
The routine consists of three stages: routing in parallel on all copies of $G_1$ (rows), followed by routing on all copies of $G_2$ (columns), and finally another routing step on the rows. The total cost is therefore twice the cost of the $G_1$ routing routine plus one invocation of the $G_2$ routine.
As subroutines we can use the odd–even sorting algorithm, which can be implemented natively on an NN-connected rows or columns. Its worst-case depth scales linearly with the length of the row or column.

To build intuition, it is useful to view the problem as permuting pebbles on a grid. A naive approach would apply a row permutation $P_X$ followed by a column permutation $P_Y$. However, this fails if two pebbles in the same row share the same destination column, since no row permutation can move both pebbles to the correct column simultaneously.
To avoid such collisions, the procedure introduces an intermediate routing step. This construction can be formulated as a perfect matching problem, where each matching specifies an intermediate column through which the pebbles are routed. For every pebble, we first preroute it to a column according to its matching before completing the routing along rows and columns.
The matching problem can be solved efficiently, for example using the Hopcroft-Karp algorithm \cite{tarjanDataStructuresNetwork1983}.

Remarkably, the benefits of this approach immediately extend beyond asymptotic scaling. 
As shown in Fig.~\ref{fig:PermutationImplementation}, this method outperforms the corresponding one-dimensional FSN-based approach already at very modest system sizes. 
Note that we don't include a comparison with Ref.~\cite{constantinidesLowdepthFermionRouting2025}, as the construction given there establishes only a bound for qubit lattices and has not yet been fully optimized for this setting.

\subsubsection*{Higher Dimensional Grids}

In general, given that we have routing algorithms for lattices $G$ and $H$, we can construct a routing algorithm for $G \times H$ \cite{annexsteinUnifiedApproachOffline1990}. The routes produced by this require at most
\begin{equation}\label{eq:AnnBau}
    T(G\times H) = T(G) + T(H) + \min(T(G), T(H))
\end{equation}
steps, where $T(X)$ represents the number of steps required to complete a routing on the lattice $X$.

Given a $d$-dimensional qubit lattice, we can run an odd-even FSN in a specific dimension after having encoded this dimension as the lowest in hierarchy. Changing the encoding to move a specific dimension to the lowest position in the dimensional hierarchy requires a circuit depth of $\mathcal{O}(N^{\frac{1}{d}})$, as shown in Appendix~\ref{app:dimHier}, which we can add to the cost of executing a odd-even FSN. 

As an odd-even FSN on a $\mathcal{O}(N^{\frac{1}{d}})$ size array takes the same depth scaling as the encoding switch of $\mathcal{O}(N^{\frac{1}{d}})$, it does not affect the total asymptotic depth scaling. Therefore, iterative applying equation~\eqref{eq:AnnBau} leads to a routing depth of

\begin{equation*}
    T(L^d) = (2d-1)(L-1) = \mathcal{O}(L).
\end{equation*}

\subsubsection*{All-to-All Connected Systems}
We can further generalize this approach for fully connected qubit systems. Currently the asymptotic cost of fermion routing on such systems is $\mathcal{O}(\log^2 N)$ \cite{constantinidesLowdepthFermionRouting2025}. However, here we will demonstrate that it is in fact possible to construct routing algorithms with depth $\mathcal{O}(\log^{\frac{a+1}{a}} N)$, where $a \in \mathbb{N}_{>0}$.

Given a fully connected qubit system, we can treat it like a $d$-dimensional lattice, where we can choose $d$ as a function of $N$ and therefore also impose lengths $\log L=\frac{\log N}{d}$ of the lattice. 
Furthermore, the depth of making any dimension the lowest in hierarchy, becomes $\mathcal{O}(d \log N^{\frac{1}{d}})=\mathcal{O}(\log N)$ and is therefore independent of $d$.
Additional, parallel FSNs in the lowest dimension require a depth of $\mathcal{O}(L)$.
The resulting depth for routing fermions on fully connected qubit systems is
\begin{equation}\label{eq:A2Adepth}
T(N)=T(L^{d})
  = \mathcal{O}\!\left(
      d\bigl(
        \log N
        + T(L)
      \bigr)
    \right).
\end{equation}

For example, we could choose $L=2$ and $d=\frac{\log N}{\log 2}$, which leads to 

\begin{equation*}
    T(N) = \mathcal{O}(\log N (\log N + 1)) = \mathcal{O}(\log^2 N),
\end{equation*}

matching the algorithm of \cite{constantinidesLowdepthFermionRouting2025}. 

However, we can equally well choose a different $L$. Choosing $L=\log N$ and $d=\frac{\log N}{\log \log N}$ to minimize the scaling in the sum in equation~\eqref{eq:A2Adepth}, we already find a depth of 
\[
T(N)=\frac{\log^{2}N}{\log \log N},
\]
as a slight improvement over $\log^2 N$.

A more significant improvement can be obtained when considering that each parallel FSN applied in the lowest dimension is also executed on fully connected qubits.
Applying the result from before (or from Ref.~\cite{constantinidesLowdepthFermionRouting2025}), we know that we can perform any permutation in this subset in depth $\mathcal{O}(\log^2 L)$. 
Therefore, substituting $T(L)=\mathcal{O}(\log^2 L)$ in Eq.~\eqref{eq:A2Adepth} and optimizing $L$ for the scaling, we choose $\log L = \log^\frac{1}{2} N$ and
\begin{equation*}
    d=\frac{\log N}{\log k}=\log^\frac{1}{2}N.
\end{equation*}

This results in a new depth for the general fermionic permutation of

\begin{equation*}
\label{eq:A2Adepth2}
    T(N) = \mathcal{O}(\log^\frac{1}{2}N (\log N + \log N)) = \mathcal{O}(\log^\frac{3}{2} N).
\end{equation*}

We can extend this recursively by applying the updated routing depth to the subset fully connected subset qubits.
Again, substituting this in Eq.~\eqref{eq:A2Adepth2} and optimizing $L$ for the sum, results in
\begin{equation*}
\label{eq:A2Adepth2}
    T(N) = \mathcal{O}(\log^\frac{1}{2}N (\log N + \log N)) = \mathcal{O}(\log^\frac{4}{3} N).
\end{equation*}
Applying this substitution recursively $a$ times leads to a fermion routing algorithm for fully connected qubits
\begin{equation*}
\label{eq:A2Adepthlimit}
    T(N) = \mathcal{O}(\log^{1+\frac{1}{a}} N),
\end{equation*}
which approaches a depth of $\log N$ for increasing $a$

\section{Example applications}
\label{app:lattice_examples}

\begin{figure}
    \centering
        \begin{quantikz}[column sep=2mm, wire types={q,q}]
        & \gate{X} \vqw{1}
        & \gate{R_Y(2\theta_{j(j+1)})} 
        & \gate{Z}
        & \gate{H}
        & \gate{S}
        \\ 
        & \gate{Y}
        & \gate{R_Z(2\theta_{j(j+1)})}
        & \gate{Z} \vqw{-1}
        & \gate{S^\dagger}
        & \gate{H}
        \\ 
        \end{quantikz}
    \caption{Circuit implementation of a fermionic swap between qubits corresponding to neighboring spin orbitals $f^{j,j+1}_\text{swap}$ that additionally applies $e^{i\theta_{j(j+1)}(a^{\dagger}_{j}a_{j+1}+a^{\dagger}_{j+1}a_{j})}$.} 
    \label{fig:fswap_XY}
\end{figure}

Here we give a detailed description of the example applications showcased in the main text.

\subsubsection*{\textsf{FFFT} }

\begin{figure}
    \centering
    \includegraphics[width = \columnwidth]{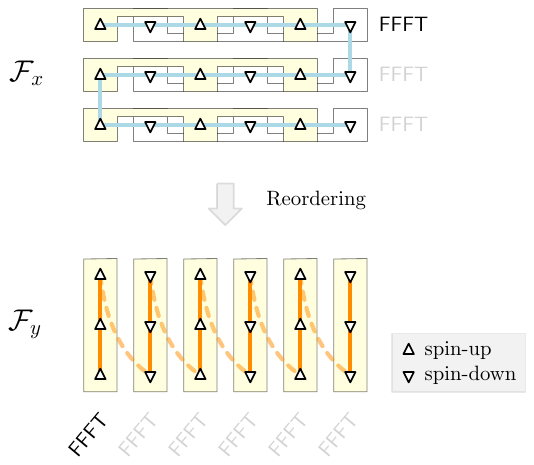}
    \caption{Two-dimensional \textsf{FFFT} for spinful problems.  The 2D \textsf{FFFT} is decomposed into successive one-dimensional transforms, $\mathcal{F}=\mathcal{F}_x\mathcal{F}_y$.
    These act along the $x$ and $y$ dimensions and are efficiently parallelized using $S$ and $Z$ patterns, respectively.
    For spinful systems, we interleave spin-up and spin-down species along the columns. The fermionic fast Fourier transformation is applied separately on each spin species.
}
    \label{fig:FFFTExplanation}
\end{figure}

The currently most efficient known method for implementing a two-dimensional
\textsf{FFFT} under local connectivity constraints is based on
Givens-rotation networks~\cite{kivlichanQuantumSimulationElectronic2018}.
These networks can implement arbitrary orbital basis changes of the form
\[
\hat{U}_{\mathrm{orb}}^\dagger \, a_p^\dagger \, \hat{U}_{\mathrm{orb}}
=
\sum_q U_{q p}\, a_q^\dagger ,
\qquad
\hat{U}_{\mathrm{orb}}^\dagger \, a_p \, \hat{U}_{\mathrm{orb}}
=
\sum_q U_{q p}^*\, a_q,
\]
using a circuit consisting of \(N(N-1)/2\) Givens rotations with depth \(2N\).

In many cases, such Givens-rotation networks can be simplified by applying
separate networks to the spin-up and spin-down orbitals. However, the state
preparation protocols introduced in Ref.~\cite{maskaraFastSimulationFermions2025},
which are particularly promising for combination with our Hubbard simulations,
require spin-up and spin-down orbitals to be adjacent in the canonical ordering.
This requirement arises from the Bogoliubov, or pairing, rotation
\[
U_{pq}(\theta,\phi)
=
\exp\!\left[
\theta
\left(
e^{i\phi} c_p^\dagger c_q^\dagger
-
e^{-i\phi} c_q c_p
\right)
\right],
\]
which acts between modes belonging to different spin sectors.

Therefore, in Fig.~\ref{fig:FFFT}, we compare the resource requirements of implementing the two-dimensional \textsf{FFFT} using a single Givens-rotation network over the full qubit register with those of implementing it via smaller Givens-rotation networks for one-dimensional \textsf{FFFT}s, together with an intermediate reordering step. A more detailed visualization of our apprach is provided in Fig.~\ref{fig:FFFTExplanation}.

\subsubsection*{Spinful lattice models}

The general strategy enabled by the boustrophedon encodings is to map a local fermionic interaction graph to a local qubit interaction graph, and then implement those interactions via specialized FSNs.

We focus on implementing a second-order Trotter step due to its simplicity and relevance for near term experiments \cite{hemeryMeasuringLoschmidtAmplitude2024, alamProgrammableDigitalQuantum2025a}.
Such a Trotter step is defined as
\begin{equation*}
\mathcal{S}_2(t)\coloneqq e^{(t/2)H_1}\cdots e^{(t/2)H_\Gamma}\,e^{(t/2)H_\Gamma}\cdots e^{(t/2)H_1}, 
\end{equation*}

for the Hamiltonian

\begin{equation*}
    H=\sum_{\gamma=1}^\Gamma H_\gamma.
\end{equation*}

For our resource estimations, we assume a standard gate decomposition, where a hopping interaction between neighboring orbitals, as well as a FSWAP gate, can be implemented using $2$ CNOTs. Furthermore, a gate decomposition like the one shown in Fig.~\ref{fig:fswap_XY} allows us to perform an FSWAP plus hopping interaction between neighboring orbitals and requires only $2$ two-qubit Pauli gates, which can also be easily transformed into CNOTs via some Hadamard and $S$ gates.

\begin{figure*}
    \centering
    \includegraphics[width=.8 \linewidth]{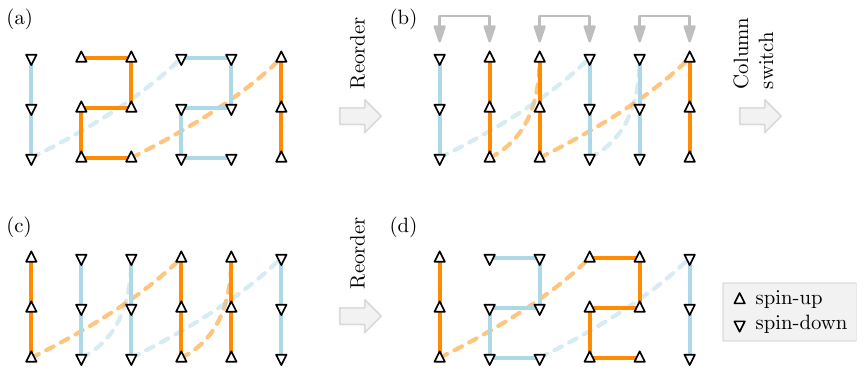}
    \caption{(a) We use two separate JW encodings when dealing with spinful models. Therefore, only modes of the same spin-species can be local in a boustrophedon encoding. (b) During an encoding switch as illustrated in Figure \ref{fig:3}, there is one time step when we have effectively generated a $Z$-pattern. Here we can swap columns via qubit SWAP gates, which adds only a single step consisting of parallel swaps into the standard encoding switching sequence to end up in (c). (d) The result is that we have paired matching pairs of spins in local $S$-patterns.}
    \label{fig:LocalReconfigurationSwaps}
\end{figure*}

Many physically relevant fermionic systems are spinful, i.e., they have two internal species 
$\sigma \in \{\uparrow, \downarrow\}$. In a Jordan–Wigner encoding, each spin-orbital corresponds to a fermionic mode and is mapped to its own qubit, so spin-up and spin-down modes are represented separately. In the standard Hubbard model the kinetic term conserves spin, so particles hop only between sites within the same spin sector ( $\ket{\uparrow}_i \leftrightarrow \ket{\uparrow}_j$ and $\ket{\downarrow}_k \leftrightarrow \ket{\downarrow}_l$). Figure \ref{fig:LocalReconfigurationSwaps} illustrates a simple addition to our encoding switching strategy that allows us to include spinful models in simulation.

\subsubsection*{Nearest-neighbor hopping on a square lattice}

\begin{figure*}
    \centering
    \includegraphics[width = \linewidth]{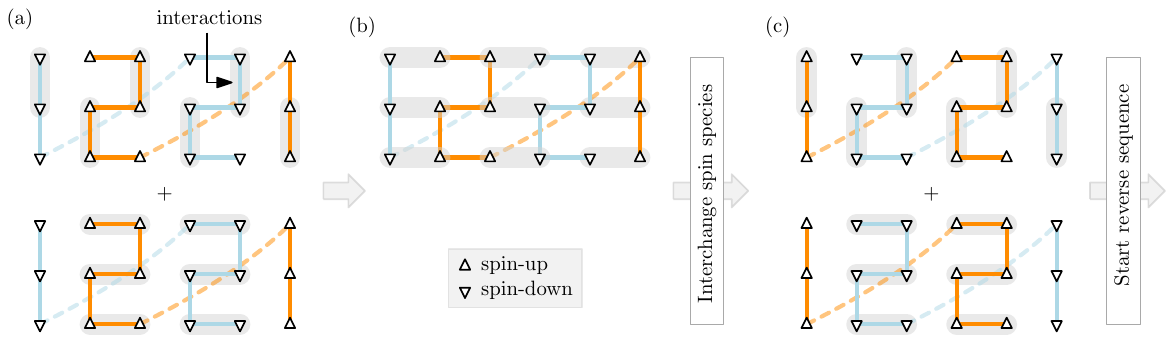}
    \caption{Implementation of a second-order Trotter step corresponding to the NN Hubbard model on a grid lattice. (a) We start with hopping terms that are local in the current boustrophedon encoding.
    (b) Next, we implement the density-density interactions. (c) We switch the encoding, as shown in Figure~\ref{fig:LocalReconfigurationSwaps}, and perform the other half of the hopping terms. Afterwards, we implement the reverse sequence, as required by the second-order Trotter step. Here we can combine the first step of the reverse sequence with the last one that is shown, for a higher efficiency. The same holds for the final hopping terms of one Trotter step and the initial ones of the next step.}
    \label{fig:FermiHubbard}
\end{figure*}

\begin{figure*}
    \centering
    \includegraphics[width = .8\linewidth]{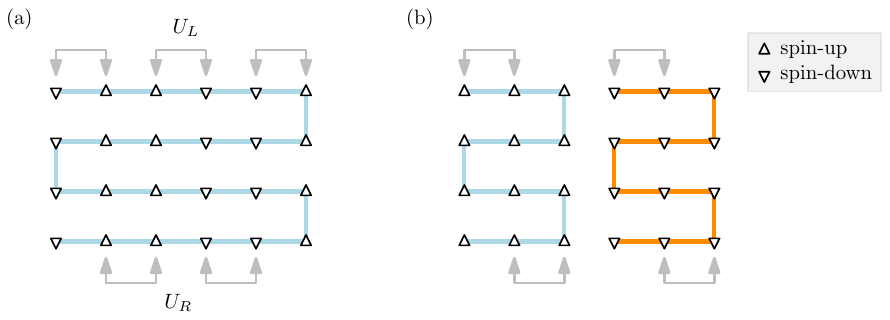}
    \caption{Implementation of the hopping terms corresponding to the NN square Hubbard model via FSNs. (a) The standard strategy \cite{kivlichanQuantumSimulationElectronic2018} uses fermionic swaps between neighboring columns, starting from column 0 ($U_L$) or 1 ($U_R$) respectively. Every $U_LU_R$ operation moves a new column to the right and the left side of the grid, until one reaches their starting positions after $N_C$ applications ($N_C$ is the number of columns). For modes at one side of the grid, the $S$-pattern encoding allows to implement the upwards/downwards hopping term respectively. Repeating this $N_C-1$ times, gives access to all hopping terms needed for the NN square. This strategy only requires a 1D line of NN-connected qubits. 
    (b) However, given a ladder connectivity (or a grid where one can embed this strategy with small overhead \cite{cadeStrategiesSolvingFermiHubbard2020}) one can implement a similar swap network for each spin species separately and implement the density-density terms via an additional connection available on a ladder-connectivity. This effectively cuts $N_C$ in half, approximately halving the cost of a Trotter step. 
    For a fair comparison, when implementing the second-order Trotter step corresponding to Fig.\ref{fig:HubbardComp}, we assume one fuses the last and first hopping term of the shown FSNs with the preceding and following step respectively.}
    \label{fig:FermiHubbardFSN}
\end{figure*}

The Hubbard model on an NN-connected square serves as the simplest model that properly describes the most relevant and essential physics of cuprates \cite{avellaEmeryVsHubbard2013}. Currently, it is the most widely performed simulation of a Hubbard model on digital quantum computers \cite{alamProgrammableDigitalQuantum2025a, alamFermionicDynamicsTrappedion2025, nigmatullinExperimentalDemonstrationBreakeven2025, hemeryMeasuringLoschmidtAmplitude2024}, due to its simplicity.

A textbook example of an NN-connected Hubbard model is the $t-U$ model

\begin{equation*}
H=-t\sum_{\langle i,j\rangle,\sigma}\left(c_{i,\sigma}^{\dagger}c_{j,\sigma}+\mathrm{h.c.}\right)
+U\sum_i n_{i,\uparrow}n_{i,\downarrow}.
\end{equation*}

On a high level, it captures the competition between kinetic delocalization and on-site repulsion. There is particle–hole symmetry at half filling, making $n=\langle n_{i,\uparrow} + n_{i,\downarrow}\rangle$ a particularly symmetric point.

For the repulsive model ($U > 0$), determinant quantum Monte Carlo (QMC) and related methods are sign-problem-free at half filling on bipartite lattices, such as an NN square, but generally develop a fermion sign problem upon doping, making the 2D doped regime challenging to simulate classically.

Fig.~\ref{fig:FermiHubbard} illustrates a second-order Trotter step for implementing an NN lattice model via our dynamic encoding. For comparison, we give a fully FSN-based strategy, representing the current state of the art in Fig.~\ref{fig:FermiHubbardFSN}. We note that for the latter method this NN square model can naturally integrate the vertical hopping terms between parallel column-wise swaps. However, implementing a more complicated interaction graph, like given by the NNN example or the Lieb lattice, would be much less efficient.

To arrive at the gate-counts reported in Table \ref{tab:gate-counts}, consider that, in leading order, the encoding switch shown in Fig.~\ref{fig:LocalReconfigurationSwaps} can be executed with $6.5N$ CNOTs, the density-density interactions with $N$ CNOTs and the hopping terms with $4N$ CNOTs. Then for executing the second-order Trotter step, as shown in Figure~\ref{fig:FermiHubbard}, while absorbing two of the hopping steps into the previous ones, as described in the caption, we arrive at a gate count of $21N$. The corresponding depth is dominated by encoding switches, where each switch has a CNOT depth of $2N_R$, with $N_R$ being the number of rows of the qubit lattice. Interestingly, this is the same depth scaling as one would expect from a single Pauli gadget implementing a vertical hopping term. We would like to explicitly note that we neglect the extra circuit cost incurred by the first Trotter step, where no terms can be absorbed into a previous step, but do the same for the literature methods.

\subsubsection*{Next-nearest-neighbor hopping on a square lattice}

In this setting, NNN hopping corresponds to diagonally adjacent unit cells on the lattice and the standard $t-t'-U$ Hubbard model

\begin{equation*}
\begin{aligned}
H=&-t\sum_{\langle i,j\rangle,\sigma}\left(c^\dagger_{i\sigma}c_{j\sigma}+\mathrm{h.c.}\right)
-t'\sum_{\langle\!\langle i,j\rangle\!\rangle,\sigma}\left(c^\dagger_{i\sigma}c_{j\sigma}+\mathrm{h.c.}\right)\\
&+U\sum_i n_{i\uparrow}n_{i\downarrow}.
\end{aligned}
\end{equation*}

Research has shown that NNN hopping in a square lattice Hubbard model is critical for a superconducting phase \cite{jiangSuperconductivityDopedHubbard2019, simonscollaborationonthemany-electronproblemAbsenceSuperconductivityPure2020, tasakiHubbardModelIntroduction1998}.

Here, particle–hole symmetry is broken for $t'\neq 0$, so electron-doped and hole-doped sides are no longer mirror images. This also means that the sign problem appears even at half filling, making the model significantly harder to simulate classically.

\begin{figure}
    \centering
    \includegraphics[width = \columnwidth]{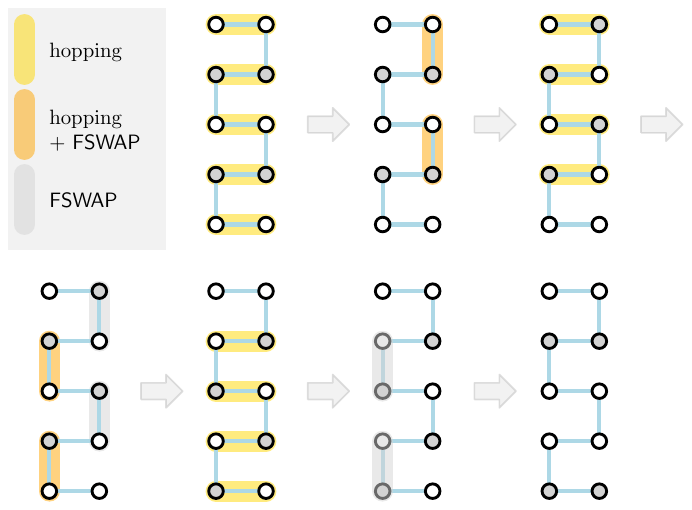}
    \caption{Implementation sequence (from left to right), of the NNN Hubbard Hamiltonian on a square lattice, corresponding to one local snake pattern. Coloring of modes is used to aid in tracking the movement due to fermionic swaps. During this sequence, each mode interacts with $\frac{4}{8}$ of its NNNs. After perform one encoding switch, similar to before, the remaining $\frac{4}{8}$ interactions are implemented.}
    \label{fig:FHNNN}
\end{figure}

The implementation of this model can be performed similarly to the previous example, but this time one needs to perform more hopping interactions in both encodings. We give one possible sequence in Figure~\ref{fig:FHNNN}. In order to cancel some of the FSWAP gates that do not implement a Hamiltonian interaction, it is beneficial to perform the reverse sequence in the first boustrophedon encoding (corresponding to Fig.~\ref{fig:FermiHubbard} a)). Further, we can even define the starting position of nodes to fit this reverse sequence, which allows to fully remove these FSWAP gates.

\subsubsection*{Lieb lattice}

\begin{figure}
    \centering
    \includegraphics[width = \columnwidth]{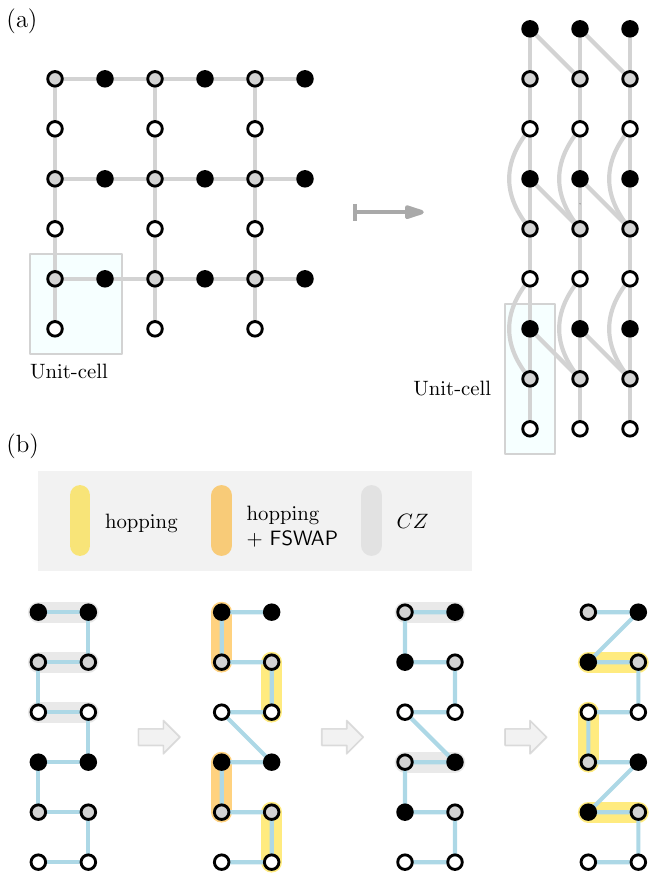}
    \caption{Description of Lieb lattice simulation. a) We tightly embed the Lieb lattice on a square lattice. Note, that each column on the right side corresponds to two columns on a respective qubit grid, due to spin. b) A sequence implementing hopping terms inside a single snake-pattern. We benefit from some $\mathrm{CZ}$ gates that perform logical swaps, as shown in Fig~\ref{fig:1}. Following the strategy of performing the reverse sequence in the first boustrophedon encoding, as for the previous example, we won't ever have to worry about returning the nodes to their original positions after the sequence.}
    \label{fig:FHLieb}
\end{figure}

The Lieb lattice serves as our minimal example of a three-band model, motivated by its role as the natural geometry for the simulation of the Emery model, a canonical model of cuprate superconductors.

However, the standard Emery model also includes additional NNN interactions~\cite{avellaEmeryVsHubbard2013}, corresponding to direct oxygen–oxygen hopping, which we neglect in our minimal simulation.

Fig.~\ref{fig:FHLieb} illustrates the implementation of a Trotter step. Unlike previous cases, we must embed the Lieb lattice into a square grid, which we achieve by deforming the unit cell to enable dense packing. This does increase the range of the fermionic interactions somewhat and requires a $3L \times 2L$ qubit grid, where the $3L$ factor also affects the depth of our encoding switches.

\subsubsection*{Kagome lattice}

The kagome lattice contains triangles (odd cycles), so bipartite-lattice simplifications do not apply even when only considering NN interactions. As the lattice also possesses 3 sites per primitive unit cell, any Hubbard model has three bands.

\begin{figure}
    \centering
    \includegraphics[width = \columnwidth]{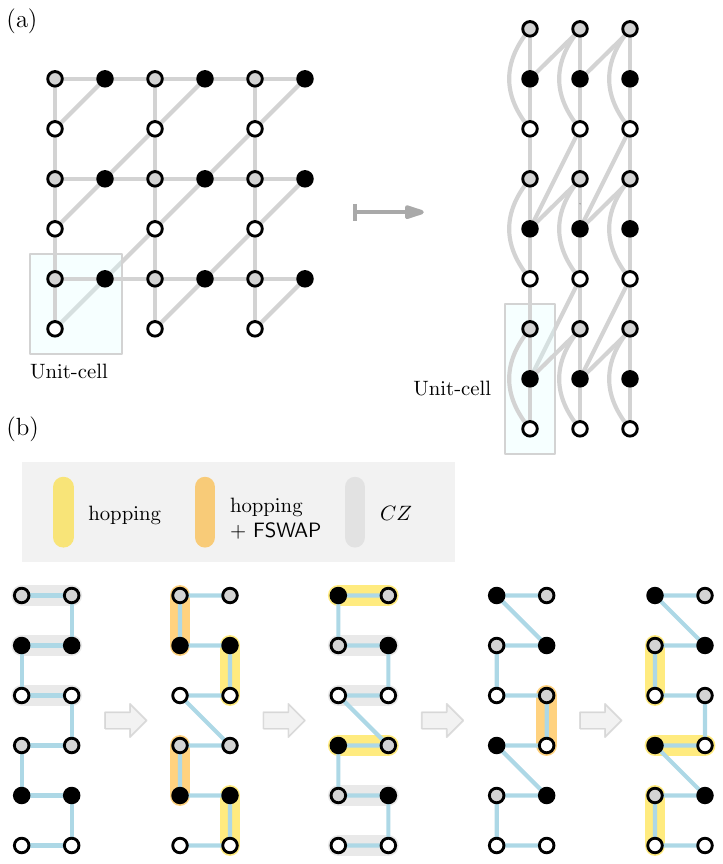}
    \caption{Steps for simulating a kagome lattice. (a) We tightly embed the kagome lattice inside a square lattice. (b) A sequence implementing hopping terms inside one snake pattern in one boustrophedon encoding. Following the strategy of performing the reverse sequence in the first encoding, we will never have to worry about returning the nodes to their original positions during one sequence.}
    \label{fig:FHKagome}
\end{figure}

Similar to before, Fig.~\ref{fig:FHKagome} shows a possible sequence of gates that perform the necessary hopping interactions inside one snake-pattern of one boustrophedon encoding. The deformation of the kagome lattice is identical to the Lieb lattice.

\end{document}